\DeclarePairedDelimiter{\ceil}{\lceil}{\rceil}
\newcommand{\OR}{\textnormal{OR}}
\newcommand{\AND}{\textnormal{AND}}
\newcommand{\THR}{\textnormal{THR}}
\newcommand{\DIST}{\textnormal{DIST}}
\newcommand{\exact}{\textnormal{EXACT}}
\newcommand{\N}{\mathbb{N}}
\newcommand{\R}{\mathbb{R}}
\newcommand{\E}{\mathbb{E}}
\newcommand{\W}{\mathcal{W}}
\newcommand{\bra}[1]{\left\{#1\right\}}
\newcommand{\mathify}[1]{\ifmmode{#1}\else\mbox{$#1$}\fi}
\newcommand{\abs}[1]{\mathify{\left| #1 \right|}}
\newcommand{\sgn}{\mathrm{sgn}}
\newcommand{\pmone}{\bra{-1, 1}}
\newcommand{\zone}{\bra{0, 1}}
\renewcommand{\deg}{\mathrm{deg}}
\newcommand{\adeg}{\widetilde{\mathrm{deg}}}
\newcommand{\phd}{\textnormal{phd}}
\newtheorem{theorem}{Theorem}[section]
\newtheorem{corollary}[theorem]{Corollary}
\newtheorem{remark}[theorem]{Remark}
\newtheorem{lemma}[theorem]{Lemma}
\newtheorem{claim}[theorem]{Claim}
\newtheorem{fact}[theorem]{Fact}
\newtheorem{definition}[theorem]{Definition}
\newcommand\norm[1]{\left\lVert#1\right\rVert}
\title{Improved Approximate Degree Bounds For $k$-distinctness}
\author{
Nikhil S.~Mande\\
Georgetown University\\
\textsf{nikhil.mande@georgetown.edu}
\and
Justin Thaler\\
Georgetown University\\
\textsf{justin.thaler@georgetown.edu}
\and 
Shuchen Zhu\\
Georgetown University\\
\textsf{sz424@georgetown.edu}
}
\date{}
\begin{document}
\maketitle

\begin{abstract}
An open problem that is widely regarded
as one of the most important
in quantum query complexity is to resolve
the quantum query complexity of the $k$-distinctness
function on inputs of size $N$. While the case
of $k=2$ (also called Element Distinctness) is well-understood,
there is a polynomial gap between the known upper and lower bounds
for all constants $k>2$.
Specifically, the best known upper bound is 
$O\left(N^{(3/4)-1/(2^{k+2}-4)}\right)$ (Belovs, FOCS 2012),
while
the best known lower bound for $k\geq 2$ is
 $\tilde{\Omega}\left(N^{2/3} + N^{(3/4)-1/(2k)}\right)$ 
(Aaronson and Shi, J.~ACM 2004; Bun, Kothari, and Thaler, STOC 2018).

For any constant $k \geq 4$, we improve the lower bound to 
$\tilde{\Omega}\left(N^{(3/4)-1/(4k)}\right)$.
This yields, for example, the first proof that $4$-distinctness
is strictly harder than Element Distinctness. Our lower bound applies more generally to approximate degree. 

As a secondary result, we give a simple construction of an approximating polynomial of degree $\tilde{O}(N^{3/4})$ that applies whenever $k \leq \text{polylog}(N)$. 

\end{abstract}

\section{Introduction}

In quantum query complexity, a quantum algorithm is given query access to
the bits of an unknown input $x$, and the goal is to compute some (known) function $f$ of $x$ while minimizing the number
of bits of $x$ that are queried. In contrast
to classical query complexity,
quantum query algorithms are allowed to make queries in superposition, and the algorithm is not charged for performing unitary operations that are independent of $x$. 
Quantum query complexity is a rich model that allows
for the design of highly sophisticated algorithms and captures much of the power of quantum computing. Indeed, most quantum algorithms were discovered in or can easily be described in the query setting.

An open problem that is widely regarded
as one of the most important
in quantum query complexity \cite{zhandrymulticollisions} is to resolve
the complexity of the \emph{$k$-distinctness} function.
For this function, the input $x$ specifies a list
of $N$ numbers from a given range of size $R$,\footnote{For purposes of this introduction, $N$ and $R$ are assumed to be of the same order of magnitude (up to a factor depending on $k$ alone). For simplicity throughout this section, we state
our bounds purely in terms of $N$, leaving
unstated the assumption that $R$ and $N$ are of the same order
of magnitude.} and the function evaluates to TRUE\footnote{Throughout
this manuscript, we associate $-1$ with logical TRUE and $+1$ with logical FALSE.} if 
there is any range item that appears
$k$ or more times in the list. 
The case $k=2$ corresponds to the complement of the widely-studied \emph{Element Distinctness} function, whose complexity is known to be $\Theta(N^{2/3})$ \cite{ambainis, aaronsonshi}. 

For general values of $k$, the best known upper bound
on the quantum query complexity of $k$-distinctness
is $O\left(N^{3/4-1/(2^{k+2}-4)}\right)$, due
to a highly sophisticated
algorithm of Belovs \cite{belovs}. Belovs' algorithm is based 
on the so-called learning graph framework
in quantum algorithm design, and
improves
over an earlier upper bound
of $O(N^{k/(k+1)})$ due to Ambainis \cite{ambainis}
that is based on quantum walks over the Johnson graph.

For a long time, the best known lower bound
on the quantum query complexity
of $k$-distinctness was $\Omega(N^{2/3})$ for any
$k \geq 2$, due to Aaronson and Shi \cite{aaronsonshi}, with refinements given by Kutin \cite{kutin} and Ambainis \cite{ambainissmallrange}. This lower bound is tight for $k=2$ (matching Ambainis' upper bound \cite{ambainis}), but it is not known to be tight for any $k>2$. Recently, Bun, Kothari, and Thaler~\cite{BKT18} proved a lower bound
of $\tilde{\Omega}(N^{3/4-1/(2k)})$ for constant $k$.\footnote{Throughout this manuscript, $\tilde{O}$, $\tilde{\Omega}$ and $\tilde{\Theta}$ notations are used to hide factors
that are polylogarithmic in $N$.}
This improved over the prior lower bound
of $\Omega(N^{2/3})$ for any constant $k \geq 7$.
Furthermore, combined with Belovs' upper bound, this established that for sufficiently large constants $k$,
the exponent in the quantum query complexity of $k$-distinctness
approaches ${3/4}$ from below. However, the precise
rate at which the quantum query complexity
approaches $N^{3/4}$ remains open: there is a polynomial gap between the upper and lower bounds 
for any constant $k$, and indeed there is a qualitative difference between the inverse-exponential
dependence on $k$ in the exponent of 
$N^{3/4-1/(2^{k+2}-4)}$ (the known upper bound), and
the inverse-linear dependence in the known lower bound of $N^{3/4-1/(2k)}$. 

\paragraph{Main Result.}  In this paper, our main result improves the lower bound from $\tilde{\Omega}(N^{3/4-1/(2k)})$
to $\tilde{\Omega}(N^{3/4-1/(4k)})$. While this bound
is qualitatively similar to the lower bound
of \cite{BKT18}, it offers a polynomial improvement
for every constant $k\geq 4$.  Perhaps more significantly, for $k \in \{4, 5, 6\}$,
it is the first improvement over Aaronson and Shi's
$\Omega(N^{2/3})$ lower bound that has stood for
nearly 20 years.

\paragraph{Approximate Degree.} The \emph{$\epsilon$-error approximate degree}
of a Boolean function $f \colon \{-1, 1\}^n
\to \{-1, 1\}$,
denoted $\adeg_\epsilon(f)$, is the least degree of a real polynomial $p$ such that $|p(x) - f(x)| \leq \epsilon$ for all $x \in \{-1, 1\}^n$. The standard setting
of the error parameter is $\epsilon=1/3$,
and the $(1/3)$-approximate degree of $f$ is denoted
$\adeg(f)$ for brevity. 

As famously observed by Beals et al.~\cite{bealsetal},  the quantum query complexity of
a function $f$ is lower bounded by (one half times) the approximate degree of $f$.
Hence, any lower bound on the approximate degree
of $f$ implies that (up to a factor of 2) the 
same lower bound holds for the quantum query complexity of $f$.

As with prior lower bounds for $k$-distinctness \cite{aaronsonshi, kutin, ambainissmallrange, BKT18}, our $k$-distinctness lower bound is in fact
an approximate degree lower bound (on the natural Boolean function induced by $k$-distinctness on $N \lceil \log_2 R \rceil$ bits, where $R$ denotes the size of the range). Our analysis is a substantial refinement of the lower bound analysis of Bun et al.~\cite{BKT18}.

\begin{theorem}[Informal version of Theorem \ref{thm: main} and Corollary \ref{cor: ambrangereduction}] \label{thm:informal}
For any constant $k \geq 2$, the approximate degree and quantum query complexity of the $k$-distinctness function with domain size $N$ and range size $R\geq N$ is $\tilde{\Omega}(N^{3/4 - 1/(4k)})$. 
\end{theorem}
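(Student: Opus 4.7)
The plan is to prove the approximate degree lower bound via the method of dual polynomials, refining the approach of Bun, Kothari, and Thaler~\cite{BKT18}. The proof has two main ingredients: first, establish a lower bound of $\tilde{\Omega}(N^{3/4 - 1/(4k)})$ on the approximate degree of a structured surrogate function $F = \targetfn$, which is simpler to analyze than $k$-distinctness itself; second, reduce $F$ to $k$-distinctness so that the lower bound transfers, using a symmetrization and range-reduction argument in the style of~\cite{aaronsonshi, ambainissmallrange, BKT18}.

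For the surrogate lower bound, the plan is to construct an explicit dual witness $\psi$ on the input of $F$ satisfying $\sum_z |\psi(z)| = 1$, $\langle \psi, F\rangle > 1/3$, and $\sum_z \psi(z) p(z) = 0$ for every polynomial $p$ of degree less than $d = \tilde{\Omega}(N^{3/4 - 1/(4k)})$. Such a witness would be built by the dual block composition method, combining known dual witnesses for the inner $\OR_N$ (with pure high degree $\Theta(\sqrt{N})$), the middle $\AND_k$, and the outer $\OR_{R/k}$ layer. The quantitative improvement over~\cite{BKT18} is expected to come from a sharper handling of how pure high degree and correlation propagate across the three layers: rather than invoking a generic composition lemma that effectively pays a factor of $N^{1/(2k)}$ in the exponent at the middle layer, one would exploit additional structure of the composed object--for instance by using a tailored dual witness for $\AND_k$ that concentrates mass more efficiently on balanced inputs, or by reallocating the degree budget across the three layers more carefully--so as to save a factor of two in the $k$-dependent loss.

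For the reduction step, the plan is to adapt the symmetrization strategy of Aaronson and Shi and its subsequent refinements: given any polynomial approximating $k$-distinctness on $N\lceil \log_2 R\rceil$ bits, symmetrize over permutations of input positions and range values to obtain a polynomial depending only on the vector of multiplicities with which range items appear. When $R \geq N$, these multiplicities can be arranged to encode the block structure of $F$, so that the symmetrized polynomial approximately computes $F$ with only a polylogarithmic blow-up in degree. Combining the two ingredients then yields the desired approximate degree lower bound, which in turn lower bounds the quantum query complexity by the theorem of Beals et al.~\cite{bealsetal}.

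The main obstacle is the dual witness construction. Its three requirements--pure high degree, unit $L_1$ norm, and correlation with $F$--pull in opposite directions, and the middle $\AND_k$ layer, which must mediate between the inner OR's pure high degree and the outer OR's correlation properties, has historically been the bottleneck that limited~\cite{BKT18} to the weaker exponent $3/4 - 1/(2k)$. Executing the sharper parameter trade-off requires identifying which steps of the~\cite{BKT18} analysis are tight, strengthening the non-tight steps through more refined estimates on the composed witness, and verifying that the improved witness remains compatible with the range-reduction step so that the lower bound transfers cleanly to $k$-distinctness on inputs with range size $R \geq N$.
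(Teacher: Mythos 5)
Your proposal has two significant problems: the choice of surrogate function, and (more importantly) the absence of the key idea that actually delivers the improvement from $3/4-1/(2k)$ to $3/4-1/(4k)$.

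On the surrogate: the paper does not lower bound $\OR_{R/k}\circ\AND_k\circ\OR_N$. The correct surrogate for $k$-distinctness is $(\OR_R\circ\THR^k_N)^{\leq N}$, i.e., an OR over range items of the ``appears $\geq k$ times'' indicator, restricted to inputs of Hamming weight at most $N$ (Claim~\ref{clm: dist_connection}). The function $\OR_{R/k}\circ\AND_k\circ\OR_N$ does not encode ``some range item appears $\geq k$ times''; the middle $\AND_k$ has no natural meaning for $k$-distinctness, and the inner $\OR_N$ is at the wrong level (it is the outer OR that ranges over all $R$ range items). Your reduction sketch via Aaronson--Shi symmetrization would also need to justify why a polynomial for $k$-distinctness yields one for this particular composed function; the paper instead uses the Bun--Thaler/Ambainis range-reduction framework which goes directly through $\THR^k_N$.

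On the mechanism for improvement: the decisive new idea in the paper is \emph{not} a sharper generic composition lemma, a ``tailored dual witness for $\AND_k$,'' or a reallocation of the degree budget across layers --- none of these, as stated, would produce a savings. The paper replaces ordinary dual block composition $\theta\star(\phi\star\psi)$ with a \emph{modified} composition $\theta\bullet(\phi\star\psi)$ obtained by multiplying $\theta\star(\phi\star\psi)$ by a low-degree correction polynomial $p_\eta$ (in the spirit of Sherstov's construction in~\cite{SDPT}), where $p_\eta\circ\alpha$ detects and zeros out inputs on which between $1$ and $\eta$ copies of $\phi\star\psi$ feed a \emph{false positive} into the outer OR dual. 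The specific twist relative to Sherstov --- and the reason a factor-of-two improvement in the exponent emerges --- is that $\alpha$ is defined to flag \emph{only} false-positive errors of $\phi\star\psi$, because $\OR$ has $(-1)$-certificates of size $1$ so false negatives are benign. With $\eta\approx\sqrt{R}$, the false-positive budget for the inner dual $\psi$ can be relaxed from $O(1/N)$ (as in~\cite{BKT18}) to $\tilde O(1/\sqrt{N})$, which lets $\psi$ be built with pure high degree $\tilde\Omega(\sqrt{R}N^{-1/(4k)})$ rather than $\tilde\Omega(\sqrt{R}N^{-1/(2k)})$; this is exactly the source of the halving of the $1/k$ loss. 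Multiplying by $p_\eta$ costs roughly $\eta$ in the pure high degree (Claim~\ref{clm: dual_sherstov_tweak}), and the paper has to re-verify $\ell_1$-normalization, correlation, and the strong decay property needed for the Razborov--Sherstov truncation. None of this machinery is present, even implicitly, in your plan, so as written the plan cannot beat the $3/4-1/(2k)$ barrier of~\cite{BKT18}: all three of your suggested improvements fail to target the actual bottleneck, which is that $\phi\star\psi$'s false-positive errors get magnified by the full fan-in $R$ of the outer OR unless something is done to kill them.
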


\begin{remark} Theorem~\ref{thm:informal} provides an approximate degree lower bound for constant error $\epsilon=1/3$. A recent result of Sherstov and Thaler \cite[Theorem 3.4]{vanishingerror}
transforms any constant-error approximate degree lower
bound for $k$-distinctness, into a lower bound for vanishing error $\epsilon=o(1)$. Specifically, combining Theorem \ref{thm:informal} and \cite[Theorem 3.4]{vanishingerror} yields that for constant $k$, the $\epsilon$-error approximate degree of $k$-distinctness is at least $\tilde\Omega\left(N^{3/4-1/(4k)}\log^{1/4+1/(4k)}(1/\epsilon)\right)$, for all $\epsilon\in[(1/3)^N,1/3]$.
\end{remark}

\paragraph{A Secondary Result: The Approximate
Degree for Super-Constant
Values of $k$.}

Recall that for constant $k$, the best known 
approximate degree upper bound for $k$-distinctness, due to Belovs, is
$O\left(N^{3/4-1/(2^{k+2}-4)}\right)$.
For non-constant values of $k$,
the upper bound implied by Belovs' algorithm 
grows exponentially with $k$. That is,
the Big-Oh notation in the 
upper bound hides a leading factor of at least $2^{c k}$ for some positive constant $c$.\footnote{
Belovs' approximate degree upper bound 
was recently reproved by Sherstov \cite{algopoly}, who made the exponential dependence on $k$ explicit (see, e.g., \cite[Theorem 6.6]{algopoly}). To clarify, \label{footnote} Belovs' result is in fact
a quantum query upper bound, which in turn implies an approximate degree upper bound. 
Sherstov's proof  avoids quantum algorithms, and hence
does not yield a quantum query upper bound.} Consequently
Belovs' result is $N^{3/4 + \Omega(1)}$
for any $k \geq \Omega(\log N)$.  Furthermore,
the bound becomes vacuous (i.e., linear in $N$)
for $k \geq c \log N$ for a large enough constant $c>0$. 

Our secondary result improves this state of affairs by giving
a $\tilde{O}(N^{3/4})$
approximate degree upper bound
that holds for any value of $k$
that grows at most polylogarithmically with $N$. 
\begin{theorem}[Informal] \label{thm:informalupper}
For any $k \leq \textnormal{polylog}(N)$,
the approximate degree of $k$-distinctness
is $\tilde{O}(N^{3/4})$. 
\end{theorem}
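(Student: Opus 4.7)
\begin{hproof}
The plan is to construct an explicit polynomial of degree $\tilde{O}(N^{3/4})$ pointwise approximating $k$-distinctness to error at most $1/3$, under the assumption $k \leq \textnormal{polylog}(N)$. Write
\[
k\text{-DIST}(x) \;=\; \OR_{v \in [R]} \bigl[M_v(x) \geq k\bigr],
\]
where $M_v(x) = \sum_i [x_i = v]$ is the multiplicity of value $v$; assume $R = \Theta(N)$ throughout.

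The main tool is a bucketing argument by multiplicity. Fix a threshold parameter $T \geq k$, to be chosen. Call a value $v$ \emph{heavy at $x$} if $M_v(x) > T$, and \emph{light} otherwise. The constraint $\sum_v M_v(x) = N$ implies that at most $N/T$ values are heavy at any fixed input. Since $T \geq k$, a heavy value automatically satisfies $[M_v \geq k]$, so we may decompose $k\text{-DIST}(x) = L(x) \vee H(x)$, where $L(x) = \OR_v [k \leq M_v(x) \leq T]$ is the \emph{light detector} and $H(x) = \OR_v [M_v(x) > T]$ is the \emph{heavy detector}.

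I bound $\adeg(L)$ using Paturi's threshold bound combined with robust composition: each inner indicator restricted to $M_v \in \{0, \ldots, T\}$ has approximate degree $\tilde{O}(\sqrt{kT})$, which lifts to degree $\tilde{O}(\sqrt{kT} \log R)$ in bits of $x$ via the $\log R$-degree bit-representation of $[x_i = v]$. Sherstov's robust composition with the outer $\OR_R$ polynomial (degree $\tilde{O}(\sqrt{R})$) gives $\adeg(L) = \tilde{O}(\sqrt{RkT})$. Choosing $T \approx \sqrt{N/k}$ yields $\tilde{O}(N^{3/4})$ under $k \leq \textnormal{polylog}(N)$. For $H$, I plan to iterate the same decomposition with geometrically doubling thresholds $T_1 = 2T, T_2 = 4T, \ldots$, detecting progressively more extreme multiplicity concentrations. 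After $O(\log N)$ iterations, the effective threshold exceeds $\Theta(\sqrt{N})$, at which point the heavy detector can be approximated directly (for instance, by a sum-of-binomial-coefficients polynomial $\sum_v \binom{M_v(x)}{T_j}$, which is zero if no value has multiplicity $\geq T_j$ and otherwise $\geq 1$).

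The main obstacle is ensuring that the recursive heavy detector stays within degree budget $\tilde{O}(N^{3/4})$ at every level, and that errors in the per-level approximators compose controllably across the $O(\log N)$ levels. In particular, when the threshold exceeds $\sqrt{N}$, extracting a $\{-1,1\}$-valued approximator from a sum-of-binomial-coefficients polynomial requires care, since the sum can take values over a large range; bounding the range using the constraint $\sum_v M_v \leq N$ and applying amplification techniques paralleling those in Bun, Kothari, and Thaler's SURJ upper bound~\cite{BKT18} should close the argument with only polylogarithmic overhead.
\end{hproof}
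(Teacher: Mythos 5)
Your proposal takes a genuinely different route from the paper. The paper follows Sherstov's conjunction-norm method from his Surjectivity upper bound: it writes $\THR^k_N$ \emph{exactly} as a linear combination of conjunctions via $\THR^k_N = 2\sum_{i=0}^{k-1}\exact^i_N - 1$ together with the expansion of $\exact^i_N$ as a sum of $\binom{N}{i}$ conjunctions, composes this with a Chebyshev approximant for $\OR_R$, bounds the resulting conjunction norm $\rho$ by $\left(c_1 k \binom{N}{k}\right)^m$, and then approximates each individual conjunction on the global promise domain $(\pmone^{RN})^{\leq N}$ to error $\exp(-cd^2/N)$ by a degree-$d$ polynomial (Claim~\ref{claim: conjbddHW}). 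You instead bucket range elements by multiplicity and try to build separate approximators for a ``light'' and a ``heavy'' detector, which is closer in spirit to the Bun--Kothari--Thaler argument for Surjectivity than to Sherstov's.

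As written, your light-detector step has a real gap. You approximate the inner indicator on the per-block promise $M_v \in \{0,\dots,T\}$ via Paturi's bound (degree $\tilde{O}(\sqrt{kT})$) and then invoke robust composition with the outer $\OR_R$ approximant. But the per-block promise $M_v \le T$ is \emph{not} implied by the global promise $\sum_v M_v \le N$: the inputs containing a heavy value are exactly those where $M_v > T$ for some $v$, and on those blocks your Paturi-style inner polynomial has no magnitude control. Robust composition requires the inner polynomial to be bounded and close to $\pm 1$ on every input actually fed to the outer polynomial, not only on blocks that happen to satisfy $M_v \le T$; in its absence the composed polynomial can blow up on precisely the heavy inputs you intend to offload to $H$. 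This is the failure mode the conjunction-norm framework is built to avoid: it never approximates the inner threshold at all, only conjunctions, and conjunctions admit low-degree approximations that are automatically bounded on the whole promise domain $|x| \le N$, with total error governed by the $\ell_1$-norm of the conjunction coefficients rather than by a composition theorem for partial inner functions. The heavy-detector recursion is also left at the level of intent; the step you flag yourself --- turning the unbounded sum-of-binomial-coefficients polynomial $\sum_v \binom{M_v(x)}{T_j}$ into a bounded $\pm 1$ approximator and composing it into the overall OR --- runs into the same unboundedness obstruction, and a naive fix (composing with an outer ``sign'' polynomial of degree scaling with $\log \binom{N}{T_j}$) already exceeds your $\tilde{O}(N^{3/4})$ budget when $T_j \approx \sqrt{N}$. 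Without resolving both issues, the sketch does not yet constitute a proof.
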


We mention that for \emph{any} $k \geq 2$,
the approximating polynomials for $k$-distinctness
that follow from prior works \cite{ambainis, belovs, algopoly} are quite complicated, and in our opinion there has not been a genuinely simple construction of any $O(N^{3/4})$-degree
approximating polynomials recorded in the literature, even for the case of $k=2$ (i.e., Element Distinctness).
Accordingly, we feel that Theorem \ref{thm:informalupper} has didactic value even for constant values of $k$ (though the $\tilde{O}(N^{3/4})$ approximate degree upper bound that it achieves is not tight for any constant $k \geq 2$). 

To clarify, Theorem \ref{thm:informalupper}
does \emph{not} yield a quantum query upper
bound, but only an approximate degree upper bound.
Indeed, it remains an interesting open question
whether the quantum query complexity of $k$-distinctness is 
sublinear in $N$ for all $k=\text{polylog}(N)$ (see Section \ref{s:discussion} for further discussion).

Our proof of Theorem \ref{thm:informalupper} is a simple extension of a result of Sherstov \cite[Theorem 1.3]{algopoly} that yielded an $O(N^{3/4})$ approximate degree upper bound for a different function called 
Surjectivity.\footnote{Surjectivity is the function that interprets its input as a list of $N$ numbers from a given range of size $R$, and evaluates to TRUE if and only if every range element appears at least once in the list.}  In Section \ref{s:proofoverviewupper} below, we explain the main observations necessary 
to obtain Theorem \ref{thm:informalupper} via the technique used to prove the upper bound for Surjectivity.

\subsection{Discussion and Open Problems}
\label{s:discussion}
The most obvious and important open question is
to finish resolving the approximate degree
and quantum query complexity of $k$-distinctness
for any $k > 2$. 
Currently, the upper and lower bounds
qualitatively differ in their dependence on $k$,
with the upper bound having an exponent of
the form $3/4-\exp(-O(k))$ and the lower bound
having an exponent of the from $3/4-\Omega(1/k)$.
It seems very likely that major new techniques
will be needed to qualitatively
change the form of \emph{either} the upper or lower bound. In particular, on the lower bounds side, our analysis
is based on a variant of a technique called
 \emph{dual block composition} (see Section \ref{lowerboundoverview}), and
 we suspect that we
have reached the limit of what is provable 
for $k$-distinctness using this technique
and its variants.

We remark here that Liu and Zhandry \cite{zhandrymulticollisions}
recently showed that the quantum query complexity
of a certain \emph{search} version of $k$-distinctness
(defined over randomly generated inputs) is $\Theta(n^{1/2-1/(2^{k}-1)})$.
This inverse-exponential dependence on $k$
is tantalizingly reminsicent of Belovs' upper bound for $k$-distinctness. This may be construed as mild evidence
that $3/4-\exp(-O(k))$ is the right qualitative
bound for $k$-distinctness itself.

A very interesting intermediate goal 
is to establish any polynomial improvement
over the long-standing $\Omega(n^{2/3})$ lower bound
for $3$-distinctness. This would finally establish that $3$-distinctness is strictly harder than Element Distinctness (such a result is now
known for all $k\geq 4$ due to Theorem \ref{thm:informal}).

It would also be interesting 
to resolve the quantum query complexity
of $k$-distinctness for $k = \text{polylog}(N)$.
Although this question my appear to be 
of specialized interest,
we believe that resolving it could
shed light on 
the relationship between approximate degree
and quantum query complexity.
Indeed, while any quantum algorithm for a function $f$
can be turned into an approximating polynomial for $f$ via the transformation of Beals et al.~\cite{bealsetal}, no transformation
in the reverse direction is possible in general \cite{ambainisseparation}. This can be seen, for example, because the quantum query complexity of Surjectivity is known to be $\Omega(N)$~\cite{BM12, She15}, but its approximate degree is $O(N^{3/4})$~\cite{algopoly, BKT18}.  Nonetheless,
approximate degree and quantum query complexity
turn out to coincide for most functions that arise naturally (Surjectivity remains the only function that exhibits a separation, without having been  specifically constructed for that purpose). In our opinion, this phenomenon remains mysterious, and it would be interesting to demystify it.
For example, could one identify special properties of approximating polynomials that would permit a reverse-Beals-et-al.~transformation to turn that polynomial into a quantum query algorithm?\footnote{There are works in this general direction, notably \cite{completelybounded}, which shows that a certain technical refinement of approximate degree, called approximation by completely bounded forms, characterizes quantum query complexity. But to our knowledge these works
have not yielded any novel quantum query upper bounds for any specific function.} 
 Perhaps an $\tilde{O}(N^{3/4})$
upper bound for $(\text{polylog}(N))$-distinctness
could be derived in this manner.
On the other hand, due to our Theorem \ref{thm:informalupper}, any $N^{3/4 + \Omega(1)}$ \emph{lower bound} for $(\text{polylog}(N))$-distinctness would require moving beyond the polynomial method.\footnote{We remark that the positive-weights adversary method
is also incapable of proving such a result due to the certificate complexity barrier.}

\subsection{Paper Roadmap}
We give a high-level overview of the proofs of our lower bound and upper bound in Sections~\ref{lowerboundoverview} and~\ref{s:proofoverviewupper}, respectively.
Section \ref{sec:prelims} covers preliminaries.
The proof of our main theorem (Theorem \ref{thm:informal})
is spread over Sections \ref{sec:detailedoutline}-\ref{sec:endofmainproof}.
Section \ref{sec:detailedoutline}
gives a detailed, technical outline of the proof,
Section \ref{sec:startofmainproof} establishes
some auxiliary lemmas, and Section \ref{sec:endofmainproof}
contains the heart of the proof. 
Finally, Section \ref{sec:upperbound} proves Theorem
\ref{thm:informalupper}. 

\section{Overview of the Proofs}
In this section we give an overview of the proofs of our lower bound and upper bound.

\subsection{The Lower Bound}
\label{lowerboundoverview}

Throughout this subsection we assume that $k \geq 2$ is an arbitrary but fixed constant. 

Let $\THR_N^k$ denote the function on $N$-bit inputs
that evaluates to $-1$ on inputs of Hamming weight at least $k$, and evaluates to $1$ otherwise. 
For $N \leq n$,
let $(\{-1,1\}^n)^{\leq N}$ denote the subset
of $\{-1,1\}^n$ consisting of all
 inputs of Hamming weight at most $N$.
For any function $f_n \colon \{-1,1\}^n \to \{-1,1\}$,\footnote{Throughout, we use subscripts 
where appropriate to
clarify the number of bits over which a function is defined.}
let $f_n^{\leq N}$ denote the partial function
obtained by restricting the domain of $f$ to $(\{-1,1\}^n)^{\leq N}$,
and let $\adeg(f_n^{\leq N})$ denote the least
degree of a real polynomial $p$
such that $|p(x)-f_n(x)| \leq 1/3$ for
all $x \in (\{-1,1\}^n)^{\leq N}$.

Simplifying very slightly, prior work by Bun and Thaler \cite{BT17} 
(building on an important lemma of Ambainis \cite{ambainissmallrange})
implied that for $k\geq 2$ the approximate degree of $k$-distinctness 
is equivalent to $\adeg(f_{R N}^{\leq N})$ for
$f=\OR_R \circ \THR^k_N$. Here, $g_n \circ h_m$ 
denotes the function on $n \cdot m$ bits
obtained by block-composing $g$ and $h$, i.e., 
$g \circ h$ evaluates $h$ on $n$ disjoint inputs and
feeding the outputs of all $n$ copies of $h$ into $g$.

Bun et al.~\cite{BKT18} proved their $\tilde{\Omega}(N^{3/4-1/(2k)})$ lower bound
for $\adeg(f_{R N}^{\leq N})$ via the \emph{method of dual
polynomials}. This is a technique for proving approximate degree lower bounds that works by constructing an explicit solution
to a certain linear program capturing 
the approximate degree of any function.
Specifically, a dual witness to the fact
that $\adeg(f_{RN}^{\leq N}) \geq d$ 
is a function $\psi \colon \{-1,1\}^{RN} \to \R$ satisfying
the following properties. 

First, 
$\psi$ must be uncorrelated with all polynomials $p$
of degree at most $d$, i.e., $\langle \psi, p\rangle = 0$ for all such polynomials $p$, where $\langle \psi, p\rangle = \sum_{x \in\{-1,1\}^{RN} } \psi(x) p(x).$ Such a $\psi$ is said to have \emph{pure high degree} at least $d$.

Second, $\psi$ must be well-correlated with $f$, i.e., $\langle \psi, f\rangle \geq (1/3) \cdot \|\psi\|_1$,
where $\|\psi\|_1 := \sum_{x \in\{-1,1\}^{RN}} |\psi(x)|$.
Finally, $\psi$ must equal 0 on inputs in $\pmone^{RN} \setminus \left(\pmone^{RN}\right)^{\leq N}$.

To simplify greatly, Bun et al.~\cite{BKT18} constructed their dual witness for $\left(\OR_R \circ \THR^k_N\right)^{\leq N}$ roughly as follows.
They took a dual witness $\Psi$ for 
the fact that $\adeg(\OR_R) \geq \Omega(R^{1/2})$
\cite{NS94, Spa08, BT15} 
and a dual witness 
$\phi$ for the fact that $\THR^k_N$ also
has large approximate degree, and they
combined $\Psi$ and $\phi$ in a certain
manner (introduced in prior works \cite{SZ09,She13, Lee09}) to get a dual witness for 
the composed function $\left(\OR_R \circ \THR^k_N\right)^{\leq N}$.
The technique used to combine $\Psi$ and $\phi$
is often called \emph{dual block composition},
and is denoted $\Psi \star \phi$.\footnote{To clarify, this entire
outline is a major simplification
of the actual dual witness construction in
\cite{BKT18}. The details provided in the outline of this introduction are chosen to highlight
the key technical issues that we must address
in this work. Amongst other
simplifications in this outline, the actual
dual witness from \cite{BKT18} is not
$\Psi \star \phi$, but rather a ``post-processed'' version of $\Psi \star \phi$, where the post-processing
step is used to ensure that the dual witness
evaluates to 0 on all inputs
of Hamming weight more than $N$.}
Dual block composition is defined as follows (below, each $x_i \in \{-1,1\}^N$):

\[
(\Psi \star \phi)(x_1, \dots, x_R) = 2^R \cdot 
\Psi(\sgn(\phi(x_1)), \dots, \sgn(\phi(x_R))) \cdot \prod_{i=1}^R |\phi(x_i)|/\|\phi\|_1.
\]

Here, $\sgn(r)$ equals
$-1$ if $r<0$ and equals $+1$ if $r>0$.\footnote{It is irrelevant how one
defines $\sgn(0)$
because if $\phi(x_i)=0$ for any $i$, the product
$\prod_{i=1}^R |\phi(x_i)|/\|\phi\|_1$ forces $\Psi \star \phi$ to 0. For this reason, the remainder of the discussion in this section implicitly assumes that $\phi(x_i) \neq 0$ for all $i \in \{1, \dots, R\}$.}
To show that $\Psi \star \phi$ is a dual witness
for the fact that the approximate degree of 
$\left(\OR_R \circ \THR_N^k\right)^{\leq N}$ is at least $d$,
it is necessary to show that 
 $\Psi \star \phi$ has
 pure high degree at least $d$, and that 
 $\Psi \star \phi$ is well-correlated with 
 $\left(\OR_R \circ \THR^k_N\right)^{\leq N}$.
 It is known that pure high degree
 increases multiplicatively under the $\star$ operation, and hence
 the pure high degree calculation
 for $\Psi \star \phi$ is straightforward. 
 In contrast, the correlation calculation is 
the key technical challenge and bottleneck
in the analysis of \cite{BKT18}.
Our key improvement over their work
is to modify the construction of the dual witness
in a manner that
allows for an improved correlation bound.

At a very high level,
what we do is replace the 
dual block composition $\Psi \star \phi$ from
the construction of \cite{BKT18}
with a \emph{variant} of dual block composition introduced by Sherstov \cite{SDPT}.
Sherstov specifically introduced this variant
to address the correlation issues that
arise when attempting to use dual block composition
to prove approximate degree lower bounds
for composed functions, and he used
it to prove direct sum and direct product
theorems for approximate degree.\footnote{Variants
of dual block composition related to the one introduced in \cite{SDPT} have played important
roles in other recent works on approximate 
degree lower bounds, e.g., \cite{largeerror, vanishingerror}.} However,
we have to modify even Sherstov's variant
of dual block composition in significant ways
to render it useful in our context.
We now attempt to give an informal sense of our modification and why it is necessary. 

For block-composed functions $g \circ h$,
the rough idea of any proof
attempting to show that $\langle \Psi \star \phi, g \circ h\rangle$ is large is to hope that
the following approximate equality holds:
\begin{equation}
    \label{approximateeq}
\langle \Psi \star \phi, g \circ h\rangle \approx
\langle \Psi, g\rangle.\end{equation}
If Equation \eqref{approximateeq} holds even approximately, then the correlation
analysis of $\Psi \star \phi$ is complete, since 
the assumption that $\Psi$ is a
dual witness for the high approximate degree of $g$
implies that the right hand side is large.

Equation \eqref{approximateeq} in fact holds 
with \emph{exact} equality if $\phi$ agrees in sign
with $h$ at all inputs, i.e., if $\langle \phi, h \rangle = \|\phi\|_1$ \cite{She13, Lee09}.
Unfortunately, the fact that $\phi$ is a dual
witness for the large approximate
degree of $h$ implies only a much weaker
lower bound on $\langle \phi, h \rangle$,
namely that
\begin{equation} \label{mylabel} \langle \phi, h \rangle \geq (1/3) \cdot \|\phi\|_1. \end{equation}
In general, Equation \eqref{mylabel}
is not enough to ensure that
Equation \eqref{approximateeq} holds even approximately. 

A rough intuition for why 
Equation \eqref{approximateeq} may fail to hold is the following. The definition of $\Psi \star \phi$
feeds $(\sgn(\phi(x_1)), \dots, \sgn(\phi(x_R)))$ 
into $\Psi$. One can think of 
$\sgn(\phi(x_i))$ as $\phi$'s ``prediction''
about $h(x_i)$, and the fact that 
$\langle \phi, h \rangle \geq (1/3) \cdot \|\phi\|_1$
means that for an $x_i$ chosen at random
from the probability distribution $|\phi|/\|\phi\|_1$,
this prediction is correct with probability at least $2/3$. Unfortunately, there are values of $x_i$
for which $\sgn(\phi(x_i)) \neq h(x_i)$, meaning that $\phi$'s  predictions can sometimes be wrong. In this case, in feeding $\sgn(\phi(x_i))$ into $\Psi$, dual block composition is ``feeding
an error'' into $\Psi$, and this can cause
$\Psi \star \phi$ to ``make more errors'' (i.e, output a value on an input that disagrees in sign with $g \circ h$ on that same input) than $\Psi$
itself. 

That is, there are two reasons $\Psi \star \phi$ may make an error: either $\Psi$ itself
may make an error (let us call this Source 1 for errors), and/or one or more copies of $\phi$ 
may make an error (let us call this Source 2 for errors).\footnote{There may be inputs $x=(x_1, \dots, x_n)$ to $\Psi \star \phi$ that could be classified as \emph{both} Source 1 and Source 2 errors. 
For purposes of this high-level introduction,
it is not important whether such inputs get classified as Source 1 or Source 2 errors for $\Psi \star \phi$.}

The first source of error
is already fully accounted for in the right hand side of Equation \eqref{approximateeq}. The second source of error is not, and this is the reason that Equation \eqref{approximateeq} may fail to hold even approximately.

Roughly speaking, while Equation \eqref{mylabel} guarantees
that $\sgn(\phi(x_i))$ is not ``an error''
for each $i$ with good probability (i.e., probability at least $2/3$), that still means that with very high probability, $\sgn(\phi(x_i))$ will be in error (i.e., not equal to $h(x_i)$) for \emph{a constant fraction} of blocks $i \in \{1, \dots, R\}$. Any one of these errors could
be enough to cause a Source 2 error. 

Fortunately for us, $g=\OR_R$ has low \emph{$(-1)$-certificate complexity}, meaning that on inputs $x$ in $\OR_R^{-1}(-1)$, to certify that indeed $x \in \OR_R^{-1}(-1)$, it is sufficient to identify
just one coordinate of $x$ that 
equals $-1$. 
This renders certain kinds of sign-errors made
by $\phi$ benign. 
Specifically, letting $S = \{x \colon \phi(x) < 0\}$ and $E^{-} = S \cap f^{-1}(1)$ denote
the false-negative errors made by $\phi$, the low $(-1)$-certificate complexity of $\OR_R$ means 
that it is okay if ``a constant
fraction of the negative values output by $\phi$
are in error''. That is, so long as
\begin{equation} \label{okeq} \left(\sum_{E^{-}} |\phi(x)|\right)/\left(\sum_{x \in S} |\phi(x)|\right) = 1-\Omega(1),\end{equation} 
the contribution of ``false negative errors made by $\phi$'' to actual Source 2 errors made by
$\Psi \star \phi$ is low.

However, the situation is starkly different
for ``false positive errors'' made by $\phi$; while $\OR_R$ has certificates of size 1
for inputs in $\OR_R^{-1}(-1)$, the certificate
complexity of the (unique) input in $\OR_R^{-1}(+1)$ is $n$. That is, letting $T = \{x \colon \phi(x) > 0\}$ and $E^{+} = T \cap f^{-1}(-1)$, for Equation \eqref{approximateeq} to hold even approximately for $g=\OR_R$,
it is essential that 
\begin{equation} \label{secondokeq} \left(\sum_{E^{+}} |\phi(x)|\right)/\left(\sum_{x \in T} |\phi(x)|\right) \ll 1/R.\end{equation} 

Accordingly, Bun et al.~\cite{BKT18} 
 obtain their lower bound for $k$-distinctness
by using a dual witness $\phi$ for
$h=\THR^k_N$
that satisfies Equation \eqref{secondokeq}. 
Using a dual with such few false positive errors causes \cite{BKT18} to lose an additive $1/(2k)$
term in the exponent of $N$ in their final degree bound,
relative to what they would obtain if Equation \eqref{mylabel} were sufficient to ensure that
Equation \eqref{approximateeq} approximately held.

As previously mentioned, Sherstov \cite{SDPT} introduced a variant of dual block composition 
intended to handle Source 2 errors 
that might have otherwise rendered Equation \eqref{approximateeq} false. 
Specifically, Sherstov proposed
multiplying $(\Psi \star \phi)(x)$ by a low-degree
polynomial $p_\eta(x)$ intended to ``kill''
any inputs $x$ that may contribute Source 2 errors (here, $\eta$ is a parameter, and
we will explain shortly
how the value of $\eta$ is ultimately chosen).
Specifically, $p_\eta$ ``counts'' 
the number of blocks $x_i$ of $x$
such that $\sgn(\phi(x_i)) \neq h(x_i)$, 
and $p_\eta$ is defined (through polynomial interpolation) to evaluate to 0 if this number is any integer between $1$ and $\eta$. 
This has the effect 
of eliminating all Source 2 errors 
made by $\Psi \star \phi$
on inputs $x$ for which
at most $\eta$ copies of $\phi$
make an error. That is, $p_\eta$ kills
all inputs $x$
in the set

\[
U_\eta :=\{ x = (x_1, \dots, x_R) \colon \sgn(\phi(x_i)) \neq h(x_i) \text{ for between } 1 \text{ and } \eta \text{ values of  } i\}.
\]
Note that multiplying $\Psi \star \phi$ by $p_\eta$
has the additional, unfortunate effect
of distorting the values that $\Psi \star \phi$ takes on other
inputs; bounding the effect of this distortion
is one challenge that Sherstov's
analysis (as well as our own analysis in this work) has to address. 

The intuition is that, so long as most Source 2 errors
made by $\Psi \star \phi$ are caused
by inputs in the set $U_\eta$, then 
multiplying $\Psi \star \phi$ by $p_\eta$ should eliminate the otherwise
devastating effects of most Source 2 errors.
So the remaining challenge
is to choose
a dual witness $\phi$ for $h$ guaranteeing that indeed most Source 2 errors are caused by inputs in $U_\eta$. More precisely, $\phi$
must be chosen to ensure that, with respect to the product distribution $\prod_{i=1}^R |\phi(x_i)|/\|\phi\|_1$,
it is very unlikely that more than 
$\eta$ copies of $\phi$ make an error on their input $x_i$.

To this end, it is implicit in Sherstov's
analysis that Equation \eqref{approximateeq} approximately holds with $\left(\Psi \star \phi\right) \cdot p_\eta$ in place of $\Psi \star \phi$ so long as 
\begin{equation} \label{secondokeq2} \left(\sum_{x \in E^{-} \cup E^{+}} |\phi(x)|\right)/\|\phi\|_1 \ll \eta/R.\end{equation} 
Notice that this is exactly Equation \eqref{secondokeq},
except that the right hand side has crucially increased by a factor of $\eta$ (also, Equation \eqref{secondokeq2} counts both false-positive and false-negative errors, as opposed to just false-positive errors, which is a key discrepancy that we address below).
The bigger that $\eta$ is set,
the less stringent is the requirement of Equation \eqref{secondokeq2}. However, it turns out that, in order to ensure that $\left(\Psi \star \phi\right) \cdot p_\eta$ 
has pure high degree close to that of $\Psi \star \phi$ itself,
$\eta$ must be set to a value that is noticeably smaller than the pure high degree
of $\Psi$. Ultimately, 
to obtain the strongest
possible results, $\eta$ gets
set to some constant 
$C<1$ times the pure high degree of $\Psi$.

In order to bring Sherstov's
ideas to bear on $k$-distinctness, we have to modify
his construction as follows.
The key issue (alluded to above) is that Sherstov's construction is not targeted at functions
$g \circ h$ where $g$ has low $(-1)$-certificate complexity,
and it is essential that we exploit
this low certificate complexity in the correlation analysis to improve
on the $k$-distinctness lower bound from \cite{BKT18}.
Essentially, we modify Sherstov's definition of $p_\eta$
to ``ignore'' all false negative errors (which as explained above are benign in our setting because $g=\OR_R$ has low $(-1)$-certificate complexity).
Rather we have $p_\eta$ only ``count'' the false positive errors
and kill any
inputs where this number is between $1$ and $\eta$. 

We are able to show that with this modification, it is sufficient to choose a dual witness $\phi$
for $\THR^k_N$
satisfying 

\begin{equation} \label{secondokeq3} \left(\sum_{E^{+}} |\phi(x)|\right)/\left(\sum_{x \in T} |\phi(x)|\right) \ll \eta/R.
\end{equation} 
We end up setting $\eta \approx O(\sqrt{R})$ for our lower bound, hence the denominator on the right hand side of this inequality represents a quadratic improvement compared to that on the right hand side of Equation \eqref{secondokeq}. This improvement ultimately
enables us to improve the lower bound
from $\tilde{\Omega}(N^{3/4-1/(2k)})$ to $\tilde{\Omega}(N^{3/4-1/(4k)})$.

The actual calculations required to establish 
the sufficiency of Equation \eqref{secondokeq3}
are quite involved, and
we provide a more detailed proof overview in
Section \ref{sec:detailedoutline} to help the reader make sense of them.

\subsection{The Upper Bound}
\label{s:proofoverviewupper}

Recall from Section \ref{lowerboundoverview}
that the approximate degree of $k$-distinctness 
is (essentially) equivalent to $\adeg(f_{R N}^{\leq N})$ for
$f=\OR_R \circ \THR^k_N$.
Similarly, the approximate degree of the Surjectivity
function is (essentially) equivalent
to $\adeg(f_{R N}^{\leq N})$ for
$f=\AND_R \circ \OR_N$.
Sherstov proved an upper bound of $O(R^{1/4} \cdot N^{1/2})$ for this latter quantity.

Up to polylogarithmic factors, in Theorem \ref{thm:informalupper} we achieve an identical upper bound for $k$-distinctness,
for any $k \leq \text{polylog}(N)$.
To do so, we make the following
easy observations.
First, in order to apply Sherstov's
construction to a function $f=g \circ h$,
it is enough that $g$ have approximate 
degree $O(\sqrt{R})$,\footnote{More precisely,
it should be possible to approximate $g$
by a linear combination of monotone conjunctions,
where the $\ell_1$-norm of the coefficients
of the linear combination is $2^{\tilde{O}(\sqrt{R})}$.
It is not hard to show, by Parseval's identity, that this is guaranteed if $g$ has approximate degree $\tilde{O}(\sqrt{R})$.}
and that $h$ be exactly computed
as a linear combination of conjunctions,
where the coefficients in the linear combination
have $\ell_1$-norm at most quasipolynomially large
in $N$.
Second, we observe that for $k \leq \text{polylog}(N)$, $\THR^k_N$ is exactly
computed by such a linear combination
of conjunctions. Together,
these observations are enough to apply Sherstov's construction for Surjectivity to obtain the
approximate degree upper bound of Theorem \ref{thm:informalupper} for $k$-distinctness.

\section{Preliminaries}
\label{sec:prelims}

\paragraph{Notation.} Let $N, n$ and $m$ be positive integers, $N\leq n$. For $z\in\pmone^n$, let $|z|$ represent the \emph{Hamming weight} of $z$, i.e.,~the number of $-1$'s in $z$. 
Define $(\pmone^{n})^{\leq N} := \bra{x \in \pmone^n : |x| \leq N}$.
For any function $f : \pmone^n \to \R$, denote by $f^{\leq N}$ the partial function that is defined on $(\pmone^n)^{\leq N}$ and agrees with $f$ on all such inputs.
Define $\sgn:\R \to \pmone$ by $\sgn(x) = 1$ for all non-negative $x$, and $-1$ otherwise. All logarithms in this paper are base 2 unless otherwise specified. Let $1^n$ (respectively, $(-1)^n$) denote the $n$-bit string $(1, 1, \dots, 1)$ (respectively, $(-1, -1, \dots, -1)$). For strings $a \in \pmone^m$ and $b \in \pmone^n$, we denote by $a, b$ the $(m+n)$-bit string formed by the concatenation of $a$ and $b$.
We use the notation $[n]$ to denote the set $\bra{1, 2, \dots, n}$.

For any function $f : \pmone^n \to \R$, define $\|f\|_1 = \sum_{x \in \pmone^n}|f(x)|$. For an event $E$, the corresponding indicator function is 
\begin{equation}\label{eqn: indicator}
    I[E] = 
    \begin{cases*}
      1 & if $E$ holds, \\
      0    & otherwise.
    \end{cases*}
\end{equation}

For any function $\psi:\pmone^m\to\R$ such that $\norm{\psi}_1=1$, let $\mu_\psi$ be the distribution on $\pmone^{m}$, defined by 
\begin{equation}\label{eqn: mudef}
\mu_\psi(x) = |\psi(x)|.
\end{equation}

\begin{definition}\label{def: mu_z}
For any integer $n > 0$, any function $\psi:\pmone^m\to\R$ such that $\norm{\psi}_1=1$, and any $w \in \pmone$, let $\mu_{w}$ be the probability distribution $\mu_{\psi}$ conditioned on the event that $\sgn(\psi(x)) = w$. For any $z \in \pmone^n$, let $\mu_z$ denote the probability distribution $(\mu_\psi)^{\otimes n}$ conditioned on the event that $\sgn(\psi(x_i))=z_i$ for all $i\in[n]$.
\end{definition}

We omit the dependence of $\mu_z$ on $\psi$ since $\psi$ will typically be clear from context.  Note that $\mu_z$ as defined above is a product distribution given by
\begin{equation}\label{eqn: mu_z_prod_dist}
    \mu_{z}(x_1,\dots,x_n) = \prod_{i=1}^n \mu_{z_i}(x_i).
\end{equation}

\begin{definition}\label{def: prod_dist_Pi}
 For $\eta_i\in[0,1]$, let
$\Pi(\eta_1,\dots,\eta_n)$ be the product distribution on $\pmone^n$ where the $i$th bit of the string equals $-1$ with probability $\eta_i$, and $1$ with probability $1 - \eta_i$. \end{definition}

\begin{lemma}\label{lem: expectation_multilinear}
Let $n$ be any positive integer, $p: \pmone^{n}\to \R$ be a multilinear polynomial, and $\eta_1, \dots, \eta_n\in[0,1]$. For $x=(x_1,\dots,x_n)$ drawn from the product distribution $\Pi(\eta_1,\dots,\eta_n)$ defined in Definition~\ref{def: prod_dist_Pi}, we have
\begin{equation}
    \E_{\Pi(\eta_1,\dots,\eta_n)}[p(x_1,\dots,x_n)] = p(1-2\eta_1,\dots,1-2\eta_n).
\end{equation}
\end{lemma}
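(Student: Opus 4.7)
The plan is to exploit multilinearity and independence in the obvious way. First I would expand $p$ in the multilinear monomial basis, writing $p(x_1,\dots,x_n) = \sum_{S \subseteq [n]} c_S \prod_{i \in S} x_i$ for some real coefficients $c_S$. By linearity of expectation, it suffices to prove the identity separately for each monomial $\prod_{i \in S} x_i$, since both sides of the claimed equation are linear in $p$.

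Next I would use independence: under $\Pi(\eta_1,\dots,\eta_n)$ the coordinates $x_1,\dots,x_n$ are independent, so
\[
\E_{\Pi(\eta_1,\dots,\eta_n)}\!\left[\prod_{i \in S} x_i\right] \;=\; \prod_{i \in S} \E[x_i].
\]
A one-line computation gives $\E[x_i] = (-1)\cdot \eta_i + (+1)\cdot(1-\eta_i) = 1 - 2\eta_i$, hence the expectation equals $\prod_{i \in S}(1-2\eta_i)$. But this is exactly the monomial $\prod_{i \in S} x_i$ evaluated at the point $(1-2\eta_1,\dots,1-2\eta_n)$, so the identity holds for every monomial. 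Summing over $S$ with coefficients $c_S$ gives the claim for $p$.

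There is no real obstacle here; the only thing worth noting is that multilinearity is essential, since it is what ensures that the monomial basis expansion is unique and that no square terms $x_i^2$ (which differ from $(1-2\eta_i)^2$ in expectation over a $\pmone$-valued coordinate) arise. If one allowed non-multilinear representations the substitution $x_i \mapsto 1-2\eta_i$ would not commute with taking expectations.
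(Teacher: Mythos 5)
Your proof is correct and is the standard argument (expand in the multilinear monomial basis, use independence and linearity of expectation, and note $\E[x_i]=1-2\eta_i$). The paper states this lemma without proof, treating it as a well-known fact, so there is nothing to compare against; your write-up supplies exactly the routine verification the authors left implicit, and your closing remark about why multilinearity is essential is a sensible sanity check.
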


Any function $f : \pmone^n \to \R$ has a unique multilinear representation $f = \sum_{S \subseteq [n]} \hat{f}(S) \chi_S$, where for any $S \subseteq [n]$, the function $\chi_S : \pmone^n \to \pmone$ is defined by $\chi_S(x) = \prod_{i \in S}x_i$. Hence, $\|\hat{f}\|_1 = \sum_{S \subseteq [n]} |\hat{f}(S)|$. It follows that for any function $\phi:\pmone^n\to\R$, there exists a unique multilinear polynomial $\tilde\phi:\R^n\to\R$ such that $\tilde \phi(x) = \phi(x)$ for all $x \in \pmone^n$. 

\subsection{Functions of Interest}

Define the function $\OR_N:\pmone^{N}\to\pmone$ to equal $1$ if $x = 1^N$, and $-1$ otherwise.
Define the \emph{Threshold} function $\THR_N^k:\pmone^{N}\to\pmone$ to equal $1$ for inputs of Hamming weight less than $k$, and $-1$ otherwise.

\begin{definition}[$k$-distinctness]
For integers $k, N, R$ with $k\leq N$, define the function $\DIST^k_{N,R}: [R]^N\to\pmone$ by $\DIST^k_{N,R}(s_1,\dots,s_N)=-1$ iff there exists an $r\in[R]$ and distinct indices $i_1,\dots,i_k$ such that $s_{i_1}=\dots=s_{i_k}=r$. When necessary, the domain of the function can be viewed as $\pmone^{N\log R}$.
\end{definition}

Given any functions $f_n : \bra{-1, 1}^n \rightarrow \bra{-1, 1}$ and $g_m : \bra{-1, 1}^m \rightarrow \bra{-1, 1}$, we define the function $f_n \circ g_m : \bra{-1, 1}^{mn} \rightarrow \bra{-1, 1}$ as $f_n \circ g_m (x_{11}, \dots, x_{1m}, x_{21}, \dots, x_{2m}, \dots, x_{n1}, \dots, x_{nm}) = f_n(g_m(x_1), g_m(x_2), \dots, g_m(x_n)), x_i\in\pmone^m$ for all $i\in[n]$. We drop subscripts when the arities of the constituent functions are clear.

\subsection{Notions of Approximation}
\begin{definition}[Approximate degree]
For any function $f : \pmone^n \to \R$, any integer $N\leq n$, and any $\epsilon \in [0, 1]$, define the $\epsilon$-approximate degree of $f^{\leq N}$ to be
\[
\adeg_{\epsilon}(f^{\leq N}) 
 = \min_{\substack{   p : |{p(x) - f(x)}| \leq \epsilon \\\forall x \in \pmone^n ,|x|\leq N}} \deg(p).
\]
When the subscript is dropped, $\epsilon$ is assumed to equal $1/3$. When the superscript is dropped in $f^{\leq N}$, then $N$ is assumed to equal $n$.\footnote{Note that this definition places no constraints on an approximating polynomial on inputs outside the promise domain. In other contexts, an approximating polynomial may be required to be bounded outside the promise domain. }
\end{definition}

\begin{definition}
For any finite subset $X \subseteq \R^n$, any function $f : X \to \R$, and any integer $d \geq 0$, define
\[
E(f, d) := \min_{p : \deg(p) \leq d}\left\{\max_{x \in X}|f(x) - p(x)|\right\}.
\]
\end{definition}

\begin{definition}[Correlation]\label{def: correlation}
Consider any function $f : \pmone^n \to \R$ and $\psi: \pmone^n\to\R$. Define the \emph{correlation} between $f$ and $\psi$ to be 
\[
\langle f,\psi \rangle = \sum_{x\in\pmone^n}f(x)\psi(x).
\]
\end{definition}

\begin{definition}[Pure high degree]
For $\phi:\pmone^n\to\R$, we say that the \emph{pure high degree} of $\phi$, which we denote by $\phd(\phi)$, is $d$ if $d \geq 0$ is the largest 
integer for which $\langle \phi, p \rangle =0$ for any polynomial $p:\pmone^n\to\R$ of degree strictly less than $d$. 
\end{definition}

For any Boolean function $f:\pmone^m\to\pmone$ and function $\psi: \pmone^m \to \R, \norm{\psi}_1=1$, let
\begin{equation}\label{eqn: epsilon}
   \epsilon_{f, \psi}^+ := \Pr_{\mu_\psi}[f(x)\psi(x)<0|\psi(x)>0], \qquad \epsilon_{f, \psi}^- := \Pr_{\mu_\psi}[f(x)\psi(x)<0|\psi(x)<0].
\end{equation}

Define $\epsilon_{f, \psi}=\epsilon_{f, \psi}^++\epsilon_{f, \psi}^-$.

\begin{definition}\label{def: delta}
For any functions $f:\pmone^n\to\pmone$ and $\psi:\pmone^n\to\R$, let  \begin{align*}
E^+(f,\psi)& := \{x\in\pmone^n: f(x)\psi(x)<0, \psi(x)>0 \},\\
E^-(f,\psi) & := \{x\in \pmone^n: f(x)\psi(x)<0, \psi(x)<0 \}.
\end{align*}
We define the \emph{false positive error} between $f$ and $\psi$ to be
\[
\delta^+_{f,\psi}:=\sum_{x\in E^+(f,\psi)}|\psi(x)|
\]
and \emph{false negative error} to be \[
\delta^-_{f,\psi}:=\sum_{x\in E^-(f,\psi)}|\psi(x)|.
\]
\end{definition}

We observe the following simple connection between $\delta^+_{f, \psi}~(\delta^-_{f, \psi})$ and $\epsilon^+_{f, \psi}~(\epsilon^-_{f, \psi})$.
\begin{claim}\label{clm: epsilondeltarelation}
For any Boolean function $f:\pmone^m\to\pmone$ and any function $\psi:\pmone^m\to\R$ with $\norm{\psi}_1 = 1, \phd(\psi)\geq 1$, 
\begin{equation}
    \epsilon^+_{f, \psi} = 2\delta^+_{f, \psi},\quad \epsilon^-_{f, \psi} = 2\delta^-_{f,\psi}.
\end{equation}
\end{claim}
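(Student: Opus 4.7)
The plan is to unpack the definitions and then use a single consequence of the pure high degree hypothesis. First I would rewrite $\epsilon^+_{f,\psi}$ via the definition of conditional probability under $\mu_\psi$. Since $\mu_\psi(x)=|\psi(x)|$ and $\|\psi\|_1=1$, the joint event in the numerator is exactly $E^+(f,\psi)$, so
\[
\epsilon^+_{f,\psi}
=\frac{\Pr_{\mu_\psi}[f(x)\psi(x)<0,\ \psi(x)>0]}{\Pr_{\mu_\psi}[\psi(x)>0]}
=\frac{\sum_{x\in E^+(f,\psi)}|\psi(x)|}{\sum_{x:\,\psi(x)>0}|\psi(x)|}
=\frac{\delta^+_{f,\psi}}{\sum_{x:\,\psi(x)>0}|\psi(x)|}.
\]
An identical computation yields $\epsilon^-_{f,\psi}=\delta^-_{f,\psi}/\sum_{x:\,\psi(x)<0}|\psi(x)|$. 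So the entire claim reduces to showing that the positive and negative parts of $\psi$ each have $\ell_1$-mass exactly $1/2$.

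The key step is to use the assumption $\phd(\psi)\geq 1$, which says that $\langle \psi,p\rangle=0$ for every polynomial $p$ of degree strictly less than $1$. Applying this with the constant polynomial $p\equiv 1$ gives $\sum_{x\in\pmone^m}\psi(x)=0$. Splitting this sum according to the sign of $\psi$,
\[
\sum_{x:\,\psi(x)>0}|\psi(x)| \;=\; -\sum_{x:\,\psi(x)<0}\psi(x) \;=\; \sum_{x:\,\psi(x)<0}|\psi(x)|.
\]
Combining this with $\|\psi\|_1=\sum_{x:\,\psi(x)>0}|\psi(x)|+\sum_{x:\,\psi(x)<0}|\psi(x)|=1$ immediately gives that both partial sums equal $1/2$. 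Substituting back into the two displayed expressions yields $\epsilon^+_{f,\psi}=2\delta^+_{f,\psi}$ and $\epsilon^-_{f,\psi}=2\delta^-_{f,\psi}$, as desired.

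The argument is essentially a one-line bookkeeping after the pure-high-degree observation, so there is no real obstacle; the only subtlety worth flagging is that inputs with $\psi(x)=0$ contribute nothing to either numerator or denominator, so the convention on $\sgn(0)$ is irrelevant here.
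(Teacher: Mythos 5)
Your proof is correct and follows essentially the same route as the paper: both reduce the claim to the observation that $\Pr_{\mu_\psi}[\psi(x)>0]=1/2$, which follows from $\langle\psi,1\rangle=0$ (a consequence of $\phd(\psi)\geq1$) together with $\|\psi\|_1=1$. You simply compute from $\epsilon^+$ toward $\delta^+$ and spell out the mass-splitting step in slightly more detail; the paper runs the chain of equalities in the opposite direction, but the argument is the same.
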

\begin{proof}
\begin{align*}
    \delta^+_{f, \psi} &= \sum_{x\in E^+(f,\psi)}|\psi(x)| \tag*{by Definition~\ref{def: delta}}\\
    &= \Pr_{x\sim{\mu_\psi}}[x\in E^+(f,\psi)]\tag*{by Equation~\eqref{eqn: mudef}}\\
    &=\Pr_{x\sim{\mu_\psi}}[f(x)\psi(x)<0 \land \psi(x)>0] \tag*{by Definition~\ref{def: delta}}\\
    &= \Pr_{x \sim \mu_\psi}[\psi(x)>0]\cdot\Pr_{x \sim \mu_\psi}[f(x)\psi(x)<0|\psi(x)>0]\\
    &= \frac{\epsilon^+_{f, \psi}}{2} \tag*{since $\langle \psi, 1 \rangle = 0$ and $\sum_x|\psi(x)|=1$ implies $\Pr_{\mu_\psi}[\psi(x)>0]=1/2$}.
\end{align*}
The equality $\epsilon^-_{f, \psi} = 2\delta_{f,\psi}^-$ can be proved similarly.
\end{proof}

By linear programming duality, we have the following standard equivalence between lower bounds on approximate degree and existence of ``dual polynomials". See, for example,~\cite{BKT18Arxiv}.
\begin{lemma}\label{lem: dualitypromise}
Let $f : \pmone^n \to \pmone$ be any function.  For any integer $0 \leq j \leq n$, we have 
 $\adeg_{\epsilon}(f^{\leq j}) \geq d$ if and only if there exists a ``dual polynomial'' $\phi : \pmone^n \to \R$ satisfying the following properties.
\begin{itemize}
    \item $\sum_{x\in\pmone^n}|\phi(x)|=1$.
    \item $\phd(\phi)>d$.
    \item $\langle f,\phi \rangle>\epsilon$.  
    \item $\phi(x) = 0$ for all $|x|>j$.
\end{itemize}
We say that $\phi$ is a dual polynomial witnessing the fact that $\adeg_{\epsilon}(f^{\leq j}) > d$. For brevity, when $\epsilon$ and $d$ are clear from context, we say that $\phi$ is a dual polynomial for $f^{\leq j}$.
\end{lemma}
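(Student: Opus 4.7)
The plan is a direct application of linear programming duality to the natural LP formulation of polynomial approximation. I would handle the two directions separately.

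For the easy direction, suppose a dual $\phi$ satisfying the four listed properties exists, and let $p$ be any real polynomial of the relevant degree. Since $\phd(\phi)>d$ we have $\langle p,\phi\rangle=0$, so
\[
\epsilon \;<\; \langle f,\phi\rangle \;=\; \langle f-p,\phi\rangle \;\le\; \max_{|x|\le j}|f(x)-p(x)|\cdot\|\phi\|_1 \;=\; \max_{|x|\le j}|f(x)-p(x)|,
\]
using that $\phi$ is supported on $\{|x|\le j\}$ and $\|\phi\|_1=1$. Hence no such polynomial $\epsilon$-approximates $f$ on the promise domain, yielding the desired lower bound on $\adeg_\epsilon(f^{\le j})$.

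For the converse, write down the primal LP that computes the best approximation error: free variables are the Fourier coefficients $\hat p(S)$ for $|S|$ in the relevant range, together with a slack $\epsilon'\ge 0$; constraints are $-\epsilon'\le p(x)-f(x)\le\epsilon'$ for each $x$ with $|x|\le j$; minimize $\epsilon'$. Dualizing yields two nonnegative variables $\alpha_x,\beta_x$ per promise input (one per side of each absolute-value constraint). The free $\hat p(S)$ variables produce the orthogonality relations $\sum_{|x|\le j}\chi_S(x)(\alpha_x-\beta_x)=0$ for each $S$ in the relevant range; the $\epsilon'$ variable contributes $\sum_{|x|\le j}(\alpha_x+\beta_x)\le 1$; and the dual objective is $\sum_{|x|\le j}f(x)(\alpha_x-\beta_x)$. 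Define $\phi(x):=\alpha_x-\beta_x$ on $\{|x|\le j\}$ and $\phi(x):=0$ otherwise. Then the orthogonality constraints translate precisely to the pure-high-degree condition on $\phi$, the dual objective equals $\langle f,\phi\rangle$, and the support condition is automatic.

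It remains to rescale to $\|\phi\|_1=1$. At an optimum one may assume $\alpha_x\beta_x=0$ for every $x$---replacing $(\alpha_x,\beta_x)$ by $(\alpha_x-t,\beta_x-t)$ with $t=\min(\alpha_x,\beta_x)$ leaves the objective and orthogonality untouched while only relaxing the normalization constraint---so $\|\phi\|_1=\sum_x(\alpha_x+\beta_x)\le 1$. If the dual optimum exceeds $\epsilon$ then $\phi\ne 0$, so dividing $\phi$ by $\|\phi\|_1$ produces a feasible dual of unit $\ell_1$ norm whose objective only increases and whose orthogonality is preserved. Strong LP duality---valid since the primal is a finite-dimensional, feasible, bounded linear program---now equates the primal and dual optima, completing the reverse direction. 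No single step presents a serious obstacle; the only care needed is in bookkeeping the sign conventions and the exact correspondence between the primal degree bound ``$\deg p\le d$'' and the dual condition ``$\phd(\phi)>d$'', both of which are standard in this literature.
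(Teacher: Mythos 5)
Your proof is correct, and it takes the standard route: the easy direction is a direct Cauchy--Schwarz/H\"older style calculation showing a dual witness certifies inapproximability, and the converse is linear programming duality applied to the natural ``minimize uniform error'' LP, with the usual cleanup step (use complementary slackness to assume $\alpha_x\beta_x=0$ so that $\|\phi\|_1=\sum_x(\alpha_x+\beta_x)\le 1$, then rescale). The paper itself does not prove this lemma---it cites it as a standard consequence of LP duality, pointing to prior work---so there is no internal proof to compare against; what you wrote is exactly the argument the literature has in mind.

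One small point worth recording: as stated in the paper, the lemma has an off-by-one inconsistency. The ``if and only if'' clause pairs $\adeg_\epsilon(f^{\le j})\ge d$ with $\phd(\phi)>d$, but the closing sentence says $\phi$ witnesses $\adeg_\epsilon(f^{\le j})>d$. Your derivation makes the correct correspondence explicit: a primal with degree budget $\deg p\le d$ dualizes to orthogonality against all $\chi_S$ with $|S|\le d$, i.e., $\phd(\phi)>d$, and the strong-duality equality shows $\adeg_\epsilon(f^{\le j})>d$ iff such a $\phi$ exists with $\langle f,\phi\rangle>\epsilon$. So the strict inequality is the right reading of the statement, and your proof gives it. Also, depending on sign conventions in the Lagrangian, the dual objective may come out as $-\sum_x f(x)(\alpha_x-\beta_x)$; this is immaterial since the constraints are symmetric in $(\alpha,\beta)$, so one may swap them (equivalently replace $\phi$ by $-\phi$) to land on $\langle f,\phi\rangle$. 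You handle both of these correctly, if tersely, in your final remark about ``bookkeeping the sign conventions.''
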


\v{S}palek~\cite{Spa08} exhibited an explicit dual witness for $\OR$ (existence of a dual witness for $\OR$ was already implicit from the work of Nisan and Szegedy~\cite{NS94}).
\begin{claim}[Implicit in \cite{NS94}]\label{clm: canonical_dual_OR}
There exists a constant $c \in (0, 1]$ such that for any integer $n \geq 0$, there exists a function $\theta : \pmone^n \to \R$ satisfying
\begin{itemize}
    \item $\|\theta\|_1 = 1$,
    \item $\phd(\theta) \geq c\sqrt{n}$,
    \item $\langle \theta, \OR_n \rangle \geq 3/5$.
\end{itemize}
\end{claim}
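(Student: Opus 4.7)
The plan is to deduce the claim from linear programming duality (Lemma~\ref{lem: dualitypromise}) applied to the classical $\Omega(\sqrt{n})$ approximate degree lower bound for $\OR_n$. The key input is that $\adeg_{3/5}(\OR_n) \geq c\sqrt{n}$ for some absolute constant $c > 0$. Once this is in hand, setting $f = \OR_n$, $j = n$, $\epsilon = 3/5$, and $d = \lfloor c\sqrt{n}\rfloor - 1$ in Lemma~\ref{lem: dualitypromise} immediately produces a function $\theta : \pmone^n \to \R$ satisfying $\|\theta\|_1 = 1$, $\phd(\theta) > d$, and $\langle \OR_n, \theta \rangle > 3/5$. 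The Hamming-weight constraint $\theta(x) = 0$ for $|x| > j$ is vacuous here since $j = n$, so, after absorbing the off-by-one into the constant, $\theta$ fulfils all three requirements of the claim.

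To establish the approximate degree lower bound at error $3/5$, I would follow the standard Nisan--Szegedy symmetrization argument. Suppose $p : \pmone^n \to \R$ has degree $d$ and satisfies $|p(x) - \OR_n(x)| \leq 3/5$ for every $x \in \pmone^n$. The Minsky--Papert symmetrization produces a univariate polynomial $q$ of degree at most $d$ such that $q(0) \in [2/5,\,8/5]$ and $q(t) \in [-8/5,\,-2/5]$ for every integer $t \in \{1, \ldots, n\}$. Thus $|q|$ is bounded by $8/5$ on the integer points of $[0, n]$, yet $q$ must move by at least $4/5$ between $0$ and $1$. The Ehlich--Zeller inequality (equivalently, interpolating to a polynomial bounded in $\ell_\infty$-norm on $[0,n]$ and applying the Markov brothers' inequality) then forces $d = \Omega(\sqrt{n})$. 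The argument is essentially identical to the one in~\cite{NS94} for error $1/3$; only numerical constants change, and the $\sqrt{n}$ rate persists for any constant error strictly less than one.

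A self-contained alternative is to invoke \v{S}palek's explicit construction: one writes down a signed measure on $\{0, 1, \ldots, n\}$ built from a Chebyshev polynomial (normalized to have $\ell_1$-norm $1$), verifies orthogonality to low-degree univariate polynomials, and then lifts the measure symmetrically to $\pmone^n$ by uniformly distributing the weight at Hamming slice $t$ over the $\binom{n}{t}$ strings of that weight. Orthogonality to multilinear polynomials of degree $< c\sqrt{n}$ over $\pmone^n$ then reduces by symmetrization to orthogonality to univariate polynomials of the same degree, which the Chebyshev construction ensures; the correlation bound $\langle \OR_n, \theta\rangle \geq 3/5$ follows from the magnitude of $\theta$ at the all-ones string.

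The main obstacle in either path is purely bookkeeping around the error parameter $3/5$ (as opposed to the standard $1/3$) and around the precise constant $c$ produced by Markov-type inequalities; no new ideas beyond the classical arguments are required. I would favour the duality-based route for brevity, since it delegates all of the analytic work to well-established extremal inequalities and packages the conclusion via the off-the-shelf lemma already stated in the paper.
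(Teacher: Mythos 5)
Your proposal is correct, and it is essentially the route the paper implicitly relies on: the paper states Claim~\ref{clm: canonical_dual_OR} without proof, attributing it to Nisan--Szegedy (implicit via duality) and to \v{S}palek (explicit construction). Your primary path --- Lemma~\ref{lem: dualitypromise} with $\epsilon = 3/5$, $j = n$, applied to the $\Omega(\sqrt{n})$ lower bound on $\adeg_{3/5}(\OR_n)$ obtained by symmetrization plus Markov/Ehlich--Zeller --- is precisely the ``implicit in \cite{NS94}'' derivation, and your alternative (a symmetric lift of a Chebyshev-based signed measure on $\{0,\dots,n\}$) is \v{S}palek's; the remaining off-by-one in $\phd$ and the small-$n$ edge cases are handled, as you note, by shrinking the constant $c$.
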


We also require the following error reduction theorem for approximate degree.
\begin{lemma}[{\cite{BNRdW07}}]\label{lem: adegamp}
Let $f : \pmone^n \to \pmone$ be any (possibly partial) Boolean function and let $0 < \epsilon < 1$. Then,
\[
\adeg_\epsilon(f) = \adeg(f) \cdot O(\log(1/\epsilon)).\footnote{The statement in~\cite{BNRdW07} only deals with total functions. It can be seen that the proof works for partial functions too.}
\]
\end{lemma}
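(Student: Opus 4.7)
The plan is a standard polynomial composition / amplification argument: starting from an optimal $1/3$-error approximating polynomial for $f$, I will compose it with a univariate ``amplification'' polynomial that collapses the gap from $1/3$ down to $\epsilon$.

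Concretely, let $d = \adeg(f)$ and let $p : \R^n \to \R$ be a polynomial of degree $d$ with $|p(x)-f(x)| \leq 1/3$ for every $x$ in the promise domain of $f$. Since $f$ is Boolean, this forces $p(x) \in [2/3, 4/3]$ whenever $f(x)=1$ and $p(x) \in [-4/3,-2/3]$ whenever $f(x)=-1$. The key ingredient I need is a univariate polynomial $A_\epsilon : \R \to \R$ of degree $O(\log(1/\epsilon))$ satisfying $|A_\epsilon(y)-1| \leq \epsilon$ for all $y \in [2/3,4/3]$ and $|A_\epsilon(y)+1| \leq \epsilon$ for all $y \in [-4/3,-2/3]$; equivalently, $A_\epsilon$ is a degree $O(\log(1/\epsilon))$ polynomial that $\epsilon$-approximates $\sgn$ on $[-4/3,-2/3]\cup[2/3,4/3]$. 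Granted such an $A_\epsilon$, the composition $A_\epsilon \circ p$ is a polynomial in $x$ of degree $d \cdot O(\log(1/\epsilon))$ that $\epsilon$-approximates $f$ on the entire promise domain. Since we only use the behavior of $A_\epsilon \circ p$ on the promise domain, the argument applies verbatim to partial functions, which addresses the footnote remark.

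The construction of $A_\epsilon$ is classical. One clean approach is to take $A_\epsilon$ to be an odd polynomial of degree $2m+1$ that is tangent to $\pm 1$ at $y=\pm 1$ to order $m$, i.e.\ $A_\epsilon(\pm 1)=\pm 1$ and $A_\epsilon^{(j)}(\pm 1)=0$ for $j=1,\dots,m$. A Taylor expansion about $y=\pm 1$ then shows that the pointwise error on $[-4/3,-2/3]\cup[2/3,4/3]$ decays like $c^{-m}$ for some constant $c>1$ (because $|y-\sgn(y)|\leq 1/3$ and the leading correction is $O((1/3)^{m+1})$ after normalization). Choosing $m=O(\log(1/\epsilon))$ achieves error $\epsilon$. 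Alternatively, one can invoke the standard Chebyshev-based construction of a degree $O(\log(1/\epsilon)/\gamma)$ polynomial that $\epsilon$-approximates $\sgn$ on $[-1,-\gamma]\cup[\gamma,1]$, applied with $\gamma=1/2$ and an affine rescaling onto $[-4/3,4/3]$.

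The only genuine obstacle is establishing the existence of $A_\epsilon$; once it is in hand, the degree bound $\adeg_\epsilon(f) \leq d\cdot\deg(A_\epsilon) = \adeg(f)\cdot O(\log(1/\epsilon))$ is immediate, and the matching lower bound $\adeg_\epsilon(f) \geq \adeg(f)$ (for $\epsilon \leq 1/3$) is trivial from the definition. I would therefore structure the proof as (i) a one-line reduction from the lemma to the existence of $A_\epsilon$ via composition, followed by (ii) a short appeal to the sign-approximation construction sketched above.
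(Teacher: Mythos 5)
The paper does not actually prove this lemma; it simply cites \cite{BNRdW07} and remarks in a footnote that the argument extends to partial functions. Your proposal is a correct, self-contained proof that follows the standard amplification argument underlying the cited result: compose a degree-$\adeg(f)$, error-$1/3$ approximant with a low-degree univariate ``sign amplifier'' $A_\epsilon$, and observe that composition only references the promise domain (so the extension to partial functions is immediate, as you note).

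The one place to be careful is the Taylor-expansion justification in your first construction of $A_\epsilon$. Writing $A_\epsilon(y)-1=(y-1)^{m+1}g(y)$ with $\deg g=m$, the factor $(y-1)^{m+1}$ indeed contributes $(1/3)^{m+1}$ on the relevant interval, but you still need to control $\max_{[2/3,4/3]}|g|$; without a bound there, the $c^{-m}$ decay is not automatic. This is handled cleanly by using the explicit odd polynomial $A_m(t)=2Q\left(\tfrac{1+t}{2}\right)-1$, where $Q(u)=\Pr[\mathrm{Bin}(2m+1,u)>m]$: it is odd, has degree $2m+1$, satisfies $Q'(u)=(2m+1)\binom{2m}{m}u^m(1-u)^m$ so the first $m$ derivatives vanish at $\pm1$, and a Chernoff bound gives pointwise error $e^{-\Omega(m)}$ on $[-1,-1/2]\cup[1/2,1]$ after the affine rescaling $y\mapsto \tfrac{3}{4}y$ you implicitly use. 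Your Chebyshev-based alternative is also valid and makes the quantitative bound transparent. With either instantiation the argument is complete and matches the standard proof in \cite{BNRdW07}; the gap is only in the prose of the Taylor sketch, not in the approach.
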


\subsection{Dual Polynomials and Dual Block Composition}

Bun et al.~\cite{BKT18} exhibited a dual witness for the approximate degree of the $k$-threshold function. Their dual witness additionally satisfies a decay condition, meaning that it places very little mass on inputs of large Hamming weight.
The following claim, which gives a preliminary construction towards their dual witness for $\THR^k_N$, is a mild modification of~\cite[Proposition 54]{BKT18Arxiv}. 

\begin{claim}[Modification of {\cite[Proposition 54]{BKT18Arxiv}}]\label{clm: psiconstruction}
Let $k, T, N\in\N$ with $2 \leq k\leq T$. There exist constants $c_1, c_2\in (0,1]$ and a function $\omega_T:[T]\cup \{0\}\to\R$ such that all of the following hold.
\begin{align}
    & \sum_{\omega_T(t)>0,t\geq k}|\omega_T(t)| \leq \frac{1}{48\cdot 4^k  \sqrt{N} \log N}.\\
    \label{eqn: delta^+ bound}
    & \sum_{\omega_T(t)<0,t<k}|\omega_T(t)| \leq \left(\frac{1}{2}-\frac{2}{4^k} \right).\\
    & \norm{\omega_T}_1 := \sum_{t=0}^T|\omega_T(t)|=1.\\
    & \text{For all polynomials $q:\R\rightarrow \R$},\nonumber\\
    & \deg (q) < c_1\sqrt{4^{-k}k^{-1}TN^{-1/(2k)}\log^{-1}N} \implies \sum_{t=0}^T\omega_T(t)q(t)=0.\\
    & \text{For all $t \in [T]$},
    |\omega_T(t)| \leq \frac{\sigma\exp(-\beta t)}{t^2}
    \qquad \text{for~} \sigma= (2k)^k, \quad \beta= c_2/\sqrt{4^kkTN^{1/(2k)} \log N}.\label{eqn: omega_weak_decay}
\end{align}
\end{claim}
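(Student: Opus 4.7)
The plan is to construct $\omega_T$ by a close adaptation of the proof of~\cite[Proposition 54]{BKT18Arxiv}. The starting point is the explicit dual witness for $\OR_T$ from Claim~\ref{clm: canonical_dual_OR}, due to~\cite{NS94, Spa08}, whose univariate symmetrization on $\{0, 1, \ldots, T\}$ has pure high degree $\Omega(\sqrt{T})$, $\ell_1$-norm $1$, correlation at least $3/5$ with $\OR_T$, and an explicit pointwise decay of roughly $O(1/t^2)$ on positive weights.

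The first step is to convert this into a dual witness for $\THR^k_T$. I would pointwise multiply the $\OR$-dual by the normalized polynomial $p(t) = \prod_{i=1}^{k-1}(t - i)$, whose zeros at $t = 1, 2, \ldots, k-1$ shift the effective ``threshold'' from $1$ (the $\OR$ case) to $k$, so that the sign pattern of the product aligns with $\THR^k_T$. This multiplication raises the degree by $k - 1$, and hence $\phd$ decreases by at most that much; for the values of $k$ of interest, this is negligible compared to $\sqrt{T}$. The various $4^k$ and $k^k$ factors appearing in the claim's bounds arise from the $\ell_\infty$ norm of $p$ on $\{k, k+1, \ldots, T\}$ and the subsequent renormalization.

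The second step is to enforce the pointwise decay condition~\eqref{eqn: omega_weak_decay} and the sharp bound in item~(1) on positive mass above the threshold by pointwise multiplication with an exponential damping factor $e^{-\beta t}$ for the prescribed $\beta = c_2/\sqrt{4^k k T N^{1/(2k)} \log N}$. This exponentially depresses mass at weights $t \geq k$, which is the key ingredient for meeting the bound $1/(48 \cdot 4^k \sqrt{N} \log N)$. Since $\beta$ is chosen so that $e^{-\beta t}$ can be uniformly approximated on $\{0, 1, \ldots, T\}$ by a polynomial of degree strictly less than the target $\phd$, multiplying by this approximation reduces $\phd$ by only a constant factor (modulo introducing a small error term whose $\ell_1$ contribution can be absorbed). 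After normalizing so that $\|\omega_T\|_1 = 1$, each of the five listed properties follows by direct verification from the explicit product form.

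The main obstacle is simultaneously meeting items~(1), (4), and~(5): the damping parameter $\beta$ must be small enough that $e^{-\beta t}$ is well-approximated by a polynomial of degree below the target pure high degree $\phd \asymp \sqrt{T/(4^k k N^{1/(2k)} \log N)}$, yet large enough that the positive mass at $t \geq k$ contracts by a factor of $1/(48 \cdot 4^k \sqrt{N} \log N)$. The stated choice of $\beta$ calibrates these demands tightly, and a routine but nontrivial calculation using standard bounds on polynomial approximation of the exponential function verifies feasibility. The ``mild modification'' relative to~\cite[Proposition 54]{BKT18Arxiv} likely consists of small adjustments to the constants $c_1, c_2$ and to the specific form $\frac{1}{2} - \frac{2}{4^k}$ in~\eqref{eqn: delta^+ bound}, chosen to suit the downstream Source~2 error analysis in the main lower bound proof.
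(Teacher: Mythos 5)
Your proposal takes a genuinely different route from the paper's proof, and it does not quite work.

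The paper constructs $\omega_T$ directly and explicitly: it fixes a sparse set
$S = \{1, 2, \dots, k\} \cup \{\ell i^2 : 0 \le i \le m\}$ with $\ell = 100k\lceil N^{1/(2k)}4^k \log N\rceil$ and $m = \lfloor\sqrt{T/\ell}\rfloor$, and sets
$\omega(t) = \frac{(-1)^{T-t-m+1}}{T!}\binom{T}{t}\prod_{r \in ([T]\cup\{0\})\setminus S}(t-r)$.
The pure high degree is then immediate from the binomial identity $\sum_{t=0}^T (-1)^t \binom{T}{t}\,p(t)=0$ for $\deg p < T$ (Fact~\ref{fact: combid}), since the degree-$(T-|S|)$ polynomial factor can absorb any $q$ with $\deg q < |S|-1$. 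Crucially, the magnitude at a support point simplifies to $|\omega(t)| = \prod_{r\in S\setminus\{t\}}|t-r|^{-1}$, and the \emph{quadratic spacing} of the tail of $S$ is what produces the exponential decay $|\omega(\ell j^2)|/|\omega(k)| \le (2k)^k \ell^{-k} j^{-4} e^{-j^2/2m}$. The decay is an intrinsic consequence of the combinatorial construction, not an ingredient inserted after the fact.

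Your approach — take an $\OR$-dual, multiply by $\prod_{i=1}^{k-1}(t-i)$ to shift the threshold, then multiply by (a polynomial proxy for) $e^{-\beta t}$ to impose decay — has concrete gaps. First, on $\{k, k+1, \ldots, T\}$ the factor $\prod_{i=1}^{k-1}(t-i)$ grows like $t^{k-1}$, so since the $\OR$-dual decays only like $1/t^2$ the product actually \emph{grows} in $t$ once $k\geq 3$, and the $\ell_\infty$ norm you invoke is of order $T^{k-1}$, not $4^k$ or $k^k$. Your assertion that the $4^k$ and $k^k$ factors in the claim ``arise from the $\ell_\infty$ norm of $p$'' is therefore not accurate, and the intermediate object is very far from the correct normalization and sign-mass distribution before the damping is applied. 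Second, multiplying by $e^{-\beta t}$ destroys pure high degree outright; you gesture at substituting a low-degree polynomial approximation of $e^{-\beta t}$, but then the approximation error (which can change sign, distort the correlation with $\THR^k_T$, and shift $\ell_1$ mass) must be controlled simultaneously with all five listed properties, and there is no argument for this beyond the phrase ``routine but nontrivial.'' In contrast, the paper's construction makes all five properties (including the sharp bound in item (1) and the precise decay in Equation~\eqref{eqn: omega_weak_decay}) fall out of a single closed-form expression for $|\omega(t)|$ via elementary ratio estimates; there is no approximation error to manage. If you wanted to rescue your approach you would need to replace the polynomial $\prod_{i=1}^{k-1}(t-i)$ and the damping heuristic with something that avoids the $T^{k-1}$ blow-up and produces exact orthogonality rather than approximate — at which point you would essentially be rediscovering the paper's explicit construction.
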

Although the proof follows along the same lines as that of~\cite{BKT18Arxiv}, we provide a proof in the appendix for completeness.

The next claim yields a dual polynomial for $\THR^k_N$, and we omit its proof.

\begin{claim}[Modification of {\cite[Proposition 55]{BKT18Arxiv}}]\label{clm: final_psi_construction}
Let $k, T, N\in \N$ with $2 \leq k \le T \le N$, and let $\omega_T$ be as constructed in Claim~\ref{clm: psiconstruction}, with constants $c_1, c_2$. Define $\psi_T : \pmone^N \to \R$ by $\psi_T(x) = \omega_T(|x|) / \binom{N}{ |x|}$ for $x \in (\pmone^N)^{\leq T}$ and $\psi_T(x) = 0$ otherwise. Then 
\begin{align}
&\delta^+_{ \THR^k_N, \psi_T} \le \frac{1}{48\cdot 4^k  \sqrt{N} \log N}  \label{eqn:thr-pos-corr}\\
&\delta^-_{\THR^k_N, \psi_T} \le \frac{1}{2} - \frac{2}{4^k}  \label{eqn:thr-neg-corr} \\
&\|\psi_T\|_1 = 1   \label{eqn:thr-norm} \\
&\text{For any polynomial } p \colon \pmone^N \to \R\text{, } \nonumber\\
&\deg (p) <  c_1 \sqrt{4^{-k} k^{-1} T N^{-1/(2k)} \log^{-1} N} \implies \langle \psi_T, p \rangle = 0 \label{eqn:thr-phd} \\
&\text{For all~}t \in [n], \qquad \sum_{|x| = t} |\psi_T(x)|  \le \frac{(2k)^k \exp\left(-c_2 t / \sqrt{4^kk T N^{1/(2k)} \log N}\right)}{t^2}.\label{eqn:thr-decay}
\end{align}
\end{claim}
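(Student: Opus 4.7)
The plan is to transfer each of the five properties from the univariate function $\omega_T$ of Claim~\ref{clm: psiconstruction} to the symmetric function $\psi_T$ via a direct, level-by-level computation and Minsky--Papert symmetrization. Because $\psi_T$ depends on its input only through the Hamming weight, every property collapses cleanly to the corresponding property of $\omega_T$, which has already been established.

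First I would establish the key identity
\[
\sum_{|x|=t} |\psi_T(x)| \;=\; \binom{N}{t}\cdot\frac{|\omega_T(t)|}{\binom{N}{t}} \;=\; |\omega_T(t)|
\]
for all $0 \le t \le T$, and zero otherwise. Summing over $t$ proves $\|\psi_T\|_1 = \|\omega_T\|_1 = 1$, which is Equation~\eqref{eqn:thr-norm}, and substituting the pointwise bound from Equation~\eqref{eqn: omega_weak_decay} yields the decay bound \eqref{eqn:thr-decay}.

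Next, for the error bounds, I would unpack Definition~\ref{def: delta}. Since $\THR^k_N(x) = +1$ precisely when $|x| < k$ and $-1$ otherwise, a false positive requires $\psi_T(x) > 0$ (equivalently $\omega_T(|x|) > 0$) together with $|x| \ge k$. The same level-sum identity then gives
\[
\delta^+_{\THR^k_N,\psi_T} = \sum_{\substack{t \ge k\\ \omega_T(t)>0}}|\omega_T(t)|,\qquad \delta^-_{\THR^k_N,\psi_T}=\sum_{\substack{t<k\\ \omega_T(t)<0}}|\omega_T(t)|,
\]
and the two bounds in Claim~\ref{clm: psiconstruction} give Equations~\eqref{eqn:thr-pos-corr} and \eqref{eqn:thr-neg-corr} at once.

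Finally, for the pure high degree statement \eqref{eqn:thr-phd} I would apply Minsky--Papert symmetrization. Given any polynomial $p:\pmone^N\to\R$ with $\deg(p)=d$, define $q(t) := \binom{N}{t}^{-1}\sum_{|x|=t}p(x)$; a standard computation shows $q$ is a univariate polynomial of degree at most $d$. Then
\[
\langle \psi_T,p\rangle = \sum_{t=0}^{T}\omega_T(t)\cdot\binom{N}{t}^{-1}\sum_{|x|=t}p(x) = \sum_{t=0}^{T}\omega_T(t)\,q(t),
\]
and the pure high degree bound on $\omega_T$ from Claim~\ref{clm: psiconstruction} immediately gives $\langle \psi_T,p\rangle = 0$ whenever $d < c_1\sqrt{4^{-k}k^{-1}TN^{-1/(2k)}\log^{-1}N}$. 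There is essentially no technical obstacle here; the definitions of $\omega_T$ and the normalization $\psi_T(x) = \omega_T(|x|)/\binom{N}{|x|}$ were chosen precisely so that every property of $\omega_T$ lifts without loss. The only point requiring care is ensuring the $\binom{N}{t}$ factors cancel correctly in each step, which they do by the symmetric definition of $\psi_T$.
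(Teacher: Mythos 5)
Your proposal is correct and is exactly the standard symmetrization argument the paper has in mind when it writes ``we omit its proof'': the level-sum identity $\sum_{|x|=t}|\psi_T(x)|=|\omega_T(t)|$ lifts the $\ell_1$-norm, decay, and error-mass properties directly, and Minsky--Papert symmetrization (the symmetrized average of a degree-$d$ polynomial over $\pmone^N$ is a univariate polynomial in $t$ of degree at most $d$) transfers the pure-high-degree condition. No gaps.
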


Towards proving approximate degree lower bounds for composed functions, one might hope to combine dual polynomials of the constituent functions in some way to obtain a dual polynomial for the composed function. A series of works~\cite{SZ09,Lee09,She13} introduced the notion of ``dual block composition'', which is a powerful method of combining dual witnesses.
\begin{definition}[Dual block composition]\label{defn: dualblockcomposition}
Let $\theta:\pmone^n \to \R, \phi : \pmone^m \to \R$ be any functions satisfying $\|\theta\|_1 = \|\phi\|_1 = 1$ and $\phd(\phi) \geq 1$.
Let $x=(x_1, \dots, x_n)$ where each $x_i \in \pmone^m$.
Define the \emph{dual block composition} of $\theta$ and $\phi$, denoted $\theta \star \phi$, to be
\[
\theta \star \phi(x) =2^n \theta(\sgn(\phi(x_1)),\dots, \sgn(\phi(x_n)))\prod_{i=1}^n |\phi(x_i)|.
\]
\end{definition}
Sherstov~\cite{She13} showed that dual block composition preserves $\ell_1$-norm and that pure high degree is multiplicative (also see~\cite{Lee09}). Bun and Thaler~\cite{BT17} observed that dual block composition is associative.
\begin{lemma}\label{lem: dbproperties}
Let $\phi:\pmone^{m_\phi}\to \R, \theta:\pmone^{m_\theta}\to\R$ be any functions. Then,\\
\textbf{Preservation of $\ell_1$-norm}: If $\norm{\theta}_1=1, \norm{\phi}_1=1$ and $\langle \phi,1 \rangle=0$, then 
\begin{equation}
    \norm{\theta\star\phi}_1 = 1.
\end{equation}
\textbf{Multiplicativity of pure high degree}: 
\begin{equation}\label{eqn: phdmult}
\phd(\theta)>D, \phd(\phi)>d \implies \phd(\theta\star\phi)>Dd.
\end{equation}
\textbf{Associativity}: 
For every $\psi:\pmone^{m_\psi}\to\R$, we have
\begin{equation}\label{eqn: dbcassociative}
    (\phi\star\theta)\star\psi = \phi\star(\theta\star\psi).
\end{equation}
\end{lemma}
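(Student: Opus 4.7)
The plan is to handle the three properties in turn. Preservation of $\ell_1$-norm is essentially a one-line calculation: I would partition the sum over $x = (x_1,\dots,x_n)$ by the sign vector $z = (\sgn(\phi(x_1)),\dots,\sgn(\phi(x_n)))$, and exploit that the hypotheses $\|\phi\|_1 = 1$ and $\langle \phi, 1\rangle = 0$ together force $\sum_{x:\,\sgn(\phi(x)) = w} |\phi(x)| = 1/2$ for each $w \in \pmone$, so
\[
\|\theta \star \phi\|_1 = 2^n \sum_{z \in \pmone^n} |\theta(z)| \prod_{i=1}^n \tfrac{1}{2} = \|\theta\|_1 = 1.
\]

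For multiplicativity of pure high degree, I would reduce by linearity to test polynomials of the factored form $q(x_1,\dots,x_n) = \prod_{i=1}^n q_i(x_i)$ with $\sum_i \deg(q_i) \leq Dd$, since every monomial on the $nm$ variables factors as such a product. Applying the same sign-vector partition yields
\[
\langle \theta \star \phi, q\rangle = 2^n \sum_{z \in \pmone^n} \theta(z) \prod_{i=1}^n B_i^{z_i}, \qquad B_i^{w} := \sum_{x_i:\,\sgn(\phi(x_i)) = w} |\phi(x_i)|\,q_i(x_i).
\]
The crucial observation is that $B_i^{+1} - B_i^{-1} = \langle \phi, q_i\rangle$, which vanishes whenever $\deg(q_i) \leq d$ since $\phd(\phi) > d$. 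Hence, for each block $i$ outside the set $I := \{i : \deg(q_i) > d\}$, the factor $B_i^{z_i}$ is independent of $z_i$. A degree count gives $|I|(d+1) \leq \sum_i \deg(q_i) \leq Dd$, so $|I| < D$. Pulling the $z$-independent terms outside, the remaining inner sum becomes $\langle \theta, g\rangle$ where $g(z) := \prod_{i \in I} B_i^{z_i}$ is a multilinear polynomial in $\{z_i\}_{i \in I}$ of degree at most $|I| < D$, and is therefore killed by $\phd(\theta) > D$. This is the main obstacle in the lemma: one must set up the sign decomposition precisely, recognize the identity relating $B_i^{+1} - B_i^{-1}$ to $\langle \phi, q_i\rangle$, and keep the degree budget tight enough that the ``bad'' set $I$ fits strictly below $\phd(\theta)$.

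For associativity, I would directly unfold both sides from the definition, treating the input as an $m_\phi \times m_\theta$ tensor $X = (X_{i,j})$ with each $X_{i,j} \in \pmone^{m_\psi}$. The key observation is that, whenever all the inner $\psi$-values are nonzero,
\[
\sgn\bigl((\theta \star \psi)(X_i)\bigr) = \sgn\bigl(\theta(\sgn(\psi(X_{i,1})),\dots,\sgn(\psi(X_{i,m_\theta})))\bigr),
\]
because the factor $2^{m_\theta}\prod_j |\psi(X_{i,j})|$ is a nonnegative scalar that does not affect the sign (and the degenerate cases with some $\psi(X_{i,j}) = 0$ force both sides of the associativity identity to vanish via the same zero factor). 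Once this is in place, the $2^{m_\phi}$ and $2^{m_\phi m_\theta}$ normalizations combine identically on both sides, and all the $|\theta|$ and $|\psi|$ factors appear symmetrically, yielding the claimed equality $(\phi \star \theta) \star \psi = \phi \star (\theta \star \psi)$.
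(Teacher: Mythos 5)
The paper does not prove Lemma~\ref{lem: dbproperties}; it simply cites Sherstov~\cite{She13} and Lee~\cite{Lee09} for the first two parts and Bun--Thaler~\cite{BT17} for associativity, so there is no in-paper proof to compare against. Your argument is correct and is the standard route used in those references: the sign-vector partition for the $\ell_1$-norm, the reduction to product test polynomials $\prod_i q_i$ with the identity $B_i^{+1}-B_i^{-1}=\langle \phi, q_i\rangle$ and the budget count $|I|(d+1)\le Dd$ for pure high degree, and the direct unfolding with the observation $\sgn((\theta\star\psi)(X_i))=\sgn(\theta(\sgn\psi(X_{i,1}),\dots,\sgn\psi(X_{i,m_\theta})))$ for associativity. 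One tiny remark on the pure-high-degree step: when $D=0$ (hence $Dd=0$) your bound gives $|I|\le 0$ rather than the strict $|I|<D$, but the conclusion $\deg(g)\le |I|\le D<\phd(\theta)$ still goes through, so nothing breaks.
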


It was shown in~\cite{BKT18Arxiv} that for any dual polynomial $\Phi$, and $\psi_T$ as constructed in Claim~\ref{clm: final_psi_construction}, the dual block composed function $\Phi \star \psi_T$ satisfies a ``strong dual decay'' condition.\footnote{They in fact showed that $\Psi \star \psi$ satisfies this strong decay condition for \emph{any} $\psi$ satisfying a corresponding ``weak decay'' condition. However for this paper, we only require this statement for $\psi = \psi_T$ as constructed in Claim~\ref{clm: final_psi_construction}.} 

\begin{claim}[{\cite[Proposition 31]{BKT18Arxiv}}]\label{clm: strongdualdecay}
Let $R$ be sufficiently large and $k\leq T \leq R$ be any positive integer. Fix $\sigma = (2k)^k$ and let $N = \ceil{20\sqrt{\sigma}R}$. Let $\Phi:\pmone^R\rightarrow \R$ be any function with $\norm{\Phi}_1 = 1$ and $\psi_T:\pmone^N \rightarrow \R$ as defined in Claim~\ref{clm: final_psi_construction}. Then
\begin{equation}
  \sum_{x\notin (\pmone^{RN})^{\leq N}}|(\Phi\star\psi_T)(x)|\leq (2NR)^{-2\Delta}
\end{equation}
for some $\Delta \geq \frac{\beta\sqrt{\sigma}R}{4\ln^2{R}}$ for $\beta= c_2/\sqrt{4^kkTN^{1/(2k)} \log N}$.
\end{claim}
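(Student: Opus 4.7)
The plan is to reduce the strong decay of $\Phi \star \psi_T$ to a Chernoff-style tail bound on a sum of $R$ independent random variables whose marginal tails are furnished by the weak decay condition~\eqref{eqn:thr-decay}.

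First I would perform an algebraic reduction that eliminates the $2^R$ prefactor in the definition of dual block composition. Group $x = (x_1, \dots, x_R)$ by the sign pattern $y = (\sgn \psi_T(x_1), \dots, \sgn \psi_T(x_R)) \in \pmone^R$. Since $\phd(\psi_T) \geq 1$ and $\norm{\psi_T}_1 = 1$ imply $\sum_{x : \sgn \psi_T(x) = w} |\psi_T(x)| = 1/2$ for each $w \in \pmone$, the pointwise identity $2|\psi_T(x_i)| \cdot I[\sgn \psi_T(x_i) = y_i] = \mu_{y_i}(x_i)$ holds; multiplying across the $R$ coordinates gives
\[
\sum_{x \notin (\pmone^{RN})^{\leq N}} |(\Phi \star \psi_T)(x)| = \sum_{y \in \pmone^R} |\Phi(y)| \cdot \Pr_{x \sim \mu_y}\!\left[\sum_{i=1}^R |x_i| > N\right].
\]
Using $\norm{\Phi}_1 = 1$, this quantity is bounded by $\sup_y \Pr_{\mu_y}[\sum_i |x_i| > N]$. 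This completely eliminates the role of $\Phi$ and reduces the task to a concentration bound on the Hamming weight of an $R$-fold product of marginals inherited from $\psi_T$.

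Second, I would prove the uniform tail bound. Under $\mu_y$ the blocks $|x_i|$ are independent, and \eqref{eqn:thr-decay} together with $\Pr_{\mu_{\psi_T}}[\sgn \psi_T = y_i] = 1/2$ gives
\[
\Pr_{\mu_{y_i}}[|x_i| = t] \leq \frac{2\sigma e^{-\beta t}}{t^2} \qquad \text{for } t \geq 1.
\]
A Chernoff bound via the moment generating function reduces the task to controlling $\E_{\mu_{y_i}}[e^{\lambda |x_i|}]$ for a parameter $\lambda \in (0, \beta)$, yielding
\[
\Pr_{\mu_y}\!\left[\sum_i |x_i| > N\right] \leq e^{-\lambda N} \prod_{i=1}^R \E_{\mu_{y_i}}\!\bigl[e^{\lambda |x_i|}\bigr].
\]

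Finally, I would tune $\lambda$. Substituting $N = \lceil 20\sqrt{\sigma} R \rceil$ and taking logarithms, the goal becomes the inequality $\lambda N - R \ln \E[e^{\lambda |x_i|}] \geq 2\Delta \ln(2NR)$ for $\Delta = \beta\sqrt{\sigma}R/(4\ln^2 R)$. Since $R$ is taken sufficiently large and $N$ is polynomial in $R$, $\ln(2NR) = \Theta(\ln R)$, so the right-hand side is $\Theta(\beta\sqrt{\sigma}R/\ln R)$. I would choose $\lambda$ proportional to $\beta/\ln R$: this makes $\lambda N \asymp \beta \sqrt{\sigma}R/\ln R$, while a Taylor expansion of $\E[e^{\lambda|x_i|}]$ around $\lambda = 0$, bounding $e^{\lambda t}-1$ linearly for $t \leq 1/\lambda$ and using the weak decay to control the remaining range, makes each MGF factor $1 + o_R(1)$ so that $R \ln \E[e^{\lambda|x_i|}]$ is a lower-order correction.

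I expect the main obstacle to be this last step. The per-coordinate MGF bound $\E[e^{\lambda|x_i|}] \leq 1 + 2\sigma \sum_{t \geq 1} e^{(\lambda - \beta)t}/t^2$ is not better than $1 + O(\sigma)$ when $\lambda$ is close to $\beta$, so a naive choice $\lambda = \Theta(\beta)$ would leave a residual term $R \ln(1 + \sigma)$ that swamps the exponential decay $\lambda N$. The role of the extra $\ln R$ factor dividing $\beta$ in the chosen $\lambda$ is precisely to shrink the MGF to $1 + o_R(1)$ per block, at the cost of losing a $1/\ln^2 R$ factor in the final $\Delta$, which is exactly the form claimed.
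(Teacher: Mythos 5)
Your step 1 is correct and natural: grouping by sign patterns $y = (\sgn\psi_T(x_1),\dots,\sgn\psi_T(x_R))$, using $\phd(\psi_T)\geq 1$ to get $\Pr_{\mu_{\psi_T}}[\sgn\psi_T(x)=w]=1/2$ for each $w\in\pmone$, and then invoking $\|\Phi\|_1=1$ does reduce the claim to the uniform tail estimate $\max_{y}\Pr_{\mu_y}[\sum_i|x_i|>N]\leq(2NR)^{-2\Delta}$.

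The gap is in step 3, and it is not a tuning issue --- the Chernoff calculation cannot close for any choice of $\lambda$. Carrying out the Taylor split you sketch, the best the marginal bound $\Pr_{\mu_{y_i}}[|x_i|=t]\leq 2\sigma e^{-\beta t}/t^2$ gives on the per-block MGF is
\[
\E_{\mu_{y_i}}\!\bigl[e^{\lambda|x_i|}\bigr]-1 \;\leq\; 4e\sigma\lambda\sum_{t\leq 1/\lambda}\frac{e^{-\beta t}}{t} + (\text{exponentially small tail}) \;=\; O\bigl(\sigma\lambda\ln(1/\beta)\bigr),
\]
and for the parameters in force $\ln(1/\beta)=\Theta(\ln R)$. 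With your $\lambda\asymp\beta/\ln R$ the per-block correction is therefore $\Theta(\sigma\beta)$. This is indeed $o(1)$, as you note, but being $o(1)$ is not the relevant threshold. The Chernoff exponent is $\lambda N - R\ln\E[e^{\lambda|x_i|}]$; the positive term is $\lambda N = \Theta(\sqrt{\sigma}\beta R/\ln R)$, whereas the correction accumulates to $R\ln\E[e^{\lambda|x_i|}]=\Theta(\sigma\beta R)$. Their ratio is $\Theta(\sqrt{\sigma}\ln R)$, diverging in the wrong direction: the correction swamps $\lambda N$ and the exponent is negative. Raising $\lambda$ toward $\beta$ worsens the correction; lowering $\lambda$ makes $\lambda N$ fall short of the target $2\Delta\ln(2NR)$. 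There is no usable window.

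More fundamentally, the conclusion does not follow from~\eqref{eqn:thr-decay} alone, so no argument using only that bound can succeed. The hypotheses $\|\psi_T\|_1=1$, $\phd(\psi_T)\geq 1$, and~\eqref{eqn:thr-decay} are jointly consistent with a conditional law $\Pr_{\mu_{y_i}}[|x_i|=t]\approx 2\sigma e^{-\beta t}/t^2$ supported on $t\gtrsim 2\sigma$. For such a law, $\E[|x_i|]=\Theta(\sigma\ln(1/\beta))$ and $\mathrm{Var}[|x_i|]=O(\sigma/\beta)$, so $\sum_i|x_i|$ concentrates around a mean of order $\sigma R\ln R$, which exceeds $N=\Theta(\sqrt{\sigma}R)$ by a factor $\Theta(\sqrt{\sigma}\ln R)$; under that law $\Pr_{\mu_y}[\sum_i|x_i|>N]\to 1$ rather than decaying. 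The actual proof must therefore exploit finer structure of the specific $\psi_T$ produced by Claims~\ref{clm: psiconstruction} and~\ref{clm: final_psi_construction}: the support of $\omega_T$ is the sparse set $\{0,1,\dots,k\}\cup\{\ell j^2 : 0\leq j\leq m\}$, and the sharper estimate $|\omega_T(\ell j^2)|/\|\omega_T\|_1 \leq (2k)^k\,\ell^{-k} j^{-4} e^{-j^2/(2m)}$ holds. It is the sparsity of the support together with the $j^{-4}$ decay that forces $\E[|x_i|]=O(k)$ rather than $\Theta(\sigma\ln R)$; both features are lost in passing to the coarser~\eqref{eqn:thr-decay}, which is the only ingredient your proposal uses.
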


We now define a simple 
but important function $\phi$ that we use in
our construction of a dual witness for 
$\DIST^k_{N,R}$. This function
was first used in the context 
of dual block composition
by Bun and Thaler \cite{BT15}. 
\begin{claim}[\cite{BT15}]\label{claim: halfhalf}
Define $\phi:\pmone^n\to\R$ as
  \begin{equation}
    \phi(x) =
    \begin{cases*}
      -1/2 & if $x = -1^n$ \\
      1/2  & if $x= 1^n$ \\
      0    & otherwise.
    \end{cases*}
  \end{equation}
  Then, $\phd(\phi) = 1$.
\end{claim}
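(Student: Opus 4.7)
The plan is to verify directly the two conditions that pin down the pure high degree of $\phi$ as exactly $1$: namely, that $\phi$ is orthogonal to every polynomial of degree strictly less than $1$, and that there exists a polynomial of degree strictly less than $2$ with which $\phi$ has nonzero correlation. Since $\phi$ is supported on only the two antipodal points $-1^n$ and $1^n$, every inner product $\langle \phi, p \rangle$ collapses to a two-term sum $\phi(-1^n) \cdot p(-1^n) + \phi(1^n) \cdot p(1^n)$, so the entire argument reduces to two arithmetic evaluations.

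First I would show $\phd(\phi) \geq 1$ by checking $\langle \phi, 1 \rangle = 0$. Directly, $\langle \phi, 1 \rangle = \phi(-1^n) + \phi(1^n) = -1/2 + 1/2 = 0$. Since every polynomial of degree strictly less than $1$ is a constant, this single identity (together with linearity of $\langle \cdot, \cdot \rangle$) suffices to rule out correlation with any such polynomial.

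Next I would show $\phd(\phi) < 2$ by exhibiting a specific polynomial of degree strictly less than $2$ on which $\phi$ has nonzero correlation. Taking $p(x) = x_1$, which has degree $1$, I compute $\langle \phi, p \rangle = \phi(-1^n) \cdot (-1) + \phi(1^n) \cdot 1 = 1/2 + 1/2 = 1 \neq 0$. This rules out $\phd(\phi) \geq 2$, and combining with the previous step yields $\phd(\phi) = 1$ exactly. There is no real obstacle here: the proof is a direct unwinding of the definition of pure high degree, rendered immediate by the two-point support of $\phi$.
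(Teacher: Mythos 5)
Your proof is correct, and it is the natural and essentially unique way to argue: since $\phi$ has two-point support, every inner product collapses to a two-term sum, and both required checks ($\langle\phi,1\rangle=0$ and $\langle\phi,x_1\rangle\neq 0$) are immediate arithmetic. The paper itself gives no proof of this claim (it simply cites \cite{BT15}), so there is nothing to diverge from; your argument is exactly what one would write.
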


Bun et al.~\cite{BKT18Arxiv}, slightly extending a result in \cite{BT15}, showed that on dual block composing $\phi$ and $\psi$, where $\phi$ is defined as in Claim~\ref{claim: halfhalf}, the correlation of the dual block composed witness $\phi \star \psi$ with $\OR_M \circ f$ amplifies the correlation of $f$ with $\psi$ as follows.

\begin{lemma}[{\cite[Proposition 56]{BKT18Arxiv}}]\label{lem: halfhalfamplification}
Let $f: \pmone^n\to \pmone$ and $\psi: \pmone^n \to \R$ be any functions with $\|\psi\|_1 = 1$.
For every $M\in\N$ and $\phi: \pmone^M \to \R$ as defined in Claim~\ref{claim: halfhalf}, we have
\begin{equation}\label{eqn: delta_plus_amplification}
    \delta^+_{ \OR_M \circ f, \phi\star\psi}\leq M \delta^+_{f,\psi},
\end{equation}
\begin{equation}\label{eqn: delta_minus_amplification}
    \delta^-_{ \OR_M \circ f, \phi\star\psi}\leq \frac{1}{2}(2\delta^-_{f,\psi})^M.
\end{equation}
\end{lemma}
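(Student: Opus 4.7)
The plan is to unpack the definition of $\phi \star \psi$ and compute both error quantities directly, exploiting the very special form of $\phi$. Since $\phi$ is supported only on $1^M$ and $(-1)^M$ with values $\pm 1/2$, Definition~\ref{defn: dualblockcomposition} gives $(\phi\star\psi)(x_1,\ldots,x_M) = 2^{M-1}\prod_i |\psi(x_i)|$ when $\sgn(\psi(x_i)) = 1$ for all $i$, it equals $-2^{M-1}\prod_i |\psi(x_i)|$ when $\sgn(\psi(x_i)) = -1$ for all $i$, and it vanishes otherwise. Meanwhile $(\OR_M \circ f)(x) = 1$ exactly when $f(x_i) = 1$ for every $i$, and equals $-1$ otherwise. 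These characterizations make each of the two inequalities a bookkeeping calculation.

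For the false negative bound in Equation~\eqref{eqn: delta_minus_amplification}, I would observe that simultaneously having $(\phi\star\psi)(x) < 0$ and $(\OR_M \circ f)(x) = 1$ forces, in every coordinate $i$, both $\psi(x_i) < 0$ and $f(x_i) = 1$, i.e., $x_i \in E^-(f,\psi)$. Crucially, this condition is a product condition across coordinates, so the sum of $|(\phi\star\psi)(x)| = 2^{M-1} \prod_i |\psi(x_i)|$ over such $x$ factorizes as $2^{M-1}\prod_{i=1}^M \bigl(\sum_{x_i \in E^-(f,\psi)}|\psi(x_i)|\bigr) = 2^{M-1}(\delta^-_{f,\psi})^M = \tfrac{1}{2}(2\delta^-_{f,\psi})^M$. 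The $M$-fold product structure is what makes this bound so strong, and it reflects the fact that the top $\OR_M$ can only be flipped to $1$ if \emph{every} block simultaneously fools $\psi$ in the negative direction.

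For the false positive bound in Equation~\eqref{eqn: delta_plus_amplification}, the situation is asymmetric: $(\phi\star\psi)(x) > 0$ forces $\psi(x_i) > 0$ for every $i$, but $(\OR_M \circ f)(x) = -1$ requires only the existence of one index $j$ with $f(x_j) = -1$, i.e.\ at least one $x_j \in E^+(f,\psi)$. I would apply a union bound over the choice of such a ``bad'' index $j$, and for each fixed $j$ factor the sum into the $j$-th coordinate (contributing $\sum_{x_j \in E^+(f,\psi)} |\psi(x_j)| = \delta^+_{f,\psi}$) and the other $M-1$ coordinates (each contributing $\sum_{x_i : \psi(x_i)>0} |\psi(x_i)| = 1/2$). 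Here I am implicitly using that $\phd(\psi) \geq 1$ so that $\langle \psi, 1\rangle = 0$ together with $\|\psi\|_1 = 1$ forces the positive- and negative-mass halves of $\psi$ each to equal $1/2$; this is the standing assumption needed for dual block composition to make sense. Assembling the terms yields $M \cdot 2^{M-1} \cdot \delta^+_{f,\psi} \cdot (1/2)^{M-1} = M\delta^+_{f,\psi}$.

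Neither step should present a real obstacle, as the proof is essentially combinatorial once the case analysis on the sign pattern of $(\phi\star\psi)$ is set up. The only points requiring some care are correctly tracking the $2^{M-1}$ prefactors coming from the $\pm 1/2$ values of $\phi$, and being consistent with the sign conventions of $E^+(f,\psi)$ versus $E^-(f,\psi)$ from Definition~\ref{def: delta}. The asymmetry between the two bounds (linear in $M$ vs.\ exponentially decaying in $M$) is a direct reflection of the asymmetric $(-1)$-certificate structure of $\OR_M$, which is the very phenomenon being exploited in the broader lower-bound argument.
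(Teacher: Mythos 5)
Your proposal is correct and takes the natural direct approach, which is almost certainly the same calculation as the cited Proposition 56 in \cite{BKT18Arxiv}: unpack the dual block composition using that $\phi$ is supported only on $1^M$ and $(-1)^M$, observe that a false negative for the composed function requires every block to simultaneously lie in $E^-(f,\psi)$ (giving the exact product $2^{M-1}(\delta^-_{f,\psi})^M$), while a false positive requires only one block in $E^+(f,\psi)$ (handled by a union bound using $\sum_{\psi(x_i)>0}|\psi(x_i)|=1/2$ from $\phd(\psi)\geq 1$). All the bookkeeping of the $2^{M-1}$ factors and the sign conventions is handled correctly.
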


\subsection{Some Polynomials}

In this section we list out a few polynomials that we require, along with their properties.

\begin{lemma}[{\cite[Lemma 3.1]{SDPT}}]\label{lem: sherstovpk}

For any $\tau_1,\dots,\tau_n \in [0,1)$, define $\nu = \Pi(\tau_1,\dots,\tau_n)$ and $\tau=\max\{\tau_1,\dots,\tau_n \}$. For any $\eta=0,1,\dots, n-1$, let $p_{\eta}:[-1,1]^n\to\R$ be the unique degree-$\eta$ multilinear polynomial that satisfies
\begin{equation}
    p_{\eta}(z)=(-1)^{\eta}\prod_{i=1}^{\eta}(|z|-i), \forall z\in\pmone^n.
\end{equation}
Then,
\begin{align}
&p_{\eta}(1^n) = \eta!, \label{eqn: p_eta_all_false} \\
& \norm{\hat p_\eta}_1\leq \eta!\binom{n+\eta}{\eta}, \label{eqn: p_hat_bound}\\
    &\E_\nu[|p_{\eta}(z)|] \leq p_{\eta}(1^n)\nu(1^n)\left( 1+A\right),\quad
    \text{where}~A:=\binom{n}{\eta+1}\frac{\tau^{\eta+1}}{(1-\tau)^n}.\label{eqn: p_eta_bound}
\end{align}
Furthermore, $p_\eta(z)\geq 0$ for all $z\in\pmone^n$ provided that $\eta$ is even.
\end{lemma}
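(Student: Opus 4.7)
The plan is to pass to $\{0,1\}$-valued variables by writing $y_j=(1-z_j)/2$, so that $|z|=\sum_j y_j = |y|$, and then exploit the identity
\[
\prod_{i=1}^\eta(|z|-i) = \eta!\binom{|z|-1}{\eta} = \eta!\sum_{k=0}^\eta (-1)^{\eta-k}\binom{|z|}{k}.
\]
Since the right-hand side is a polynomial of degree $\eta$ in the $z_j$'s, its multilinearization is a unique degree-$\eta$ multilinear polynomial, establishing existence and uniqueness of $p_\eta$. Substituting $|z|=0$ gives $p_\eta(1^n)=(-1)^\eta\prod_{i=1}^\eta(-i)=\eta!$, proving~\eqref{eqn: p_eta_all_false}. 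For the ``furthermore'' claim, $p_\eta$ vanishes when $|z|\in\{1,\dots,\eta\}$ and equals $(-1)^\eta\prod_{i=1}^\eta(|z|-i)$ (a signed product of positive integers) when $|z|>\eta$, so $p_\eta\geq 0$ on $\pmone^n$ whenever $\eta$ is even.

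For~\eqref{eqn: p_hat_bound}, I would use $\binom{|y|}{k}=e_k(y)$ (the elementary symmetric polynomial of degree $k$ in the $y_j$'s) to rewrite the display above as
\[
p_\eta(z) = \eta!\sum_{k=0}^\eta (-1)^k e_k(y) = \eta!\sum_{|S|\leq \eta}(-1)^{|S|}\prod_{j\in S}y_j.
\]
Each monomial $\prod_{j\in S}y_j$, viewed as a function on $\pmone^n$, has Fourier $\ell_1$ norm equal to $1$, because each factor $y_j=1/2-z_j/2$ has Fourier $\ell_1$ norm $1$ and the factors involve disjoint variables. The triangle inequality then yields $\|\hat p_\eta\|_1\leq \eta!\sum_{k=0}^\eta\binom{n}{k}$, and Vandermonde's identity $\binom{n+\eta}{\eta}=\sum_{k=0}^\eta\binom{n}{k}\binom{\eta}{k}\geq \sum_{k=0}^\eta \binom{n}{k}$ gives the claimed bound $\eta!\binom{n+\eta}{\eta}$.

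For~\eqref{eqn: p_eta_bound}, I would split $|p_\eta(y)|=\eta!\cdot I[|y|=0]+\eta!\binom{|y|-1}{\eta}\cdot I[|y|>\eta]$, exploiting that $p_\eta$ vanishes on $|y|\in\{1,\dots,\eta\}$. Under $\nu=\Pi(\tau_1,\dots,\tau_n)$, the first term contributes $\eta!\prod_i(1-\tau_i)=p_\eta(1^n)\nu(1^n)$. For the second, the elementary inequality $\binom{|y|-1}{\eta}\leq \binom{|y|}{\eta+1}$ (valid for $|y|>\eta$, since then $(\eta+1)/|y|\leq 1$) combined with $\binom{|y|}{\eta+1}=e_{\eta+1}(y)$ and independence of the $y_j$'s gives
\[
\E_\nu\left[\binom{|y|-1}{\eta}I[|y|>\eta]\right]\leq \E_\nu\!\left[e_{\eta+1}(y)\right]=\sum_{|S|=\eta+1}\prod_{j\in S}\tau_j\leq \binom{n}{\eta+1}\tau^{\eta+1}.
\]
Dividing this bound by $p_\eta(1^n)\nu(1^n)=\eta!\prod_i(1-\tau_i)$ and using $\prod_i(1-\tau_i)\geq(1-\tau)^n$ produces the multiplicative slack $A=\binom{n}{\eta+1}\tau^{\eta+1}/(1-\tau)^n$, as required.

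The main (though minor) obstacle is the final bookkeeping step: the $(1-\tau)^n$ in the denominator of $A$ exists precisely to absorb the mismatch between the crude symmetric upper bound $\tau^{\eta+1}$, which discards all $(1-\tau_j)$ factors, and the normalizing weight $\nu(1^n)=\prod_i(1-\tau_i)$ that appears in the lower bound on the ``main'' contribution. Everything else reduces to the finite-difference identity $\binom{x-1}{\eta}=\sum_k(-1)^{\eta-k}\binom{x}{k}$ and the routine calculation of expectations of elementary symmetric polynomials under product distributions.
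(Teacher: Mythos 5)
The paper cites this lemma from Sherstov~\cite{SDPT} and does not reproduce its proof, so there is no in-paper argument to compare against. Your argument is correct and self-contained: passing to $\zone$-variables via $y_j=(1-z_j)/2$, the finite-difference identity $\prod_{i=1}^\eta(t-i)=\eta!\sum_{k=0}^\eta(-1)^{\eta-k}\binom{t}{k}$ together with $\binom{|y|}{k}=e_k(y)$ on $\zone^n$ gives the decomposition $p_\eta=\eta!\sum_{k\le\eta}(-1)^k e_k(y)$, from which Equation~\eqref{eqn: p_eta_all_false}, nonnegativity for even $\eta$, and Equation~\eqref{eqn: p_hat_bound} (via multiplicativity of the Fourier $\ell_1$ norm over disjoint variables and $\sum_{k\le\eta}\binom{n}{k}\le\binom{n+\eta}{\eta}$) all follow as you argue. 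The expectation bound~\eqref{eqn: p_eta_bound} then correctly reduces to the three elementary inequalities $\binom{|y|-1}{\eta}\le\binom{|y|}{\eta+1}=e_{\eta+1}(y)$ for $|y|>\eta$, $\E_\nu[e_{\eta+1}(y)]=\sum_{|S|=\eta+1}\prod_{j\in S}\tau_j\le\binom{n}{\eta+1}\tau^{\eta+1}$, and $\prod_i(1-\tau_i)\ge(1-\tau)^n$, each of which is valid, and they combine to give exactly the stated slack $A$.
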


It is easy to show that for any multilinear polynomial $p : \R^n \to \R$, we have $\max_{y \in [-1, 1]^n}|p(y)| \leq \|\hat{p}\|_1$.  When applied to the function in the previous lemma, we obtain
\begin{claim} \label{claim: pkupperbound}
For $p_\eta$ defined as in Lemma~\ref{lem: sherstovpk}, $\max_{y\in [-1,1]^n} |p_{\eta}(y)|  \leq \eta!\binom{n+\eta}{\eta}.$
\end{claim}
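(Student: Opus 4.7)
The plan is to invoke the general inequality $\max_{y\in[-1,1]^n}|p(y)| \leq \|\hat{p}\|_1$, which is stated in the sentence immediately preceding the claim, and then substitute the Fourier $\ell_1$-norm bound on $p_\eta$ given by Equation~\eqref{eqn: p_hat_bound} in Lemma~\ref{lem: sherstovpk}. So the whole proof is essentially a one-line composition of two facts already on the page.

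First, I would justify the generic inequality for the reader's convenience. Writing any multilinear $p:\R^n\to\R$ in its character expansion $p = \sum_{S\subseteq[n]}\hat{p}(S)\chi_S$ and using the triangle inequality together with the observation that $|\chi_S(y)| = \prod_{i\in S}|y_i| \leq 1$ for every $y\in[-1,1]^n$, one gets $|p(y)| \leq \sum_{S\subseteq[n]}|\hat{p}(S)| = \|\hat{p}\|_1$. Taking the maximum over $y\in[-1,1]^n$ yields $\max_{y\in[-1,1]^n}|p(y)| \leq \|\hat{p}\|_1$.

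Second, I would apply this inequality to $p_\eta$, which by construction in Lemma~\ref{lem: sherstovpk} is the unique multilinear polynomial on $\R^n$ agreeing with $(-1)^\eta\prod_{i=1}^\eta(|z|-i)$ on $\pmone^n$, and therefore qualifies. Combining the generic bound with Equation~\eqref{eqn: p_hat_bound}, namely $\|\hat{p}_\eta\|_1 \leq \eta!\binom{n+\eta}{\eta}$, immediately gives $\max_{y\in[-1,1]^n}|p_\eta(y)| \leq \eta!\binom{n+\eta}{\eta}$, which is the desired conclusion.

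There is no genuine obstacle here: all the real work has already been done in establishing Equation~\eqref{eqn: p_hat_bound}, and Claim~\ref{claim: pkupperbound} is simply packaging that estimate in a form that is more convenient to cite later (when $p_\eta$ is evaluated at real-valued arguments arising from the expectation calculations involving the product distributions $\Pi(\tau_1,\dots,\tau_n)$ through Lemma~\ref{lem: expectation_multilinear}).
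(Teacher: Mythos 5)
Your proposal is correct and matches the paper's own argument exactly: the paper states the generic inequality $\max_{y\in[-1,1]^n}|p(y)| \leq \|\hat{p}\|_1$ in the sentence preceding the claim and then combines it with the bound $\|\hat{p}_\eta\|_1 \leq \eta!\binom{n+\eta}{\eta}$ from Equation~\eqref{eqn: p_hat_bound}. Your explicit justification of the generic inequality via the character expansion and the triangle inequality is a welcome, if routine, elaboration.
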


Finally, we require a lemma, implicit in a result of Razborov and Sherstov~\cite{RS10} (also see~\cite[Proposition 21]{BT17} for a formulation similar to the one we require), that helps us convert a dual polynomial with little mass on large Hamming weight inputs to a dual polynomial with no mass on large Hamming weight inputs without affecting the pure high degree by much.
\begin{lemma}[Implicit in~\cite{RS10}]\label{lem: RS}
Let $N\geq R$ be positive integers, $\Delta\in \R^+$, and $\theta:\pmone^{RN}\to\R$ be any polynomial such that 
\[
    \sum_{x\notin (\pmone^{RN})^{\leq N}} |\theta(x)|\leq (2NR)^{-\Delta}.
\]
For any positive integer $D<\Delta$, there exists a function $\nu:\pmone^{RN}\to\R$ such that 
\begin{itemize}
    \item $\phd(\nu)>D$
    \item $\norm{\nu}_1\leq 1/10$
    \item $|x|>N\Rightarrow \nu(x)=\theta(x)$.
\end{itemize}
\end{lemma}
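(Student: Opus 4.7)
The plan is to build $\nu$ by correcting $\theta$ on the low Hamming weight portion of the cube. Write $\theta = \theta_L + \theta_H$ where $\theta_L(x) = \theta(x) \cdot I[|x| \le N]$ and $\theta_H(x) = \theta(x) \cdot I[|x| > N]$, and seek $\nu$ of the form $\nu = \theta_H + h$ with $h$ supported on $(\pmone^{RN})^{\le N}$. This automatically enforces the third bullet ($\nu(x) = \theta(x)$ for $|x| > N$), and the task reduces to choosing $h$ so that $\phd(\nu) > D$ while $\norm{h}_1$ is small. Since the supports of $\theta_H$ and $h$ are disjoint, $\norm{\nu}_1 = \norm{\theta_H}_1 + \norm{h}_1 \le (2NR)^{-\Delta} + \norm{h}_1$, so proving $\norm{h}_1 \lesssim 1/10$ suffices.

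The constraint $\phd(\nu) > D$ is equivalent to $\langle \nu, \chi_S\rangle = 0$ for every $S \subseteq [RN]$ with $|S| \le D$, which becomes the linear system $\langle h, \chi_S\rangle = -\langle \theta_H, \chi_S\rangle$ on the unknowns $\{h(x) : |x| \le N\}$. Thus the minimum feasible $\norm{h}_1$ is the value of a linear program whose LP dual asks us to maximize
\[
\left|\sum_{x : |x| > N} \theta(x)\, p(x)\right|
\]
over real polynomials $p$ of degree at most $D$ on $\pmone^{RN}$ satisfying $|p(y)| \le 1$ for every $y$ with $|y| \le N$.

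To bound this quantity, I would establish a Markov/Bernstein-style growth inequality: any degree-$D$ polynomial $p$ with $|p(y)| \le 1$ on the slice $(\pmone^{RN})^{\le N}$ satisfies $|p(x)| \le (2NR)^{D}$ at every $x \in \pmone^{RN}$. The derivation goes through symmetrization over permutations of coordinates (the associated LP is invariant under such permutations, so we may assume $p$ symmetric), which reduces the problem to the univariate setting of a degree-$D$ polynomial bounded by $1$ on the integer points $\{0,1,\dots,N\}$. One then invokes a classical Coppersmith--Rivlin / Chebyshev-type estimate to control the value of such a polynomial on $\{N+1,\dots,RN\}$. Plugging this growth bound into the LP dual together with the decay hypothesis yields
\[
\min \norm{h}_1 \;\le\; (2NR)^{D} \cdot \sum_{x : |x| > N} |\theta(x)| \;\le\; (2NR)^{D - \Delta},
\]
which is at most $1/10 - (2NR)^{-\Delta}$ for $D < \Delta$ a positive integer (and $RN$ not too small), so $\nu := \theta_H + h$ meets all three requirements.

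The main obstacle is the polynomial growth bound in the third paragraph. Symmetrization itself is routine, but quantitatively getting a bound of the form $(2NR)^{D}$ with the right constant requires the correct univariate Chebyshev/Markov estimate on polynomials bounded on an integer interval; the naive bound $\norm{\hat{p}}_1 \le \binom{RN}{\le D}^{1/2} \cdot \norm{p}_2$ that one might try first is too weak to make the exponents work out and leaves $\norm{h}_1$ uncontrolled. Razborov--Sherstov circumvent this by directly proving the slice-growth inequality through the univariate reduction, and that is the step to import (or redo) here.
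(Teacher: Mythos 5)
Your overall structure is correct and is, in essence, the argument implicit in Razborov--Sherstov that the paper cites (the paper gives no proof of Lemma~\ref{lem: RS}; it points to \cite{RS10} and to~\cite[Proposition 21]{BT17}). The decomposition $\nu = \theta_H + h$ with $h$ supported on low weight, the LP-duality reduction of $\min\|h\|_1$ to $\max_p |\sum_{|x|>N}\theta(x)p(x)|$ over degree-$D$ polynomials bounded on $(\pmone^{RN})^{\le N}$, and the growth estimate $|p(x)|\le (2NR)^D$ are exactly the right ingredients, and plugging in the decay hypothesis closes the argument.

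One step is mis-justified and should be fixed. You write that ``the associated LP is invariant under such permutations, so we may assume $p$ symmetric.'' This is not true of the dual LP as you set it up: its objective $\sum_{|x|>N}\theta(x)p(x)$ is \emph{not} invariant under coordinate permutations unless $\theta$ happens to be a symmetric function, which the lemma does not assume. What you actually need (and what suffices) is the pointwise bound $|p(x)|\le (2NR)^D$ for every individual $x$, and for \emph{that} statement you should not symmetrize $p$ over all of $S_{RN}$ (full symmetrization changes the value $p(x)$). Instead, fix $x$ with $|x|=t>N$ and let $I=\{i: x_i=-1\}$; restrict $p$ to the subcube $\{y : \{i : y_i=-1\}\subseteq I\}$ and Minsky--Papert symmetrize the restriction over permutations of $I$. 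This produces a univariate polynomial $q$ of degree $\le D$ with $q(t)=p(x)$ and $|q(j)|\le 1$ for $0\le j\le N$, and then Lagrange interpolation on $\{0,\dots,D\}$ gives $|q(t)|\le t^D\,2^D/D! \le (2NR)^D$ (you do not need the full strength of a Coppersmith--Rivlin bound here). Equivalently, one can symmetrize over the \emph{stabilizer} of $x$ rather than all of $S_{RN}$.

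Two further small points. First, the Lagrange step assumes $D\le N$; this holds in the paper's application but is worth stating. Second, your closing inequality $(2NR)^{D-\Delta}\le 1/10 - (2NR)^{-\Delta}$ requires $\Delta - D$ to be bounded away from $0$ (or $NR$ to be enormous); as written, ``positive integer $D<\Delta$'' permits $\Delta - D$ arbitrarily small. In the paper's use of the lemma one has $\Delta - D \ge 1$, so $(2NR)^{D-\Delta}\le 1/(2NR)$ and the constant $1/10$ is comfortable for $NR\ge 20$; this mild imprecision is inherited from the lemma statement rather than introduced by your argument, but it is worth noting.
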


\begin{definition}\label{defn: cheb}
For any integer $d \geq 0$, let $T_d : \R \to \R$ denote the degree-$d$ Chebyshev polynomial, defined recursively as follows.
\begin{align*}
T_0(x) & = 1\\
T_1(x) & = x\\
T_d(x) & = 2xT_{d-1}(x) - T_{d-2}(x).
\end{align*}
\end{definition}

We now observe a simple well-known fact about Chebyshev polynomials whose proof we include for completeness.

\begin{claim}\label{claim: coeffs}
For any $d \geq 0$, consider the $d$'th Chebyshev polynomial $T_d : \R \to \R$ as defined in Definition~\ref{defn: cheb}, and write its expansion $T_d(x) = \sum_{i = 0}^d a_ix^i$. Then, 
\begin{equation}\label{eqn: chebcoeffs}
\sum_{i = 0}^d |a_i| \leq 3^d.
\end{equation}
\end{claim}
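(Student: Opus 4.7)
The plan is to prove the bound by induction on $d$, using the recursive definition of $T_d$ directly at the level of coefficient sums. Write $T_d(x) = \sum_{i=0}^d a_i^{(d)} x^i$, and let $S_d := \sum_{i=0}^d |a_i^{(d)}|$; the goal is to show $S_d \le 3^d$.

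First I would translate the recursion $T_d(x) = 2x T_{d-1}(x) - T_{d-2}(x)$ into a recursion on individual coefficients, namely $a_i^{(d)} = 2 a_{i-1}^{(d-1)} - a_i^{(d-2)}$ (with the convention that $a_j^{(d-1)} = 0$ if $j < 0$ or $j > d-1$, and similarly for $a_i^{(d-2)}$). Applying the triangle inequality and summing over $i$ yields
\begin{equation*}
S_d \;\le\; 2 S_{d-1} + S_{d-2}.
\end{equation*}

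Second, I would check the base cases $S_0 = 1$ and $S_1 = 1$, both of which are at most $3^0$ and $3^1$ respectively. For the inductive step, assume $S_{d-1} \le 3^{d-1}$ and $S_{d-2} \le 3^{d-2}$. Then
\begin{equation*}
S_d \;\le\; 2 \cdot 3^{d-1} + 3^{d-2} \;=\; 7 \cdot 3^{d-2} \;\le\; 9 \cdot 3^{d-2} \;=\; 3^d,
\end{equation*}
completing the induction.

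There is essentially no obstacle here; the only subtlety is making sure the coefficient recursion handles boundary indices correctly (i.e.\ that shifting the index of $T_{d-1}$ by one does not introduce spurious terms), which is immediate from the convention that out-of-range coefficients are zero. The bound $3^d$ is not tight — solving the characteristic equation $r^2 = 2r + 1$ would give the sharper constant $1+\sqrt{2}$ — but since later applications only need a uniform exponential bound, the cleaner constant $3$ suffices.
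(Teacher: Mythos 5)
Your proof is correct and follows essentially the same induction as the paper: both translate the Chebyshev recursion into $S_d \le 2S_{d-1} + S_{d-2}$ and bound this by $3^d$ using the same arithmetic. Your extra remarks about the coefficient-level recursion and the sharper constant $1+\sqrt{2}$ are accurate but not needed.
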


\begin{proof}
We prove this by induction.

Let $S_d$ denote $\sum_{i = 0}^d |a_i|$ where $a_i$'s are the coefficients in the expansion $T_d(x) = \sum_{i = 0}^d a_ix^i$.
By Definition~\ref{defn: cheb}, the hypothesis is satisfied for $d = 0, 1$. Next suppose the hypothesis is true for all $d \leq k$ for some $k \geq 1$. By the recursive definition in Definition~\ref{defn: cheb}, we have $S_{k+1} \leq 2S_k + S_{k - 1} \leq 2 \cdot 3^{k} + 3^{k-1} = 3^{k - 1}(6 + 1) < 3^{k+1}$.
\end{proof}

We also require the following well-known properties of Chebyshev polynomials.
\begin{fact}\label{fact: chebbds}
For any integer $d \geq 0$,
\begin{align}
|T_d(x)| & \leq 1 & |x| \leq 1\label{eqn: abs1}\\
T_d(1 + \epsilon) & \geq 1 + d^2\epsilon & \epsilon \geq 0.\label{eqn: largeoutside1}
\end{align}
\end{fact}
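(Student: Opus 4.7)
The plan is to prove the two inequalities via the two standard closed-form representations of $T_d$: the trigonometric one on $[-1,1]$ and the hyperbolic one on $[1,\infty)$. Both representations are derivable directly from the recurrence in Definition~\ref{defn: cheb}, so no facts beyond the excerpt are needed.

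For the first bound, I would first establish by induction on $d$ that $T_d(\cos\theta) = \cos(d\theta)$ for all $\theta\in\mathbb{R}$. The base cases $T_0 \equiv 1$ and $T_1(x)=x$ match $\cos(0)=1$ and $\cos(\theta)=\cos(\theta)$. For the inductive step, the identity $\cos((d{+}1)\theta)+\cos((d{-}1)\theta)=2\cos(\theta)\cos(d\theta)$ together with the recurrence $T_{d+1}(x)=2xT_d(x)-T_{d-1}(x)$ gives the step immediately. Since every $x\in[-1,1]$ can be written as $\cos\theta$ for some $\theta$, this yields $|T_d(x)|=|\cos(d\theta)|\leq 1$, proving \eqref{eqn: abs1}.

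For the second bound, by the same inductive pattern applied with the hyperbolic identity $\cosh((d{+}1)t)+\cosh((d{-}1)t)=2\cosh(t)\cosh(dt)$, I would establish $T_d(\cosh t)=\cosh(dt)$ for all $t\geq 0$. Then given $\epsilon\geq 0$, I set $t=\operatorname{arccosh}(1+\epsilon)\geq 0$, so $\cosh(t)=1+\epsilon$. Using the half-angle identity $\cosh(u)-1=2\sinh^2(u/2)$, this is equivalent to $\sinh^2(t/2)=\epsilon/2$, and similarly $T_d(1+\epsilon)-1=\cosh(dt)-1=2\sinh^2(dt/2)$. Thus it suffices to show $\sinh(dt/2)\geq d\sinh(t/2)$ for $t\geq 0$.

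The last inequality is where the key (mild) work lies, and the main idea is superadditivity of $\sinh$ on $[0,\infty)$: from $\sinh(a+b)=\sinh(a)\cosh(b)+\cosh(a)\sinh(b)$ and $\cosh\geq 1$, we get $\sinh(a+b)\geq\sinh(a)+\sinh(b)$ for $a,b\geq 0$. A trivial induction on $d$ then gives $\sinh(d\cdot s)\geq d\sinh(s)$ for any $s\geq 0$, which applied with $s=t/2$ yields $\sinh(dt/2)\geq d\sinh(t/2)$. Squaring and multiplying by $2$ gives $\cosh(dt)-1\geq d^2(\cosh(t)-1)=d^2\epsilon$, which is exactly \eqref{eqn: largeoutside1}. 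The main obstacle, such as it is, is just remembering the half-angle move that reduces the problem to comparing $\sinh(dt/2)$ with $d\sinh(t/2)$; everything else is mechanical.
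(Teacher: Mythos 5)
Your proof is correct. The paper states Fact~\ref{fact: chebbds} without proof, simply invoking it as a well-known property of Chebyshev polynomials, so there is no in-paper argument to compare against. Your derivation is a clean, self-contained route from the recurrence in Definition~\ref{defn: cheb}: the trigonometric representation $T_d(\cos\theta)=\cos(d\theta)$ gives \eqref{eqn: abs1} immediately, and the hyperbolic representation $T_d(\cosh t)=\cosh(dt)$ combined with the half-angle identity $\cosh u - 1 = 2\sinh^2(u/2)$ reduces \eqref{eqn: largeoutside1} to $\sinh(dt/2)\geq d\sinh(t/2)$, which follows from superadditivity of $\sinh$ on $[0,\infty)$ (and covers the edge cases $d=0,1$ with equality). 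All steps check out; the squaring step is valid since both sides are nonnegative for $t\geq 0$.
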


\begin{definition}\label{def: conjunction}
For any positive integer $n$, any polynomial $p:\pmone^n \to \{0,1\}$ that is of the form 
\begin{equation}
    \left(\prod_{i \in A}\frac{1+x_i}{2}\right)  \left(\prod_{j \in B}\frac{1-x_j}{2}\right)
\end{equation}
for some sets $A, B \subseteq [n]$, is called a conjunction.
\end{definition}

It can be observed that the product of conjunctions is a conjunction. 

\begin{claim}[{\cite[Corollary 4.7]{algopoly}}]\label{claim: conjbddHW}

Let $n \leq N$ be any positive integers, and $A, B$ be any subsets of $[N]$. Define $f : \left(\pmone^N\right)^{\leq n} \to \zone$\footnote{The version in~\cite{algopoly} deals with functions whose domain is $\left(\zone^N\right)^{\leq n}$. The statement there can easily be seen to imply the statement in this paper.} by
\[
f(x) = \left(\prod_{i \in A}\frac{1+x_i}{2}\right)  \left(\prod_{j \in B}\frac{1-x_j}{2}\right).
\]
Then, for any integer $d \geq 0$, we have
\[
E(f, d) \leq \exp\left(-\frac{cd^2}{n}\right)
\]
for some absolute constant $c$.
\end{claim}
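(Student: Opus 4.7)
I would reduce approximating the conjunction $f$ to two independent univariate approximation problems, each solved by a rescaled Chebyshev polynomial applied to a linear ``violation count.'' First dispose of the trivial case $|B|>n$: any $x$ with $f(x)=1$ has $x_j=-1$ for every $j\in B$, so $|x|\geq|B|>n$, whence $f\equiv 0$ on the promise domain $(\pmone^N)^{\leq n}$ and the zero polynomial certifies $E(f,d)=0$. Assume henceforth $|B|\leq n$, and introduce
\[
s(x)\;:=\;\sum_{i\in A}\tfrac{1-x_i}{2}, \qquad t(x)\;:=\;\sum_{j\in B}\tfrac{1+x_j}{2},
\]
so that $f(x)=1$ iff $s(x)=t(x)=0$ and $f(x)=0$ otherwise. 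On the promise domain one has $s(x)\leq|x|\leq n$ and $t(x)\leq|B|\leq n$, hence both $s(x)$ and $t(x)$ take integer values in $\{0,1,\dots,n\}$.

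Next I would build a univariate polynomial $q$ of degree $d':=\lfloor d/2\rfloor$ that equals $1$ at $0$ and is tiny on $\{1,\dots,n\}$. Concretely, set
\[
q(u)\;:=\;\frac{T_{d'}\!\bigl(1+\tfrac{2}{n}-\tfrac{2u}{n}\bigr)}{T_{d'}(1+2/n)},
\]
where $T_{d'}$ is the Chebyshev polynomial from Definition~\ref{defn: cheb}. Then $q(0)=1$, and for $u\in\{1,\dots,n\}$ the argument of $T_{d'}$ lies in $[-1,1]$, so \eqref{eqn: abs1} gives $|q(u)|\leq 1/T_{d'}(1+2/n)$. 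Applying \eqref{eqn: largeoutside1} with $\epsilon=2/n$ yields $T_{d'}(1+2/n)\geq 1+2(d')^2/n$, which dominates $\exp(c\,d^2/n)$ for a suitable absolute $c>0$ in the regime $d=O(\sqrt n)$; for larger $d$ I would splice in the sharper growth bound $T_{d'}(1+\epsilon)\geq\tfrac12(1+\sqrt{2\epsilon})^{d'}$ coming from the hyperbolic-cosine representation, which gives at least as strong an exponential decay. Either way, $|q(u)|\leq\exp(-c\,d^2/n)$ on $\{1,\dots,n\}$.

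Finally, set $p(x):=q(s(x))\cdot q(t(x))$. Because $s$ and $t$ are linear in $x$, $p$ is a polynomial of degree at most $2d'\leq d$. When $f(x)=1$, both arguments vanish and $p(x)=1=f(x)$; when $f(x)=0$, at least one of $s(x),t(x)$ lies in $\{1,\dots,n\}$, so $|p(x)|\leq\exp(-c\,d^2/n)$. Thus $|p(x)-f(x)|\leq\exp(-c\,d^2/n)$ uniformly over the promise domain, yielding the claim. I expect the main technical wrinkle is verifying the Chebyshev growth bound $T_{d'}(1+2/n)\geq\exp(c\,d^2/n)$ across the relevant range of $d$: the clean inequality from Fact~\ref{fact: chebbds} only just suffices once $d^2/n$ becomes non-negligible, and one must switch to the hyperbolic-cosine formula for $T_{d'}$ past $d\approx\sqrt n$; this is a standard but slightly fiddly calculation that lives entirely on the univariate side.
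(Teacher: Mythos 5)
The paper does not reprove this claim; it cites \cite[Corollary 4.7]{algopoly}, so the comparison here is to that known result rather than to an in-paper argument. Your reduction to univariate approximation via the counters $s(x)=\sum_{i\in A}\tfrac{1-x_i}{2}$ and $t(x)=\sum_{j\in B}\tfrac{1+x_j}{2}$ is correct and is indeed the standard first step, and the $|B|>n$ disposal is fine. The gap is in the univariate piece: a single rescaled Chebyshev polynomial does \emph{not} give error $\exp(-cd^2/n)$ across the whole range of $d$, and the ``splice in the hyperbolic-cosine bound'' step does not repair this.

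Concretely, $T_{d'}(1+2/n)=\cosh\bigl(d'\,\mathrm{arccosh}(1+2/n)\bigr)$ with $\mathrm{arccosh}(1+2/n)=\Theta(1/\sqrt n)$, so $T_{d'}(1+2/n)=\Theta\bigl(\exp(\Theta(d'/\sqrt n))\bigr)$ once $d'\gtrsim\sqrt n$. Inverting, $|q(u)|\leq 1/T_{d'}(1+2/n)=\exp(-\Theta(d/\sqrt n))$, which is exponential decay that is \emph{linear} in $d/\sqrt n$. The claimed bound $\exp(-cd^2/n)$ is exponential in $(d/\sqrt n)^2$, which is strictly smaller once $d\gg\sqrt n$. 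For any fixed absolute $c>0$, taking $d$ to be a large multiple of $\sqrt n$ (but still $\ll n$) makes $\exp(-cd^2/n)$ much smaller than $\exp(-\Theta(d/\sqrt n))$, so the single-Chebyshev construction provably fails to certify the stated bound in that regime. This regime is not a corner case: in the application inside the proof of Theorem~\ref{thm: upper} the claim is invoked with $d=\Theta(R^{1/4}\sqrt{Nk\log N})$ and promise weight $n=N$, so $d^2/N=\Theta(\sqrt R\,k\log N)\gg 1$ while $d/\sqrt N=\Theta(R^{1/4}\sqrt{k\log N})$; the linear-in-$d/\sqrt n$ decay would force $d=\tilde\Omega(R)$ instead of $\tilde O(R^{3/4})$, so the upper bound would evaporate.

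What is actually needed is the stronger fact that the indicator of $0$ on $\{0,\dots,n\}$ can be approximated to error $\epsilon$ by a univariate polynomial of degree $O(\sqrt{n\log(1/\epsilon)})$, equivalently $E(\mathbf 1_{\{0\}},d)\le\exp(-\Omega(d^2/n))$ for all $d\le n$. This is the tight bound for $\adeg_\epsilon(\OR_n)$ (Buhrman--Cleve--de~Wolf--Zalka; also Kahn--Linial--Samorodnitsky), and its construction is more than one rescaled $T_m$ — it uses a more delicate Chebyshev-type argument (or a quantum amplitude-amplification algorithm converted to a polynomial via Beals et al.). Sherstov's Corollary 4.7 builds on this sharper univariate ingredient. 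To salvage your write-up you should cite or reprove the $O(\sqrt{n\log(1/\epsilon)})$ univariate bound rather than asserting that the single-Chebyshev growth formula gives ``at least as strong'' decay; as written, that assertion is the precise point at which the argument breaks.
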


\begin{definition}\label{def: norm_conjunction}
Consider any positive integer $n$ and any function $f:\pmone^n\to\R$. Define the \emph{conjunction norm} of $f$, which we denote by $\rho(f)$, to be 
\[
\min\bra{\sum_{A\subseteq{[n]}}\sum_{B\subseteq{[n]}} |C_{A,B}| : f(x) = \sum_{A\subseteq{[n]}}\sum_{B\subseteq{[n]}} C_{A,B} \left(\prod_{i \in A}\frac{1+x_i}{2}\right)  \left(\prod_{j \in B}\frac{1-x_j}{2}\right), \quad C_{A,B}\in \R}.
\]

\end{definition}

We now state some simple observations about the conjunction norm which we do not prove here. See, for example,~\cite[Proposition 2.4]{algopoly}.
\begin{fact}

Let $m, n$ be positive integers, $f,g:\pmone^n \to \R$ be any functions, and $p:\R\to\R$ be any degree-$m$ polynomial of the form $p(x)=\sum_{i=0}^m a_ix^i, a_i\in \R$. Then $\rho$ is well defined and satisfies
\begin{align}
    \rho(a\cdot f) & = |a|\rho(f),\quad \text{for any $a\in\R$},\label{eqn: rho_constant_mult}\\
    \rho(f+g) &\leq \rho(f) + \rho(g),\label{eqn: rho_add}\\
    \rho(f\cdot g) &\leq \rho(f) \cdot \rho(g), \label{eqn: rho_mult}\\
    \rho(p \circ g) &\leq (\max\{1, \rho(g)\})^m\cdot \sum_{i=0}^m|a_i|.\label{eqn: rho_circ}
\end{align}
\end{fact}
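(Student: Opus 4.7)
The plan is to verify each property by direct manipulation of conjunction representations. First I would establish well-definedness of $\rho(f)$: for any $y \in \pmone^n$, the point indicator $\mathbf{1}[x = y]$ is itself a conjunction (take $A = \{i : y_i = 1\}$ and $B = \{j : y_j = -1\}$), so $f = \sum_{y \in \pmone^n} f(y)\cdot \mathbf{1}[x = y]$ is always a valid representation. The set of valid representations is a finite-dimensional closed affine subspace of coefficient space, on which the $\ell_1$-norm is continuous and has compact sublevel sets, so the infimum is attained.

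For properties $(1)$ and $(2)$, I would take representations $f = \sum_{A,B} C_{A,B}\,\chi_{A,B}$ and $g = \sum_{A,B} D_{A,B}\,\chi_{A,B}$ achieving the respective minima, writing $\chi_{A,B}$ for the conjunction of Definition~\ref{def: conjunction}. Then $af$ and $f+g$ are represented by $\sum_{A,B} (aC_{A,B})\, \chi_{A,B}$ and $\sum_{A,B} (C_{A,B} + D_{A,B})\, \chi_{A,B}$ respectively, and the inequalities follow from $|aC_{A,B}| = |a|\cdot|C_{A,B}|$ together with the scalar triangle inequality; equality in $(1)$ is obtained by applying the same bound in reverse to $f = a^{-1}(af)$ when $a \neq 0$, and the case $a = 0$ is trivial.

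The main step, and the only one requiring a genuine observation, is $(3)$. Here I would exploit the algebraic identities on $\pmone^n$: $\left(\frac{1+x_i}{2}\right)^2 = \frac{1+x_i}{2}$, $\left(\frac{1-x_i}{2}\right)^2 = \frac{1-x_i}{2}$, and $\frac{1+x_i}{2} \cdot \frac{1-x_i}{2} = 0$. These imply that, as functions on $\pmone^n$, the product $\chi_{A,B} \cdot \chi_{A',B'}$ equals $\chi_{A \cup A',\, B \cup B'}$ when $(A \cup A') \cap (B \cup B') = \emptyset$ and equals $0$ otherwise. Multiplying the optimal representations of $f$ and $g$ term-by-term and applying the triangle inequality then yields $\rho(fg) \leq \rho(f)\rho(g)$.

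Finally, $(4)$ follows as a corollary of $(1)$, $(2)$, $(3)$. Writing $p \circ g = \sum_{i=0}^m a_i g^i$, iterating $(3)$ gives $\rho(g^i) \leq \rho(g)^i$ for $i \geq 1$, while $\rho(g^0) = \rho(1) \leq 1$ since the constant $1$ is the empty conjunction $\chi_{\emptyset, \emptyset}$. Because $\rho(g)^i \leq (\max\{1, \rho(g)\})^m$ for every $0 \leq i \leq m$, properties $(1)$ and $(2)$ give $\rho(p \circ g) \leq \sum_{i=0}^m |a_i|\, \rho(g^i) \leq (\max\{1, \rho(g)\})^m \sum_{i=0}^m |a_i|$. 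I expect the product identity in $(3)$ to be the only step worth thinking about; properties $(1)$, $(2)$, and $(4)$ are essentially bookkeeping once that identity is in hand.
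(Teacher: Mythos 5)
Your proof is correct, and each of the four properties is argued soundly: well-definedness via the point-indicator conjunctions and a compactness argument, homogeneity and subadditivity via perturbing optimal representations, submultiplicativity via the idempotence/annihilation identities $\bigl(\tfrac{1\pm x_i}{2}\bigr)^2=\tfrac{1\pm x_i}{2}$ and $\tfrac{1+x_i}{2}\cdot\tfrac{1-x_i}{2}=0$ on $\pmone$, and the composition bound by expanding $p\circ g=\sum_i a_i g^i$ and combining the earlier three properties (with the $i=0$ term handled by $\rho(1)\le 1$). The paper itself does not prove this Fact but defers to \cite[Proposition 2.4]{algopoly}; your argument is the standard one and is the natural route.
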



\section{Outline of Proof of Main Theorem}
\label{sec:detailedoutline}
Our main theorem is as follows.

\begin{theorem}\label{thm: main}
For $R \in \mathbb{N}$ sufficiently large, $2 \leq k\leq \frac{\log R}{4}$, and some $N=\Theta(k^{k/2}R)$, \begin{equation}
\adeg(\DIST^k_{N,R+N}) = \Omega\left(\frac{1}{4^k k^2}\cdot\frac{1}{\log^{7/2}R}\cdot R^{\frac{3}{4}-\frac{1}{4k}}\right).
\end{equation}
\end{theorem}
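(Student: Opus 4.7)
The plan is to invoke Lemma~\ref{lem: dualitypromise} by exhibiting an explicit dual witness for the partial function $(\OR_R \circ \THR^k_N)^{\leq N}$ and to appeal to the standard reduction (implicit in~\cite{BT17}, building on~\cite{ambainissmallrange}) from $k$-distinctness with range $R+N$ to this partial function. The witness will be a post-processed version of the three-fold dual block composition $\Gamma := \Theta \star \phi \star \psi_T$ (well-defined by associativity, Eq.~\eqref{eqn: dbcassociative}), multiplied by a degree-$\eta$ polynomial $p_\eta$ that kills inputs producing between $1$ and $\eta$ false-positive errors at the outer $\OR_{R/M}$ level, and finally truncated to small-Hamming-weight support via Lemma~\ref{lem: RS}. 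Here $\Theta$ is the canonical $\OR_{R/M}$ witness of Claim~\ref{clm: canonical_dual_OR}, $\phi$ is the half-half witness on $M = \Theta(4^k)$ bits of Claim~\ref{claim: halfhalf}, and $\psi_T$ is the $\THR^k_N$ witness of Claim~\ref{clm: final_psi_construction}.

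The construction proceeds as follows. Instantiate $\psi_T$ via Claim~\ref{clm: final_psi_construction} with $T$ chosen to saturate the pure-high-degree bound~\eqref{eqn:thr-phd}, yielding $\delta^+_{\THR^k_N,\psi_T} \lesssim 4^{-k}/(\sqrt{N}\log N)$ (Eq.~\eqref{eqn:thr-pos-corr}) and $\delta^-_{\THR^k_N,\psi_T} \leq 1/2 - 2/4^k$ (Eq.~\eqref{eqn:thr-neg-corr}). Next, Lemma~\ref{lem: halfhalfamplification} applied with the half-half function $\phi$ on $M$ bits amplifies $\delta^-$ exponentially (Eq.~\eqref{eqn: delta_minus_amplification}) to $(2\delta^-)^M/2$, which is inverse-polynomial for $M = \Theta(4^k)$, while $\delta^+$ of $\phi \star \psi_T$ is at most $M$ times $\delta^+_{\psi_T}$ (Eq.~\eqref{eqn: delta_plus_amplification}). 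Dual block composing on the left with $\Theta$ (pure high degree $\Omega(\sqrt{R/M})$, correlation $\geq 3/5$ with $\OR_{R/M}$) then produces $\Gamma$ as a candidate witness for $\OR_{R/M} \circ \OR_M \circ \THR^k_N \equiv \OR_R \circ \THR^k_N$, with $\phd(\Gamma) = \Omega\bigl(\sqrt{R/M}\cdot M \cdot \phd(\psi_T)\bigr)$ by Eq.~\eqref{eqn: phdmult}.

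The heart of the argument is the correlation analysis of $\Gamma \cdot p_\eta$ with the target function, where $\eta \approx c \sqrt{R/M}$ and, crucially, $p_\eta$ is the polynomial of Lemma~\ref{lem: sherstovpk} redefined so that its input variables indicate \emph{only} false-positive errors at each outer block (exploiting the fact that $\OR_{R/M}$ has $(-1)$-certificate complexity $1$, so false-negative errors are benign and need not be killed). I would partition the inputs to the outer $\OR_{R/M}$ into three classes: (i) inputs with no false-positive errors, on which $p_\eta$ evaluates to $\eta!$ by Eq.~\eqref{eqn: p_eta_all_false}, so that $\Gamma \cdot p_\eta$ simply rescales $\Gamma$ and inherits correlation $\gtrsim \eta! \cdot \langle \Theta, \OR_{R/M}\rangle$; (ii) inputs with between $1$ and $\eta$ false-positive errors, killed by $p_\eta$; (iii) inputs with more than $\eta$ false-positive errors, whose contribution is bounded by combining Lemma~\ref{lem: expectation_multilinear} with the expectation bound~\eqref{eqn: p_eta_bound}. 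The ``distortion factor'' $A = \binom{R/M}{\eta+1}\tau^{\eta+1}/(1-\tau)^{R/M}$ there is $o(1)$ precisely when $\tau := 2\delta^+_{\phi\star\psi_T} \ll \sqrt{M/R}$, a quadratic weakening of the $\tau \ll M/R$ constraint required by~\cite{BKT18} in their un-modified dual block composition. This relaxation is exactly what permits a polynomial increase in the allowable $T$ and hence in $\phd(\psi_T)$, translating into the $1/(2k) \to 1/(4k)$ improvement in the final exponent.

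Multiplying by $p_\eta$ reduces pure high degree by exactly $\eta$, which is a constant-factor loss since $\eta \ll \phd(\Theta)$. Finally I would invoke Lemma~\ref{lem: RS} together with the strong-decay bound of Claim~\ref{clm: strongdualdecay} to zero out the $\ell_1$-mass of $\Gamma \cdot p_\eta$ on inputs of Hamming weight greater than $N$, losing only a negligible fraction of the correlation and a negligible additive amount of pure high degree. Setting $N = \Theta(k^{k/2} R)$, $M = \Theta(4^k)$, $T = \Theta(k R N^{1/(2k)} \log N)$, and $\eta \approx c\sqrt{R/M}$ then produces a dual polynomial of pure high degree $\widetilde\Omega\bigl(R^{3/4 - 1/(4k)}/(4^k k^2)\bigr)$. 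I anticipate the main technical obstacle to be carefully bounding the class (iii) contribution: making the above heuristic rigorous requires simultaneously exploiting the weak-decay bound on $\psi_T$ (Eq.~\eqref{eqn:thr-decay}), the amplification structure of $\phi\star\psi_T$, and the coefficient bound~\eqref{eqn: p_hat_bound}, and the error estimates must be uniform across classes so that the $o(1)$-level bound on $A$ genuinely dominates the product distortion introduced by multiplication by $p_\eta$.
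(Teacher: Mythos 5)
Your overall plan matches the paper's proof essentially step for step: build $\theta \star \phi \star \psi_T$ from Claims~\ref{clm: canonical_dual_OR}, \ref{claim: halfhalf} and \ref{clm: final_psi_construction}, multiply by $p_\eta(\alpha(\cdot),\dots,\alpha(\cdot))$ where $\alpha$ flags only \emph{false-positive} errors of $\phi\star\psi_T$, normalize, truncate to Hamming weight $\leq N$ via Claim~\ref{clm: strongdualdecay} and Lemma~\ref{lem: RS}, and finish with Claim~\ref{clm: dist_connection} to pass to $\DIST^k_{N,R+N}$. Your identification of the key quantitative gain --- that killing only false positives relaxes the constraint on $\tau = \epsilon^+_{\phi\star\psi_T}$ from $\ll M/R$ to $\ll \sqrt{M/R}$ because of how $A = \binom{R/M}{\eta+1}\tau^{\eta+1}/(1-\tau)^{R/M}$ behaves with $\eta\approx c\sqrt{R/M}$ --- is exactly the paper's central insight, and your $\eta$, $M=\Theta(4^k)$ and $N=\Theta(k^{k/2}R)$ are all set correctly.

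However, your parameter $T = \Theta(kRN^{1/(2k)}\log N)$ is wrong and would make the argument fail. Since $N=\Theta(k^{k/2}R)$ and $N^{1/(2k)}>1$, this $T$ exceeds $R$ (and $N$), so it violates the hypotheses $T\leq R$ of Claim~\ref{clm: strongdualdecay} and $T\leq N$ of Claim~\ref{clm: final_psi_construction}; the strong-decay estimate you plan to cite is simply unavailable. Even disregarding that, substituting this $T$ into \eqref{eqn:thr-phd} gives $\phd(\psi_T)=\Theta(\sqrt{R}/2^k)$ and hence $\phd(\Gamma)=\Theta(R/4^k)$, which contradicts the $\tilde\Omega\bigl(R^{3/4-1/(4k)}/(4^k k^2)\bigr)$ you claim at the end. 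The correct setting (used in the paper) is $T=\sqrt{R}$. The point is that $T$ is \emph{not} free to be ``saturated'': it is determined by balancing two competing quantities. The pure high degree $\phd(\Gamma)\propto \sqrt{R/4^k}\cdot\sqrt{T}N^{-1/(4k)}$ grows like $\sqrt{T}$, while the decay exponent $\Delta\propto \beta R$ with $\beta\propto 1/\sqrt{T}$ from Claim~\ref{clm: strongdualdecay} shrinks like $1/\sqrt{T}$, and Lemma~\ref{lem: RS} only lets you truncate without hurting pure high degree when $\phd(\Gamma)\lesssim\Delta$. Equating $\sqrt{RT}\sim R/\sqrt{T}$ forces $T=\Theta(\sqrt{R})$, and this choice is precisely what places both quantities at $\tilde\Theta(R^{3/4-1/(4k)})$, giving the exponent in the theorem. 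With your $T$ the whole parameter calculus unravels; with $T=\sqrt{R}$ the rest of your outline goes through as in the paper.
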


Ambainis~\cite{ambainissmallrange} showed that the approximate degree\footnote{There are several
different conventions used in the literature when
 defining
the domain of functions such as $k$-distinctness.
The convention used by Ambainis~\cite{ambainissmallrange} considers the input to be specified 
by $N \cdot R$ variables $y_{1, 1}, \dots, y_{N, R}$, where $y_{i, j}=-1$ if and only if the $i$th list
item in the input equals range element $j$
(i.e., it is promised that for each $i$, $y_{i, j}=-1$
for exactly one $j$).
We use the convention that
the input is specified by $N \lceil \log_2 R\rceil$
bits. It is well known (and not hard to show) that 
conversion between the two conventions affects approximate degree by at most a factor of $\lceil \log_2 R \rceil$.} of functions that are symmetric (both with respect to range elements and with respect to domain elements) is the same for all range sizes greater than or equal to $N$. As a corollary, we obtain the following.

\begin{corollary}\label{cor: ambrangereduction}
For $R \in \mathbb{N}$ sufficiently large, $2 \leq k\leq \frac{\log R}{4}$, and some $N=\Theta(k^{k/2}R)$, \begin{equation}
\adeg(\DIST^k_{N,N}) = \Omega\left(\frac{1}{4^k k^2}\cdot\frac{1}{\log^{7/2}R}\cdot R^{\frac{3}{4}-\frac{1}{4k}}\right).
\end{equation}
\end{corollary}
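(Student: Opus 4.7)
The plan is to derive Corollary \ref{cor: ambrangereduction} as an almost immediate consequence of Theorem \ref{thm: main} by invoking the range-reduction result of Ambainis~\cite{ambainissmallrange}. The strategy has essentially two ingredients: a symmetry/invariance observation about $\DIST^k_{N,R}$, and the quoted fact that approximate degree is insensitive to the range size once the range is at least as large as the domain.

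First, I would verify (really, observe) that $\DIST^k_{N,R}$ satisfies the symmetry hypothesis needed to apply Ambainis' result. The value of $\DIST^k_{N,R}(s_1,\dots,s_N)$ depends only on the multiset of values appearing in the input, and more specifically only on the multiset of multiplicities $\{|\{i : s_i = r\}| : r \in [R]\}$. In particular, $\DIST^k_{N,R}$ is invariant under any permutation of the domain coordinates $[N]$ and under any relabeling of the range $[R]$. These are exactly the two symmetries (domain-symmetry and range-symmetry) that Ambainis' theorem requires.

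Next, I would apply Ambainis' theorem to conclude that for every pair of range sizes $R_1, R_2 \geq N$, one has $\adeg(\DIST^k_{N,R_1}) = \adeg(\DIST^k_{N,R_2})$. Since in the setting of Theorem \ref{thm: main} we have $R + N \geq N$, and trivially $N \geq N$, this equality applies with $R_1 = R+N$ and $R_2 = N$, giving
\[
\adeg(\DIST^k_{N,N}) = \adeg(\DIST^k_{N,R+N}).
\]
Plugging in the bound of Theorem \ref{thm: main} on the right-hand side then yields exactly the bound claimed in Corollary \ref{cor: ambrangereduction}.

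There is essentially no obstacle here beyond making sure the symmetry hypothesis is stated correctly and that the quantitative parameters match. The only thing worth double-checking is the interplay between the two conventions for representing the input (the $N \lceil \log_2 R\rceil$-bit Boolean encoding used in this paper versus the $N \cdot R$-variable one-hot encoding used by Ambainis), since Ambainis' theorem is naturally stated in the one-hot convention. The relevant conversion affects approximate degree by at most a factor of $\lceil \log_2 R\rceil$, which is absorbed into the $\log^{7/2}R$ factor already present in the statement (up to adjusting the constant hidden in the $\Omega(\cdot)$).
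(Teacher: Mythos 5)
Your high-level route is the same as the paper's: observe that $\DIST^k_{N,R}$ is symmetric under permutations of the $N$ list positions and under relabelings of the range $[R]$, invoke the range-insensitivity result of Ambainis~\cite{ambainissmallrange} (quoted in the paper just before the statement of Corollary~\ref{cor: ambrangereduction}) to equate the approximate degrees for range sizes $N$ and $R+N$, and then substitute the bound of Theorem~\ref{thm: main}. The paper supplies no other argument for the corollary, so at that level the approaches coincide.

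The one place where your write-up does not hold up as stated is the final paragraph about the encoding conventions. You assert that the $\lceil\log_2 R\rceil$ conversion overhead between Ambainis' one-hot encoding and the paper's $N\lceil\log_2 R\rceil$-bit encoding ``is absorbed into the $\log^{7/2}R$ factor already present in the statement (up to adjusting the constant hidden in the $\Omega(\cdot)$).'' That cannot be literally correct: a factor of $\log R$ is not a constant and cannot be hidden in the leading constant of an $\Omega(\cdot)$ that already exhibits an explicit $\log^{7/2} R$ denominator --- if the argument genuinely incurred an extra factor of $\log R$, the bound would degrade to $\log^{9/2}R$ in the denominator. The cleaner observation to make is directional: if $\adeg^{\mathrm{bin}}$ and $\adeg^{\mathrm{onehot}}$ denote the two variants, then one always has $\adeg^{\mathrm{bin}}(f) \geq \adeg^{\mathrm{onehot}}(f)$ (substituting degree-$1$ expressions for the binary bits in terms of the one-hot variables), so passing from a one-hot lower bound to a binary lower bound is free. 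The only potential loss is in the other direction, when extracting a one-hot lower bound on $\DIST^k_{N,R+N}$ from the binary lower bound of Theorem~\ref{thm: main}; but the $\Omega(1/\log R)$ factor in Claim~\ref{clm: dist_connection}, which is how Theorem~\ref{thm: main} is derived from Theorem~\ref{thm: mainsummary} in the first place, is exactly where the paper pays for switching representations, so applying Ambainis at the level of the one-hot witness underlying that derivation does not cost a further $\log R$. You should either make this explicit, or at minimum drop the claim that a $\log R$ factor can be ``absorbed into the constant.''
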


We require the following relation between approximate degree of $k$-distinctness and a related Boolean function; this relationship follows from \cite[Proposition 21 and Corollary 26]{BKT18Arxiv}.

\begin{claim}[\cite{BKT18Arxiv}]\label{clm: dist_connection}
Let $N, R\in\N$ and $2\leq k \leq N$ be any integer. Then for any $\epsilon>0$,
\begin{equation}
    \adeg_\epsilon(\DIST^k_{N,R+N}) = \Omega\left(\frac{1}{\log R}\cdot\adeg_\epsilon(\OR_R\circ\THR^k_N)^{\leq N}\right).
\end{equation}
\end{claim}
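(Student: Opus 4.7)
The plan is to construct a dual polynomial for $\DIST^k_{N, R+N}$ out of a dual polynomial for $(\OR_R \circ \THR^k_N)^{\leq N}$, with a $\log R$ factor loss in the pure high degree, and then invoke Lemma~\ref{lem: dualitypromise} to convert between approximate degree lower bounds and dual witnesses. I would first pass to the ``unary'' representation of $\DIST^k_{N, R+N}$, in which inputs are $y \in \pmone^{N(R+N)}$ subject to the promise that each row $(y_{j, r})_{r \in [R+N]}$ has exactly one $-1$. A standard argument shows that the unary and binary approximate degrees of $\DIST^k_{N, R+N}$ differ by at most a factor of $\lceil \log_2(R+N)\rceil$: each binary bit of an item's range value is degree-$1$ in the unary variables on promise inputs, while each unary variable is a conjunction of $\lceil \log_2(R+N)\rceil$ binary bits. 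The $\log R$ term in the claim comes precisely from this encoding change.

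Given a dual polynomial $\Phi : \pmone^{RN} \to \R$ witnessing a lower bound of $D$ on $\adeg_\epsilon((\OR_R \circ \THR^k_N)^{\leq N})$, I would construct a dual polynomial $\Psi$ for the unary-encoded $\DIST^k_{N, R+N}$. The construction exploits that when items of $\DIST$ mapping outside $[R]$ are forced into distinct ``private'' slots in $[R+1, R+N]$, no range element in $[R+1, R+N]$ is hit even twice, so $\DIST^k_{N, R+N}$ reduces to deciding whether some element of $[R]$ is hit at least $k$ times, which is exactly $(\OR_R \circ \THR^k_N)$ under the identification $x_{i, j} = y_{j, i}$. Concretely, for each $x \in (\pmone^{RN})^{\leq N}$ whose columns each contain at most one $-1$, I set $y_{j, i} = x_{i, j}$ for $i \in [R]$, $y_{j, R+j} = -1$ whenever column $j$ of $x$ is all $+1$, and $y_{j, R+j'} = +1$ for all other $(j, j')$-pairs. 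Defining $\Psi(y) = c \cdot \Phi(x(y))$ on the image of this embedding and $\Psi(y) = 0$ elsewhere, with $c$ chosen so $\|\Psi\|_1 = 1$, unit $\ell_1$-norm is automatic, and the correlation condition $\langle \Psi, \DIST^k_{N, R+N} \rangle > \epsilon$ inherits from the correlation of $\Phi$ with $\OR_R \circ \THR^k_N$ since the two functions agree on the support of $\Psi$.

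The crux, and the step I expect to be the main obstacle, is verifying pure high degree, together with handling $x$'s whose columns contain more than one $-1$ (which do not correspond to any valid $\DIST$ input under the naive embedding). I would address the latter either by first symmetrizing $\Phi$ so that without loss it is supported on ``column-unique'' $x$'s, using the block permutation invariance of $(\OR_R \circ \THR^k_N)^{\leq N}$, or by extending $x \mapsto y(x)$ via a canonical convention that remains compatible with the orthogonality relations satisfied by $\Phi$. Once the embedding is well-defined on the entire support of $\Phi$, the verification that $\phd(\Psi) > D$ reduces to the following: any unary $y$-monomial of degree $d$, restricted to the image of $y(\cdot)$, agrees with a polynomial of degree $d$ in the $x$-variables (each unary variable either equals some $x_{i, j}$ or is forced to a constant on the support), so that $\langle \Psi, \textnormal{monomial}\rangle = c \langle \Phi, \textnormal{degree-}d \textnormal{ polynomial in } x\rangle = 0$ for $d \leq D$.

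Finally, I would translate the unary dual witness $\Psi$ back to the binary encoding, at the cost of a $\log(R+N)$ factor in the pure high degree (via the standard encoding conversion described in the first paragraph), and apply Lemma~\ref{lem: dualitypromise} to obtain the stated lower bound $\adeg_\epsilon(\DIST^k_{N, R+N}) = \Omega(\adeg_\epsilon((\OR_R \circ \THR^k_N)^{\leq N})/\log R)$.
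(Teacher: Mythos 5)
The paper does not prove this claim; it simply cites \cite{BKT18Arxiv}, so there is no in-paper proof to compare against. Judged on its own merits, your proposal identifies the right ingredients (unary encoding, private range slots $R+1,\dots,R+N$ for empty items, the bijection between column-unique $x$'s and valid $\DIST$ inputs), but has a genuine gap at precisely the step you flag as the crux: inputs $x \in (\pmone^{RN})^{\leq N}$ for which some column carries more than one $-1$. The Hamming-weight promise $|x|\leq N$ does not force column-uniqueness --- all $N$ of the $-1$'s can sit in a single column --- and a general dual witness $\Phi$ for $(\OR_R\circ\THR^k_N)^{\leq N}$ will have mass on such inputs. Your $\Psi$ retains only the mass of $\Phi$ on column-unique inputs and then re-normalizes, so $\langle\Psi,\DIST^k_{N,R+N}\rangle$ is proportional to $\langle \Phi\cdot I[\text{col.-unique}],\,\OR_R\circ\THR^k_N\rangle$, which is \emph{not} controlled by $\langle\Phi,\OR_R\circ\THR^k_N\rangle>\epsilon$: the column-unique part can carry little or even negative correlation. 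The pure-high-degree check has the analogous defect: for a monomial $m$ of degree $d$ in the $y$-variables, you can only show $\langle\Psi,m\rangle\propto\langle\Phi,\,q\cdot I[\text{col.-unique}]\rangle$ for some degree-$d$ polynomial $q$; multiplying by the high-degree indicator $I[\text{col.-unique}]$ destroys the orthogonality given by $\phd(\Phi)>D$.

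Neither of your two proposed repairs closes this. Symmetrizing $\Phi$ over the automorphism group of $(\OR_R\circ\THR^k_N)^{\leq N}$ preserves $\ell_1$-norm, pure high degree, correlation, and Hamming-weight support, but does not relocate mass off the non-column-unique inputs: a symmetrized witness still assigns them weight. And any ``canonical convention'' extending $x\mapsto y(x)$ to columns with several $-1$'s cannot simultaneously keep $\DIST(y(x))=(\OR_R\circ\THR^k_N)(x)$ and keep each coordinate $y_{j,r}(x)$ low degree (e.g.\ ``smallest $i$ with $x_{i,j}=-1$'' is a high-degree selection function). Bridging the column-unique restriction and the genuine $(\pmone^{RN})^{\leq N}$ domain requires a separate argument --- this is exactly the content of Proposition~21 and Corollary~26 of \cite{BKT18Arxiv}, which use an Ambainis-style averaging/symmetrization over item assignments --- and your sketch does not supply a substitute.
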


To prove Theorem~\ref{thm: main}, Claim~\ref{clm: dist_connection} implies that it suffices to prove a lower bound on $\adeg(\OR_{R} \circ \THR^k_N)^{\leq N}$. 

\begin{theorem}\label{thm: mainsummary}
For $R \in \mathbb{N}$ sufficiently large, $2 \leq k\leq \frac{\log R}{4}$, and some $N=\Theta(k^{k/2}R)$,
\begin{equation}
   \adeg((\OR_{R} \circ \THR^k_N)^{\leq N}) = \Omega\left(\frac{1}{4^k k^2}\cdot\frac{1}{\log^{5/2}R}\cdot R^{\frac{3}{4}-\frac{1}{4k}}\right). 
\end{equation}
\end{theorem}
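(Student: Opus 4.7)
By Lemma~\ref{lem: dualitypromise} it suffices to construct a dual polynomial $\zeta$ for $(\OR_R \circ \THR^k_N)^{\leq N}$ with $\|\zeta\|_1 = 1$, pure high degree matching the target bound, constant correlation with $\OR_R \circ \THR^k_N$, and support in $(\pmone^{RN})^{\leq N}$. Following the template of~\cite{BKT18Arxiv}, the plan is to factor $\OR_R = \OR_{R/M} \circ \OR_M$ for some integer $M$ (to be set as a constant depending on $k$) and to begin with the base witness $\Psi := \theta \star \phi \star \psi_T$, where $\theta$ is Spalek's dual for $\OR_{R/M}$ from Claim~\ref{clm: canonical_dual_OR}, $\phi$ is the half-half function of Claim~\ref{claim: halfhalf} on $M$ bits, and $\psi_T$ is the dual polynomial for $\THR^k_N$ of Claim~\ref{clm: final_psi_construction}. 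Lemma~\ref{lem: dbproperties} gives $\|\Psi\|_1 = 1$ and $\phd(\Psi) \geq \phd(\theta) \cdot \phd(\psi_T)$, while Lemma~\ref{lem: halfhalfamplification} shows that the inner composition $\phi \star \psi_T$ has exponentially small false-negative error against $\OR_M \circ \THR^k_N$ at the cost of only a factor-$M$ blow-up in false-positive error.

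The novelty over~\cite{BKT18Arxiv} is to perturb $\Psi$ by a low-degree polynomial factor that kills precisely the outer-block configurations most responsible for corrupting the correlation estimate. Choose $\eta$ of order $\sqrt{R/M}$ slightly below $\phd(\theta)$, let $p_\eta$ be the polynomial of Lemma~\ref{lem: sherstovpk}, and let $\xi_i \in \pmone$ be the indicator that is $-1$ exactly when block $i$ is a \emph{false positive} for $\phi \star \psi_T$ against $\OR_M \circ \THR^k_N$, i.e., $\sgn((\phi \star \psi_T)(x_i)) = 1$ but $(\OR_M \circ \THR^k_N)(x_i) = -1$. The modified witness (pre-normalization) is $\zeta_0 := \Psi \cdot p_\eta(\xi_1,\ldots,\xi_{R/M})$. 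Since $p_\eta$ vanishes on inputs of Hamming weight in $\{1,\ldots,\eta\}$, this factor annihilates every outer input whose false-positive count is nonzero but at most $\eta$. In contrast to Sherstov's original~\cite{SDPT} construction, whose analogous factor counts all sign-disagreements symmetrically, our $\xi$ is blind to false negatives; this asymmetry exploits the low $(-1)$-certificate complexity of $\OR_{R/M}$ and relaxes the required false-positive bound from Equation~\eqref{secondokeq} to Equation~\eqref{secondokeq3}, a factor-$\eta$ slack that ultimately lifts the exponent from $3/4 - 1/(2k)$ to $3/4 - 1/(4k)$.

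The correlation analysis decomposes $\langle \zeta_0, \OR_R \circ \THR^k_N\rangle$ by the number of false-positive outer blocks. The surviving cases are ``$0$ false positives'' and ``more than $\eta$ false positives''. In the former, the sign predictions $\sgn((\phi \star \psi_T)(x_i))$ track $(\OR_M \circ \THR^k_N)(x_i)$ on every block that can flip the outer OR, so by Equation~\eqref{eqn: p_eta_all_false} the inner product collapses essentially to $p_\eta(1^{R/M}) \cdot \langle \theta, \OR_{R/M}\rangle$ weighted by the product distribution, which is $\Omega(1) \cdot \|\zeta_0\|_1$. In the latter, Equation~\eqref{eqn: p_eta_bound} combined with the tight false-positive bound Equation~\eqref{eqn:thr-pos-corr} of $\psi_T$ (amplified through Equation~\eqref{eqn: delta_plus_amplification}) shows the contribution is negligible; here the factor-$\eta$ relaxation encoded in Equation~\eqref{secondokeq3} is precisely what allows the bound to close. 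To impose the Hamming-weight constraint, use Claim~\ref{clm: strongdualdecay} to bound the mass of $\Psi$ on heavy inputs by $(2NR)^{-2\Delta}$, absorb the $\ell_1$ blow-up from the factor $p_\eta(\xi)$ via Equation~\eqref{eqn: p_hat_bound}, then apply Lemma~\ref{lem: RS} to produce a small-norm correction agreeing with $\zeta_0$ on heavy inputs; renormalizing the difference yields $\zeta$. Balancing $\phd(\Psi)$, $\Delta$, and $\eta$ leads to the parameter choice $M = O(1)$ and $T \approx \sqrt{RM}$, yielding the target $\Omega(R^{3/4 - 1/(4k)})$ up to the polylog and $k$-dependent factors in the theorem.

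The main technical obstacle is the correlation analysis of the perturbed witness $\zeta_0$. Two intertwined requirements must be met: (i) the polynomial $p_\eta(\xi_1,\ldots,\xi_{R/M})$ must be arranged (on the support of $\Psi$) so that $\phd(\zeta_0)$ is diminished relative to $\phd(\Psi)$ by only $O(\eta)$ rather than by $\eta$ times the block length $N$, and (ii) the distortion that this factor introduces on the zero-false-positive inputs must be controlled to within constants using Equation~\eqref{eqn: p_eta_bound} together with the decay estimate Equation~\eqref{eqn:thr-decay} of $\psi_T$. Achieving both simultaneously, while exploiting the asymmetric treatment of false positives versus false negatives that enables the $1/(4k)$ rather than $1/(2k)$ exponent, is the technical heart of the argument and requires a substantial adaptation of Sherstov's~\cite{SDPT} dual block composition variant.
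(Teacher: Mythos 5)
Your overall strategy matches the paper's: start from the base witness $\theta \star \phi \star \psi_T$, multiply by a degree-$\eta$ factor $p_\eta(\cdot)$ that annihilates inputs with between $1$ and $\eta$ false-positive inner blocks, exploit the asymmetry between false positives and false negatives (which is the key to improving over \cite{BKT18}), kill the heavy-Hamming-weight tail via Claim~\ref{clm: strongdualdecay} and Lemma~\ref{lem: RS}, and balance $\phd$, $\Delta$, and $\eta$ to get $R^{3/4 - 1/(4k)}$. You also correctly isolate the two obstacles: that the pure-high-degree loss from the $p_\eta$ factor must scale like $\eta \cdot \phd(\phi\star\psi)$ rather than $\eta$ times the block length, and that the distortion on zero-false-positive inputs must be controlled.

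However, there is a concrete technical gap in how you define the argument that $p_\eta$ is applied to. You take $\xi_i \in \pmone$ to be the Boolean indicator of a false positive. The paper instead uses the real-valued function $\alpha$ of Equation~\eqref{eqn: alpha}, which takes three values: $1$ on true positives and false negatives, $-1$ on false positives, and the carefully chosen intermediate value $a^- = \frac{1 - 2\epsilon^+ - \epsilon^-}{1 - \epsilon^-}$ on true negatives. That choice of $a^-$ is precisely what makes $\E_{\mu_z}[\alpha(x_i)] = 1 - 2\epsilon^+$ for \emph{both} $z_i = 1$ and $z_i = -1$ (Claim~\ref{clm: alphaexpectation}), and consequently $\E_{\mu_z}[p_\eta(\alpha(x_1),\dots,\alpha(x_n))] = p_\eta(1-2\epsilon^+,\dots,1-2\epsilon^+)$ is the same for every $z$ (Equation~\eqref{eqn: alphaexpectationallindices}). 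This constancy is the linchpin of Claim~\ref{clm: gamma_l1_norm} (the $\ell_1$-norm collapses to a single factor) and of Claims~\ref{clm: piece3(a)}--\ref{clm: finalcorrelation} (every $z$-conditional term contributes the same normalizing factor $p_\eta(1-2\epsilon^+,\dots)$, so the correlation inequality reduces to a clean comparison against $\langle \theta, \OR \rangle$). With your Boolean $\xi_i$, you instead get $\E_{\mu_{z_i}}[\xi_i] = 1 - 2\epsilon^+$ when $z_i = 1$ but $\E_{\mu_{z_i}}[\xi_i] = 1$ when $z_i = -1$, so $\E_{\mu_z}[p_\eta(\xi)]$ equals $p_\eta$ evaluated at a point with $|z|$ coordinates equal to $1$ and $n - |z|$ coordinates equal to $1 - 2\epsilon^+$. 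This quantity differs across $z$'s by a multiplicative factor as large as $(1 - \epsilon^+)^{-n} = e^{\Theta(\sqrt{R}/\log R)}$, i.e., exponentially large. The normalization no longer factors, and the correlation calculation becomes a reweighting of $\langle\theta,\OR\rangle$ by Hamming-weight-dependent factors, which cannot be controlled using only the interface of Claim~\ref{clm: canonical_dual_OR}; you would need an explicit tail bound on $\theta$ that the paper never needs. Your sketch does not acknowledge or handle this. The real-valued $a^-$ correction, not just the asymmetric treatment of false positives, is the technical idea that makes the variant of Sherstov's dual block composition close in this setting.
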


 Note that the theorems above continue to yield non-trivial lower bounds for some values of $k=\omega(1)$. However for ease of exposition, we assume throughout this section that $k \geq 2$ is an arbitrary but fixed constant.

\medskip \noindent \textbf{Outline of the Proof of Theorem~\ref{thm: mainsummary}.} Towards proving Theorem~\ref{thm: mainsummary}, we construct a dual witness $\Gamma$ satisfying the following four conditions. 

\begin{itemize}
    \item \textbf{Normalization:} $\norm{\Gamma}_1=1$,
    \item \textbf{Pure high degree:} There exists a $D = \tilde{\Omega}\left(R^{\frac{3}{4} - \frac{1}{4k}}\right)$ such that for every polynomial $p:\pmone^{RN}\to\R$ of degree less than $D$, we have $\langle p,\Gamma \rangle = 0$, 
    \item \textbf{Correlation:} $\langle\Gamma,(\OR_R\circ \THR^k_N) \rangle>1/3$,
    \item \textbf{Exponentially little mass on inputs of large Hamming weight:} $\sum_{x\notin (\pmone^{RN})^{\leq N}}|\Gamma(x)| \leq (2NR)^{-\tilde{\Omega}\left(R^{\frac{3}{4} - \frac{1}{4k}}\right)}$ for all $x\notin (\pmone^{RN})^{\leq N}$.
\end{itemize}

Next, Lemma~\ref{lem: RS} implies existence of a function $\nu$ that equals $\Gamma$ on $x\notin (\pmone^{RN})^{\leq N}$, has pure high degree $\tilde{\Omega}\left(R^{\frac{3}{4} - \frac{1}{4k}}\right)$, and $\|\nu\|_1 \leq 1/10$.
The function $\W : \pmone^{RN} \to \R$ defined by $\mathcal{W}(x) := \frac{\Gamma(x)-\nu(x)}{\norm{\Gamma-\nu}_1}$ then satisfies the conditions in Equations~\eqref{eqn: zero_mass},~\eqref{eqn: l1_norm},~\eqref{eqn: final correlation} and~\eqref{eqn: phd} (see Section~\ref{sec: finalproof} for proofs). Theorem~\ref{thm: mainsummary} then follows by Lemma~\ref{lem: dualitypromise} and Lemma~\ref{lem: adegamp}.

\medskip \noindent \textbf{Organization of the rest of this section and the proof of Theorem~\ref{thm: mainsummary}.} The rest of this section is devoted towards providing a sketch of how we construct such a dual witness $\Gamma$.
In the next subsection we first sketch an outline of the approximate degree lower bound in~\cite{BKT18}, and in the subsequent subsection we elaborate on where our approach differs from theirs. Section~\ref{sec: mainproof} presents auxiliary lemmas that will be used in the formal proof of Theorem~\ref{thm: mainsummary}, while Section~\ref{sec:endofmainproof} contains the proof itself.

\subsection{Prior Work}
At a high level, we follow the same outline as followed in~\cite{BKT18}, who exhibited a dual witness $\Lambda$ witnessing $\adeg(\DIST^k_{N,R}) = \tilde{\Omega}\left(R^{\frac{3}{4}-\frac{1}{2k}}\right)$ for the same ranges of $k, N, R$ that we consider. In this section we sketch their construction.  Their dual witness takes the form $\Lambda = \theta \star \phi \star \psi$, where $\theta, \phi, \psi$  each have $\ell_1$-norm 1 and additionally satisfy the properties below.
\begin{itemize}
    \item The function $\psi$ satisfies:
    \begin{itemize}
        \item The false positive error between $\THR^k_N$ and $\psi$ is $O(1/N)$.
        \item The false negative error between $\THR^k_N$ and $\psi$ is at most $\frac{1}{2} - \frac{2}{4^k}$. 
        \item The pure high degree of $\psi$ is $\tilde\Omega(\sqrt{R}N^{-1/(2k)})$.
        \item $\psi$ satisfies a ``weak decay condition'', viz.~$\sum_{|x| = t}|\psi(x)| \leq \sigma \exp(-\beta t)/t^2$ for some constant $\sigma$ (for general $k$, the value of $\sigma$ only depends on $k$), and $\beta = \tilde\Omega(\sqrt{R}N^{1/(2k)})$.
        \end{itemize}
    \item The function $\phi$ is defined on $4^k$ inputs, and is defined as in Claim~\ref{claim: halfhalf}.
    \item $\theta$ is constructed as in Claim~\ref{clm: canonical_dual_OR} with $n = R/4^k$.
\end{itemize}

The facts that $\|\Lambda\|_1 = 1$ and $\phd(\Lambda) = \tilde\Omega(R^{3/4}N^{-1/(2k)})$ follow immediately from the definitions of $\theta, \phi, \psi$, and the fact that dual block composition preserves $\ell_1$-norm and causes pure high degree to increase multiplicatively (Lemma~\ref{lem: dbproperties}).

Next they use the fact that dual block composition is associative (Equation~\eqref{eqn: dbcassociative}) to express $\Lambda$ as $(\theta \star \phi) \star \psi$ and conclude using Claim~\ref{clm: strongdualdecay} that $\Lambda$ places exponentially small (in $R^{\frac{3}{4} - \frac{1}{2k}}$) mass on inputs in $\pmone^{RN}$ of Hamming weight larger than $N$.

It remains to show the correlation bound, i.e.,~$\langle \Lambda, \OR_R \circ \THR^k_N \rangle > 1/3$.
For the correlation analysis it is convenient to view $\Lambda$ as $\theta \star (\phi \star \psi)$.  The following is the outline of their correlation analysis.
\begin{enumerate}
    \item By construction, $\delta^+_{\THR^k_N, \psi} = O(1/N)$ and $\delta^-_{\THR^k_N, \psi} \leq \frac{1}{2} - \frac{2}{4^k}$.
    \item By Lemma~\ref{lem: halfhalfamplification}, the false positive error between $\OR_{4^k} \circ \THR^k_N$ and $\phi \star \psi$ remains $O(1/N)$, whereas the the false negative error between $\OR_{4^k} \circ \THR^k_N$ and $\phi \star \psi$ becomes a small enough constant.
    \item 
    As mentioned in Section~\ref{lowerboundoverview},
    the very low $(-1)$-certificate complexity of $\OR_R$
    renders false-negative errors benign. Thus
    the false-negative and false-positive error rates achieved in the last bullet point are sufficient 
    to ensure $\langle \theta \star (\phi \star \psi), \OR_{R/4^k} \circ (\OR_{4^k} \circ \THR^k_N)\rangle \geq 1/3$ by showing $\langle \theta \star (\phi \star \psi), \OR_{R/4^k} \circ (\OR_{4^k} \circ \THR^k_N)\rangle \approx \langle \theta, \OR_{R/4^k} \rangle$.
\end{enumerate}

Roughly, where we improve over this prior work is in item 3 above. Whereas \cite{BKT18} needed a false-positive error rate for $\phi \star \psi$ of $O(1/N)$ to ensure
that their final dual witness $\Lambda$ is well-correlated with $\OR_R \circ \THR^k_N$, we modify the construction of $\Lambda$ so 
that a false-positive error rate of roughly $1/\sqrt{N}$
suffices to ensure good correlation of the final
dual witness with $\OR_R \circ \THR^k_N$. 

\subsection{Our Construction}

As in the previous section, our construction of $\Gamma$ is also based on three dual witnesses. The functions $\theta, \phi$ are exactly the same as in the previous section. Our $\psi$ is a fairly straightforward modification of the one described in the previous section, that has a larger pure high degree, at the cost of a worse false positive error. A little more formally, our functions $\theta, \phi, \psi$ have $\ell_1$-norm equal to 1, and additionally satisfy the following.

\begin{itemize}
    \item The function $\psi$ satisfies:

    \begin{itemize}
        \item The false positive error between $\THR^k_N$ and $\psi$ is $\tilde O(1/\sqrt{N})$.
        \item The false negative error between $\THR^k_N$ and $\psi$ is at most $\frac{1}{2} - \frac{2}{4^k}$.
        \item The pure high degree of $\psi$ is $\tilde\Omega(\sqrt{R}N^{-1/(4k)})$.
        \item $\psi$ satisfies a ``weak decay condition'', viz.~$\sum_{|x| = t}|\psi(x)| \leq \sigma \exp(-\beta t)/t^2$ for some constant $\sigma$ (for general $k$, the value of $\sigma$ only depends on $k$), and $\beta = \tilde\Omega(\sqrt{R}N^{1/(4k)})$.
        \end{itemize}
    \item The function $\phi$ is defined on $4^k$ inputs, and is defined as in Claim~\ref{claim: halfhalf}.
    \item $\theta$ is constructed as in Claim~\ref{clm: canonical_dual_OR} with $n = R/4^k$.
\end{itemize}

If we were to define $\Gamma = \theta \star \phi \star \psi$, all the analyses from the previous section would work, except for the correlation analysis, which fails. To fix this, our main technical contribution is to not use dual block composition, but rather a variant of it inspired by a result of Sherstov~\cite{SDPT}. Our function $\Gamma$ takes the form $\Gamma = \theta \bullet (\phi \star \psi)$, where $\bullet$ denotes our variant of dual block composition. In a little more detail, \begin{align*} &\Gamma(x_1, \dots, x_{R/4^k})   :=\\& \theta \bullet (\phi \star \psi)(x) = \frac{1}{p_{\eta}(1-2\epsilon^+,\dots, 1-2\epsilon^+)} \cdot (\theta \star(\phi\star\psi))(x_1, \dots, x_{R/4^k})\cdot p_{\eta}(\alpha(x_1),\dots,\alpha(x_{R/4^k})),\end{align*} for $$\epsilon^+ = \epsilon^+_{\phi\star\psi,\OR_{4^k}\circ\THR^k_N},$$ \[\epsilon^- = \epsilon^-_{\phi\star\psi,\OR_{4^k}\circ\THR^k_N},\] $\eta$ is a parameter 
that we set later, and $p_\eta$ and $\alpha$ 
are functions
whose definitions we elaborate on later in this section.

We first give a very high-level idea of how we prove the required properties of $\Gamma$, and then elaborate on the definitions of $\eta, p_\eta$ and $\alpha$.

\begin{itemize}
    \item \textbf{Normalization:} Following along similar lines as~\cite[Claim 6.2]{SDPT}, we prove that $\|\Gamma\|_1 = 1$ by modifying the proof that dual block composition preserves $\ell_1$-norm, crucially exploiting properties of $p_\eta$ and $\alpha$ (see Claim~\ref{clm: gamma_l1_norm}).
    \item \textbf{Pure high degree:} Using our definition of $p_\eta$, and $\alpha$, one can show (Claim~\ref{clm: dual_sherstov_tweak}) that the pure high degree of $\theta \bullet (\phi \star \psi)$ is at least $(\phd(\theta) - \eta)\phd(\phi \star \psi)$. The value of $\eta$ is chosen to be $\phd(\theta)/2$ so that this quantity is the same order of magnitude as $\phd(\theta)\phd(\phi \star \psi) = \phd(\theta)\phd(\psi)$, 
    which is $\tilde\Omega(R^{3/4}N^{-1/(4k)})$.
    \item \textbf{Exponentially little mass on inputs of large Hamming weight:} By a similar argument as sketched in the last section, it can be shown that the mass placed by $(\theta \star \phi) \star \psi$ on inputs of Hamming weight larger than $N$ is exponentially small in $\tilde\Omega(R^{\frac{3}{4} - \frac{1}{4k}})$. Since $\theta \bullet (\phi \star \psi) := \frac{1}{p_{\eta}(1-2\epsilon^+,\dots, 1-2\epsilon^+)} \cdot (\theta \star(\phi\star\psi))(x_1, \dots, x_{R/4^k})\cdot p_{\eta}(\alpha(x_1),\dots,\alpha(x_{R/4^k}))$, it suffices to show that the maximum absolute value of $\frac{p_{\eta}(\alpha(x_1),\dots,\alpha(x_{R/4^k}))}{p_{\eta}(1-2\epsilon^+,\dots, 1-2\epsilon^+)}$ is at most exponentially large in $R^{\frac{3}{4} - \frac{1}{4k}}$, which we do in Claim~\ref{clm: gamma_strong_decay}.
    \item \textbf{Correlation:}
    Conceptually, the function $p_\eta : \pmone^{R/4^{k}} \to \R$ can be viewed as one that ``corrects'' $\theta \star (\phi \star \psi)$: it ``counts'' the number of false positives fed to it by $\phi \star \psi$, and changes the output of $\theta \star (\phi \star \psi)$ to 0 on inputs where this number is any integer between 1 and $\eta$.  The function $\alpha : \pmone^N \to \R$ acts as the function that, in a sense, indicates whether or not $\phi \star \psi$ is making a \emph{false positive} error.  
    \begin{itemize}
        \item \textbf{Detecting errors:} The function $\alpha$ takes three possible output values: 
        it outputs $-1$ for $x\in E^{+}(\OR_{4^k}\circ\THR^k_N, \phi \star \psi)$ and outputs either $1$ or a value very close to $1$ for $x\notin E^{+}(\OR_{4^k}\circ\THR^k_N, \phi \star \psi)$.
        This definition of $\alpha$ is our biggest departure from Sherstov's construction in~\cite{SDPT}; Sherstov defined $\alpha$ to output $-1$ for \emph{both} false-positive and false-negative errors, whereas our $\alpha$ only outputs $-1$ for false-positive errors.
        \item \textbf{Zeroing out errors:} Define the function $p_\eta$ to be (the unique multilinear extension of) the function that outputs 0 if its input has Hamming weight between $1$ and $\eta$. 
        Recall that our construction considers the dual witness $$\frac{1}{p_{\eta}(1-2\epsilon^+,\dots, 1-2\epsilon^+)} \cdot (\theta \star(\phi\star\psi))(x_1, \dots, x_{R/4^k})\cdot p_{\eta}(\alpha(x_1),\dots,\alpha(x_{R/4^k})),$$
        and the purpose of multiplying
        $\theta \star (\phi \star \psi)$ by $p_{\eta}$
        is for $p_{\eta}$ to zero out most inputs
        in which one or more false-positive errors
        are being fed by $\phi \star \psi$
        into $\theta$ (see Equation \eqref{defn: dualblockcomposition}). 

        Unfortunately, $p_{\eta}$ is nonzero
        on inputs of Hamming weight more than $\eta$.
        Hence, in terms of the correlation analysis, 
        a key question that must be addressed
        is: what fraction of the
        $\ell_1$-mass of $\theta \star (\phi \star \psi)$
        is placed on inputs where more than
        $\eta$ copies of $\phi \star \psi$ make
        a false-positive error? We need
        this fraction to be very small, because multiplying
        by $p_{\eta}$
        fails to zero out such inputs. 
        
        Note that under the distribution defined by $|\phi \star \psi|$, the \emph{expected} number of false positive errors fed into $\theta$ is $(R/4^{k}) \cdot \epsilon^+$. Since we have set $\eta = O(\sqrt{R/(4 \cdot 4^k)})$, it suffices to have $\epsilon^+ \ll 1/(c\eta)$ for some large enough constant $c$ to conclude that with high probability (over the distribution $|\phi \star \psi|$), the number of false positive errors fed into $\theta$ is at most a small constant times $\eta$. It turns out that this value of $\epsilon^+$ is indeed attained by $\phi \star \psi$, since the false positive error between $\THR^k_N$ and $\psi$ was set to be $\tilde O(1/\sqrt{N}) = \tilde O(1/\sqrt{R})$ to begin with. Thus, with high probability, multiplying $\theta \star (\phi \star \psi)$ by $p_{\eta}$ successfully zeros out all but an exponentially
        small fraction of the errors made by $\theta \star (\phi \star \psi)$ that can be attributed to false-positive errors made by $\phi \star \psi$. This intuitive proof outline is formalized in Claim~\ref{clm: gamma_corr}, which in turn is a formalization of Equation~\eqref{approximateeq} that holds with the setting of parameters mentioned above.
    \end{itemize}
\end{itemize}

\section{Properties of Auxiliary Functions}\label{sec: mainproof}
\label{sec:startofmainproof}
Given any function $f : \pmone^m \to \pmone$ and $\psi : \pmone^m \to \R$, $\norm{\psi}_1=1$, let $\epsilon^+ = \epsilon^+_{f,\psi}$ and $\epsilon^- = \epsilon^-_{f,\psi}$ as defined in Equation~\eqref{eqn: epsilon}. Define the function $\alpha_{f, \psi} : \pmone^m \to \R$ as 
\begin{equation}\label{eqn: alpha}
\alpha_{f,\psi}(x) :=
\begin{cases*}
  1 =: a^{+} & if $\psi(x)f(x)>0, \psi(x)>0$ \\
  \frac{1-2\epsilon^+ - \epsilon^-}{1-\epsilon^-} =: a^{-}  & if $\psi(x)f(x)>0, \psi(x)<0$ \\
  -1    & if $\psi(x)f(x)<0, \psi(x)>0$\\
  1     & if $\psi(x)f(x)<0, \psi(x)<0$.
\end{cases*}
\end{equation}
For the remaining sections, for $z_i\in\pmone$, $a^{z_i}= a^+$ if $z_i=1$, and $a^{z_i}=a^-$ if $z_i = -1$.

\begin{claim}\label{clm: alphaexpectation}
For any integer $m > 0$, any functions $f:\pmone^m\to\pmone$ and $\psi:\pmone^m\to\R$ such that $\norm{\psi}_1=1$, let $\alpha = \alpha_{f,\psi} : \pmone^m \to \R$ be as defined in Equation~\eqref{eqn: alpha}. Then for any integer $n>0$, any $z$ in $\pmone^n$, and all $i\in[n]$, 
    \begin{equation}
        \E_{(x_1, \dots, x_n)\sim\mu_{z}}[\alpha(x_i)] = 1-2\epsilon^+_{f,\psi}.
    \end{equation}
\end{claim}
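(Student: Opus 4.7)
The plan is to reduce the multi-variable expectation to a single-variable one using the product structure of $\mu_z$, and then split into two cases according to whether $z_i = +1$ or $z_i = -1$. In each case the computation is immediate from the four-way definition of $\alpha$ in Equation~\eqref{eqn: alpha}, with the specific value $a^- = (1-2\epsilon^+-\epsilon^-)/(1-\epsilon^-)$ chosen precisely so that both cases yield the same answer $1 - 2\epsilon^+$.

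First I would invoke Equation~\eqref{eqn: mu_z_prod_dist}: since $\mu_z = \prod_{j=1}^n \mu_{z_j}$ is a product distribution on $(\pmone^m)^n$ and $\alpha(x_i)$ depends only on the $i$-th coordinate, we have $\E_{(x_1,\dots,x_n)\sim\mu_z}[\alpha(x_i)] = \E_{x \sim \mu_{z_i}}[\alpha(x)]$. Now recall from Definition~\ref{def: mu_z} that $\mu_{z_i}$ is the distribution $\mu_\psi$ conditioned on $\sgn(\psi(x)) = z_i$, and $\mu_\psi(x) = |\psi(x)|$.

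Case $z_i = +1$: under $\mu_{+1}$ we have $\psi(x) > 0$, so by Equation~\eqref{eqn: alpha} $\alpha(x) = 1$ when $f(x)\psi(x) > 0$ and $\alpha(x) = -1$ when $f(x)\psi(x) < 0$. By the definition of $\epsilon^+_{f,\psi}$ in Equation~\eqref{eqn: epsilon}, the latter event has probability exactly $\epsilon^+$, so
\[
\E_{x \sim \mu_{+1}}[\alpha(x)] = 1 \cdot (1-\epsilon^+) + (-1)\cdot \epsilon^+ = 1 - 2\epsilon^+.
\]
Case $z_i = -1$: under $\mu_{-1}$ we have $\psi(x) < 0$, so $\alpha(x) = a^-$ when $f(x)\psi(x) > 0$ (probability $1 - \epsilon^-$) and $\alpha(x) = 1$ when $f(x)\psi(x) < 0$ (probability $\epsilon^-$). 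Therefore
\[
\E_{x \sim \mu_{-1}}[\alpha(x)] = \frac{1 - 2\epsilon^+ - \epsilon^-}{1 - \epsilon^-}(1 - \epsilon^-) + 1 \cdot \epsilon^- = (1 - 2\epsilon^+ - \epsilon^-) + \epsilon^- = 1 - 2\epsilon^+,
\]
matching the previous case and completing the proof.

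There is essentially no obstacle here: the content of the claim is exactly that the parameter $a^-$ in the definition of $\alpha$ was chosen to make the conditional expectation coordinate-free (independent of $z_i$), and a mild edge case to note is that $\epsilon^- \leq 1/2 < 1$ (so the denominator $1 - \epsilon^-$ is well-defined) which is guaranteed in every application of the claim in the paper; the only ``move'' is to recognize that the product structure of $\mu_z$ lets us ignore all coordinates other than $i$.
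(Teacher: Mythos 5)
Your proof is correct and follows the paper's argument exactly: reduce to the marginal $\mu_{z_i}$ via the product structure from Equation~\eqref{eqn: mu_z_prod_dist}, then compute the conditional expectation in the two cases $z_i=\pm 1$ using the definition of $\alpha$ and $\epsilon^{\pm}_{f,\psi}$. Your exposition is a bit more verbose and your observation about why $a^-$ is chosen as it is (to make the two cases agree) is a nice aside, but there is no substantive difference from the paper's proof.
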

\begin{proof}
Let $\epsilon^+ = \epsilon^+_{f, \psi}$ and $\epsilon^- = \epsilon^-_{f, \psi}$.
\begin{align*}
    \E_{\mu_{z}}[\alpha(x_i)] &= \E_{\mu_{z_i}}[\alpha(x_i)] \tag*{by Equation~\eqref{eqn: mu_z_prod_dist}}\\
    & = \begin{cases}
    \epsilon^+ \cdot -1 + (1 - \epsilon^+) & \text{if~}z_i = 1\\
    \epsilon^- \cdot 1 + (1 - \epsilon^-)\frac{1 - 2\epsilon^+ - \epsilon^-}{1 - \epsilon^-} & \text{if~}z_i = -1
    \end{cases}\tag*{by Definition~\ref{def: mu_z} and Equation~\eqref{eqn: alpha}}\\
    &= 1-2\epsilon^+.
\end{align*}
\end{proof}

Consider any positive integer $m$, functions $f : \pmone^m \to \pmone$ and $\psi : \pmone^m \to \R$, and any integers $\eta < n$. By Claim~\ref{clm: alphaexpectation}, Equation~\eqref{eqn: mu_z_prod_dist} and the fact that $p_{\eta} : [-1, 1]^n \to \R$ as defined in Lemma~\ref{lem: sherstovpk} is multilinear, it holds for all $z \in \pmone^n$ that
\begin{equation}\label{eqn: alphaexpectationallindices}
    \E_{\mu_z}[p_{\eta}(\alpha(x_1),\dots,\alpha(x_n))] = p_{\eta}(1-2\epsilon^+_{f, \psi},\dots, 1-2\epsilon^+_{f, \psi}).
\end{equation}

Let 
\begin{equation}\label{eqn: b}
    b^{+} := 0, \quad b^{-}:=\frac{\epsilon^+_{f, \psi}}{1-\epsilon^-_{f, \psi}},
\end{equation}
and $a^{+}, a^-$ be as defined in Equation~\eqref{eqn: alpha}. For the remaining sections, for $z_i\in\pmone$, $b^{z_i}:= b^+$ if $z_i=1$ and $b^{z_i}:=b^-$ if $z_i = -1$. Then, by multilinearity of $p_{\eta}$ and Definition~\ref{def: prod_dist_Pi}, for any $i \in [n]$ and any $c_1, \dots, c_{i - 1}, c_{i + 1}, \dots, c_n \in [-1,1]$ we have
\begin{equation}\label{eqn: pkexpectationnew}
    \E_{w\sim\Pi(b^{z_i})}[p_{\eta}(c_1, \dots, c_{i-1}, w, c_{i+1}, \dots, c_n)] = p_{\eta}(c_1, \dots, c_{i-1}, a^{z_i}, c_{i+1}, \dots,c_n),
\end{equation}
since $1 - 2b^+ = 1 = a^+$ and $1 - 2b^- = \frac{1-\epsilon^-_{f, \psi} - 2\epsilon^+_{f, \psi}}{1-\epsilon^-_{f, \psi}} = a^-$.
We also obtain that
\begin{equation}\label{eqn: pkexpectation}
    \E_{(w_1, \dots, w_n)\sim\Pi(b^{z_1},\dots,b^{z_n})}[p_{\eta}(w_1, \dots, w_n)] = p_{\eta}(a^{z_1},\dots,a^{z_n}),
\end{equation}
by Lemma~\ref{lem: expectation_multilinear}. We now state the setting for our next few claims.

\textbf{Assumptions for Claim~\ref{clm: piece3(a)}, Claim~\ref{clm: piece3(b)}, Claim~\ref{clm: finalcorrelation}, Claim~\ref{clm: gamma_l1_norm}}: Let $m, n$ be any positive integers, $\eta < n$ be any even positive integer, and $f:\pmone^m\to\pmone$ be any function. 
 Let $\zeta:\pmone^n\to\R$ be such that $\langle\zeta, \OR_n \rangle>\delta$ and $\|\zeta\|_1 = 1$, and $\xi: \pmone^m\to\R$ be any function such that $\|\xi\|_1 = 1$ and $\phd(\xi) \geq 1$.
Let $p_{\eta}:\pmone^n\to\R$ be as defined in Lemma~\ref{lem: sherstovpk}, let $\alpha = \alpha_{f,\xi}:\pmone^m\to\R$ be as defined in Equation~\eqref{eqn: alpha}, and define the distribution $\mu_{\xi}$ over $\pmone^{nm}$ as in Equation~\eqref{eqn: mudef}. Let $\epsilon^+ = \epsilon^+_{f, \xi}$, $\epsilon^- = \epsilon^-_{f, \xi}$, $\epsilon = \epsilon^+ + \epsilon^-$, and $A=\binom{n}{\eta+1}\frac{(\epsilon^+)^{\eta+1}}{(1-\epsilon^+)^n}$.

\begin{claim}\label{clm: piece3(a)}

\begin{align}
&\zeta(1^n)\E_{x\sim\mu_{1^n}}[p_{\eta}(\alpha(x_1),\dots,\alpha(x_n))\OR(f(x_1),\dots,f(x_n))] \nonumber\\
\geq & p_{\eta}(1-2\epsilon^+,\dots, 1-2\epsilon^+)\left(\zeta(1^n)-|\zeta(1^n)|2A \right).
\end{align}
\end{claim}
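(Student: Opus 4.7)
The plan is to reduce the expectation to quantities controlled by Lemma~\ref{lem: sherstovpk}. First I would observe that under $\mu_{1^n}$ each coordinate satisfies $\xi(x_i)>0$, so the definition of $\alpha_{f,\xi}$ in Equation~\eqref{eqn: alpha} forces $\alpha(x_i)\in\pmone$ with $\alpha(x_i)=f(x_i)$. Consequently $(\alpha(x_1),\dots,\alpha(x_n))$ is distributed as $\Pi(\epsilon^+,\dots,\epsilon^+)$ on $\pmone^n$, and $\OR(f(x_1),\dots,f(x_n))$ equals $+1$ precisely on the event $|\alpha|=0$ and $-1$ otherwise. Since $\eta$ is even, Lemma~\ref{lem: sherstovpk} also tells me that $p_\eta\ge 0$ on $\pmone^n$, that $p_\eta$ vanishes at Hamming weights $1,\dots,\eta$, and that $p_\eta(1^n)=\eta!$.

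Combining these observations and using Lemma~\ref{lem: expectation_multilinear} to identify $\E_{\mu_{1^n}}[p_\eta(\alpha)]=p_\eta(1-2\epsilon^+,\dots,1-2\epsilon^+)$, I would split the expectation by $|\alpha|$ and eliminate the common tail sum $\sum_{t>\eta}\Pr_{\mu_{1^n}}[|\alpha|=t]\,p_\eta(t)$ to obtain the identity
\begin{equation*}
\E_{\mu_{1^n}}\!\left[p_\eta(\alpha)\,\OR(f)\right] \;=\; 2\,\eta!\,(1-\epsilon^+)^n \;-\; p_\eta(1-2\epsilon^+,\dots,1-2\epsilon^+).
\end{equation*}
The nonnegativity of $p_\eta$ gives $\eta!(1-\epsilon^+)^n\le p_\eta(1-2\epsilon^+,\dots)$, while Equation~\eqref{eqn: p_eta_bound} of Lemma~\ref{lem: sherstovpk} applied with $\nu=\Pi(\epsilon^+,\dots,\epsilon^+)$ and $\tau=\epsilon^+$ yields the reverse estimate $p_\eta(1-2\epsilon^+,\dots)\le \eta!(1-\epsilon^+)^n(1+A)$. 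Substituting into the identity and using $\tfrac{1-A}{1+A}\ge 1-2A$ (which holds because $\tfrac{1-A}{1+A}-(1-2A)=\tfrac{2A^2}{1+A}\ge 0$), this gives the two-sided bound
\begin{equation*}
p_\eta(1-2\epsilon^+,\dots)(1-2A) \;\le\; \E_{\mu_{1^n}}\!\left[p_\eta(\alpha)\,\OR(f)\right] \;\le\; p_\eta(1-2\epsilon^+,\dots) \;\le\; p_\eta(1-2\epsilon^+,\dots)(1+2A).
\end{equation*}

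The main subtlety, and in my view the main obstacle, is the sign case split on $\zeta(1^n)$: the right-hand side of the claim is $p_\eta(1-2\epsilon^+,\dots)(\zeta(1^n)-|\zeta(1^n)|\,2A)$, which equals $p_\eta(\dots)\zeta(1^n)(1-2A)$ when $\zeta(1^n)\ge 0$ but $p_\eta(\dots)\zeta(1^n)(1+2A)$ when $\zeta(1^n)<0$, so \emph{both} one-sided bounds on the expectation are genuinely needed. When $\zeta(1^n)\ge 0$, multiplying the lower bound by $\zeta(1^n)$ yields the claim directly. When $\zeta(1^n)<0$, I would multiply the upper bound by $\zeta(1^n)$, which flips the inequality and (using $|\zeta(1^n)|=-\zeta(1^n)$ together with $p_\eta(1-2\epsilon^+,\dots)\ge 0$) produces exactly the same right-hand side. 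Everything else is routine bookkeeping around the tight bound of Equation~\eqref{eqn: p_eta_bound}.
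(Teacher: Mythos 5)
Your proof is correct and follows essentially the same route as the paper: you derive the same exact expression for the expectation (the paper writes it via $\E[p_\eta I] = (1-\epsilon^+)^n p_\eta(1^n)$ and $\E[p_\eta] = p_\eta(1-2\epsilon^+,\dots)$ rather than as the single identity $2\eta!(1-\epsilon^+)^n - p_\eta(1-2\epsilon^+,\dots)$), and you invoke the same estimate from Equation~\eqref{eqn: p_eta_bound} to control the ratio by $1\pm 2A$. The only cosmetic difference is that where you do an explicit case split on the sign of $\zeta(1^n)$, the paper absorbs both cases in one line via the standard inequality $\zeta a \ge \zeta b - |\zeta|\cdot|a-b|$, so the sign issue you flag as the ``main obstacle'' is handled by the paper without any case analysis.
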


\begin{claim}\label{clm: piece3(b)}
\begin{align}
    &\sum_{z\neq 1^n}\zeta(z)\E_{\mu_z}[p_{\eta}(\alpha(x_1),\dots,\alpha(x_n))\OR(f(x_1),\dots,f(x_n))] \nonumber\\
    \geq & p_\eta(1-2\epsilon^+,\dots, 1-2\epsilon^+)\left(\sum_{z\neq1^n}\zeta(z)\OR(z)-\left(2-2\left(\frac{1-\epsilon^+ - \epsilon^-}{1-\epsilon^+}\right)(1-A)\right)\sum_{z\neq1^n}|\zeta(z)| \right).
\end{align}
\end{claim}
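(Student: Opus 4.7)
The plan is to mirror the strategy of Claim~\ref{clm: piece3(a)} but take advantage of the fact that $\OR(z)=-1$ for every $z\neq 1^n$. Write
\[
\OR(f(x_1),\dots,f(x_n)) \;=\; -1 \;+\; 2\prod_{i=1}^{n}\mathbb{1}[f(x_i)=1],
\]
and split the expectation accordingly. The first piece contributes $-\E_{\mu_z}[p_{\eta}(\alpha(x_1),\dots,\alpha(x_n))]$, which by Equation~\eqref{eqn: alphaexpectationallindices} equals $-p_{\eta}(1-2\epsilon^+,\dots,1-2\epsilon^+)$ for \emph{every} $z$. Summing over $z\neq 1^n$ with weights $\zeta(z)$ and using $\OR(z)=-1$ on that range yields exactly the main term $p_{\eta}(1-2\epsilon^+,\dots,1-2\epsilon^+)\sum_{z\neq 1^n}\zeta(z)\OR(z)$ appearing on the right-hand side of the claim.

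The second piece is where the error term comes from. The key observation is that whenever $f(x_i)=1$, the definition of $\alpha$ in Equation~\eqref{eqn: alpha} forces $\alpha(x_i)=1$ — this holds both when $\xi(x_i)>0$ (correct prediction) and when $\xi(x_i)<0$ (false-negative error). Consequently, on the event $\{f(x_i)=1 \text{ for all } i\}$ we have $p_{\eta}(\alpha(x_1),\dots,\alpha(x_n)) = p_{\eta}(1^n) = \eta!$ by Equation~\eqref{eqn: p_eta_all_false}. Using the product structure of $\mu_z$ in Equation~\eqref{eqn: mu_z_prod_dist}, together with the per-coordinate probabilities $\Pr_{\mu_{+1}}[f(x_i)=1]=1-\epsilon^+$ and $\Pr_{\mu_{-1}}[f(x_i)=1]=\epsilon^-$, gives
\[
\E_{\mu_z}\!\left[p_{\eta}(\alpha(x_1),\dots,\alpha(x_n))\prod_{i=1}^n\mathbb{1}[f(x_i)=1]\right] \;=\; \eta!\,(1-\epsilon^+)^{n-k(z)}(\epsilon^-)^{k(z)},
\]
where $k(z):=|\{i:z_i=-1\}|$.

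Since $z\neq 1^n$ forces $k(z)\geq 1$, we bound $(1-\epsilon^+)^{n-k(z)}(\epsilon^-)^{k(z)}\leq (1-\epsilon^+)^{n-1}\epsilon^-$ (using $\epsilon^-\leq 1-\epsilon^+$, which is automatic from $\epsilon^++\epsilon^-\leq 1$). Applying the triangle inequality, the second piece contributes in absolute value at most $2\eta!(1-\epsilon^+)^{n-1}\epsilon^-\sum_{z\neq 1^n}|\zeta(z)|$. To match the form of the claim, use non-negativity of $p_{\eta}$ for even $\eta$ (Lemma~\ref{lem: sherstovpk}) and Lemma~\ref{lem: expectation_multilinear} with $\nu=\Pi(\epsilon^+,\dots,\epsilon^+)$ to obtain
\[
p_{\eta}(1-2\epsilon^+,\dots,1-2\epsilon^+) \;=\; \E_{\nu}[p_{\eta}(z)] \;\geq\; \nu(1^n)\,p_{\eta}(1^n) \;=\; \eta!(1-\epsilon^+)^n,
\]
and combine with the elementary identity
\[
1-\frac{1-\epsilon^+-\epsilon^-}{1-\epsilon^+}(1-A) \;=\; \frac{\epsilon^-}{1-\epsilon^+} \;+\; A\cdot\frac{1-\epsilon^+-\epsilon^-}{1-\epsilon^+} \;\geq\; \frac{\epsilon^-}{1-\epsilon^+},
\]
to conclude $\eta!(1-\epsilon^+)^{n-1}\epsilon^-\leq p_{\eta}(1-2\epsilon^+,\dots)\bigl(1-\tfrac{1-\epsilon^+-\epsilon^-}{1-\epsilon^+}(1-A)\bigr)$. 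Substituting back yields the stated lower bound.

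The main obstacle is purely combinatorial bookkeeping: one must recognize that the only event contributing to $\prod_i\mathbb{1}[f(x_i)=1]$ forces every $\alpha(x_i)$ to $1$, which is precisely the argument at which $p_{\eta}$ is non-vanishing and equals $\eta!$. Once this is observed, the product structure of $\mu_z$ collapses everything to an explicit scalar and the remaining estimates are routine algebraic manipulations relating $\epsilon^-/(1-\epsilon^+)$ to the factor $1-\gamma(1-A)$ in the claim.
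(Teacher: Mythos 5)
Your proof is correct and is a genuinely different route from the paper's. The paper fixes a single ``witness'' coordinate $i$ with $z_i=-1$, splits off the indicator $I[f(x_i)=-1]$ for that one coordinate (exploiting that it alone certifies $\OR=-1$), and then estimates $\E_{\mu_z}\bigl[p_\eta(\alpha(x))I[f(x_i)=-1]\bigr]$ by a chain of multilinear substitutions (passing through $p_\eta(y^i(a^-))$, then $p_\eta(y^i(1))$, then $p_\eta(1^n)(1-\epsilon^+)^{n-1}$). You instead use the algebraic identity $\OR(w)=-1+2\prod_i I[w_i=1]$ and the cleaner structural observation that $\alpha(x)=1$ \emph{if and only if} $f(x)=1$, so on the event $\{f(x_i)=1\ \forall i\}$ the value $p_\eta(\alpha(x))$ collapses to the constant $p_\eta(1^n)=\eta!$; this turns the second piece into the exact closed form $\eta!(1-\epsilon^+)^{n-k(z)}(\epsilon^-)^{k(z)}$ rather than a chain of lower bounds. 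Both arguments ultimately exploit the same feature of $\OR$ (it has a unique $+1$ input), just from opposite sides: the paper via the $(-1)$-certificate, you via direct computation of the probability of the unique $+1$ preimage. Your intermediate error term $2\epsilon^-/(1-\epsilon^+)$ is in fact slightly smaller than the paper's $2-2\frac{1-\epsilon^+-\epsilon^-}{1-\epsilon^+}\cdot\frac{(1-\epsilon^+)p_\eta(y^i(1))}{p_\eta(1-2\epsilon^+,\dots)}$, and both relax to the stated coefficient $2-2\frac{1-\epsilon^+-\epsilon^-}{1-\epsilon^+}(1-A)$. One caveat worth stating explicitly (it is implicit in both the paper and your proof): the step $\epsilon^-\le 1-\epsilon^+$ requires $\epsilon^++\epsilon^-\le 1$, which is needed anyway for $a^-\in[-1,1]$ so that $p_\eta\circ\alpha$ is well defined.
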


\begin{claim}\label{clm: finalcorrelation}
If $A<1$, then,
\begin{align}
    \langle \OR\circ f, (\zeta\star\xi)(p_{\eta}\circ\alpha)\rangle 
    & \geq p_{\eta}(1-2\epsilon^+,\dots, 1-2\epsilon^+)\cdot \left(\delta-\left(2-2\left(\frac{1-\epsilon^+ - \epsilon^-}{1-\epsilon^+}\right)(1-A)\right)\right).
\end{align}
\end{claim}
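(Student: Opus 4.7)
The plan is to expand the left-hand side using the definition of dual block composition, then split the resulting sum over $z \in \pmone^n$ into the $z = 1^n$ piece and the $z \neq 1^n$ piece, applying Claims~\ref{clm: piece3(a)} and~\ref{clm: piece3(b)} respectively. First I would observe that since $\phd(\xi) \geq 1$ implies $\langle \xi, 1 \rangle = 0$, each sign event $\sgn(\xi(x_i)) = z_i$ has $\mu_\xi$-probability exactly $1/2$, so the conditional distribution $\mu_z$ from Definition~\ref{def: mu_z} satisfies $\mu_z(x_1,\dots,x_n) = 2^n \prod_i |\xi(x_i)| \cdot \mathbf{1}[\sgn(\xi(x_i)) = z_i\ \forall i]$. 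Plugging this into the definition of $\zeta \star \xi$ gives the rewriting
\[
\langle \OR \circ f, (\zeta\star\xi)(p_\eta \circ \alpha)\rangle = \sum_{z \in \pmone^n} \zeta(z)\, \E_{x\sim\mu_z}\bigl[\OR(f(x_1),\dots,f(x_n)) \cdot p_\eta(\alpha(x_1),\dots,\alpha(x_n))\bigr].
\]

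Applying Claim~\ref{clm: piece3(a)} to the $z = 1^n$ summand and Claim~\ref{clm: piece3(b)} to the remaining summands, and using $\OR(1^n) = 1$, produces a lower bound of the form $p_\eta(1-2\epsilon^+,\dots,1-2\epsilon^+) \cdot \bigl(\zeta(1^n) + \sum_{z\neq 1^n}\zeta(z)\OR(z) - 2A|\zeta(1^n)| - C \sum_{z\neq 1^n}|\zeta(z)|\bigr)$, where $C := 2 - 2\frac{1-\epsilon^+-\epsilon^-}{1-\epsilon^+}(1-A)$. The first two terms combine to exactly $\langle \zeta, \OR\rangle > \delta$. For the error terms, I would verify that $2A \leq C$: rearranging, this is equivalent to $\frac{1-\epsilon^+-\epsilon^-}{1-\epsilon^+}(1-A) \leq 1-A$, which since $A < 1$ reduces to $\frac{1-\epsilon^+-\epsilon^-}{1-\epsilon^+} \leq 1$, true because $\epsilon^- \geq 0$. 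Replacing $2A$ with the (larger) $C$ in the $|\zeta(1^n)|$ coefficient only decreases the lower bound, and then $|\zeta(1^n)| + \sum_{z \neq 1^n}|\zeta(z)| = \|\zeta\|_1 = 1$ collapses the error terms into $-C$, matching the statement.

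The one subtlety, and the step I expect to be the main obstacle, is that this replacement of coefficients is only valid if the outer factor $p_\eta(1-2\epsilon^+,\dots,1-2\epsilon^+)$ is nonnegative. To handle this I would use Lemma~\ref{lem: expectation_multilinear} to rewrite $p_\eta(1-2\epsilon^+,\dots,1-2\epsilon^+) = \E_{w \sim \Pi(\epsilon^+,\dots,\epsilon^+)}[p_\eta(w)]$, then invoke the last sentence of Lemma~\ref{lem: sherstovpk} to conclude $p_\eta(w) \geq 0$ for all $w \in \pmone^n$ when $\eta$ is even (recall the claim's assumption that $\eta$ is even), so the expectation is nonnegative. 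With that in hand the inequality chain goes through and yields the stated bound.
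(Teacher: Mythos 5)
Your proof is correct and follows essentially the same route as the paper: expand via dual block composition into a sum over $z\in\pmone^n$, apply Claim~\ref{clm: piece3(a)} and Claim~\ref{clm: piece3(b)}, recombine to recover $\langle\zeta,\OR\rangle$, show $2A\leq C$, and collapse the error coefficients using $\|\zeta\|_1=1$. The one place you are more explicit than the paper is in flagging and justifying the nonnegativity of $p_\eta(1-2\epsilon^+,\dots,1-2\epsilon^+)$, which the paper uses silently throughout the inequality chain; that is a correct and worthwhile observation, not a deviation.
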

We first prove Claim~\ref{clm: finalcorrelation} using Claim~\ref{clm: piece3(a)} and Claim~\ref{clm: piece3(b)}, and prove those claims later.
\begin{proof}
\begin{align*}
&\langle \OR\circ f, (\zeta\star\xi)(p_{\eta}\circ\alpha)\rangle
    = \sum_{x\in \pmone^{mn}} (\OR\circ f)(x)(\zeta\star\xi)(p_{\eta}\circ\alpha)(x)\\
    &= \sum_{x\in\pmone^{mn}}\OR(f(x_1),\dots,f(x_n))2^n\zeta\left(\sgn(\xi(x_1)),\dots,\sgn(\xi(x_n))\right)p_{\eta}(\alpha(x_1),\dots,\alpha(x_n))\prod_{i=1}^n|\xi(x_i)|\tag*{by Definition~\ref{defn: dualblockcomposition}}\\
    &=\sum_{z\in\pmone^n}\zeta(z)\left(\sum_{x : \sgn(\xi(x_i)) = z_i \forall i \in [n]}2^np_{\eta}(\alpha(x_1),\dots,\alpha(x_n))\OR(f(x_1),\dots,f(x_n))\prod_{i=1}^n|\xi(x_i)| \right)\\
    &= \sum_{z\in\pmone^n}\zeta(z)\E_{\mu_z}[p_{\eta}(\alpha(x_1),\dots,\alpha(x_n))\OR(f(x_1),\dots,f(x_n))]\tag*{by Definition~\ref{def: mu_z} and $\Pr_{x_i\sim\mu_\xi}[\sgn(x_i) = 1] = \Pr_{x_i\sim\mu_\xi}[\sgn(x_i) = -1] = 1/2$ since $\phd(\xi) \geq 1$} \\
    &\geq  p_{\eta}(1-2\epsilon^+,\dots, 1-2\epsilon^+)\left(\zeta(1^n)\OR(1^n) - 2|\zeta(1^n)|A + \sum_{z \neq 1^n}\zeta(z)\OR(z)  \right.\\
    & \left. -\left(2-2\left(\frac{1-\epsilon^+ - \epsilon^-}{1-\epsilon^+}\right)(1-A)\right)\sum_{z\neq 1^n}|\zeta(z)|\right) \tag*{by Claims~\ref{clm: piece3(a)}, \ref{clm: piece3(b)} and $\OR(1^n)=1$}\\
    &=  p_{\eta}(1-2\epsilon^+,\dots, 1-2\epsilon^+) \left(\sum_{z \in \pmone^n}\zeta(z)\OR(z) - 2|\zeta(1^n)|A - \left(2-2\left(\frac{1-\epsilon^+ - \epsilon^-}{1-\epsilon^+}\right)(1-A)\right)\sum_{z\neq 1^n}|\zeta(z)| \right)\\
    &\geq  p_{\eta}(1-2\epsilon^+,\dots, 1-2\epsilon^+) \left(\delta -\max\left\{2A,2-2\left(\frac{1-\epsilon^+ - \epsilon^-}{1-\epsilon^+}\right)(1-A)\right\} \right) \tag*{since $|\zeta(1^n)|+\sum_{z\neq 1^n}|\zeta(z)|=1$ and $\langle \zeta, \OR \rangle > \delta$}\\
    &\geq  p_{\eta}(1-2\epsilon^+,\dots, 1-2\epsilon^+)\left(\delta-\left(2-2\left(\frac{1-\epsilon^+ - \epsilon^-}{1-\epsilon^+}\right)(1-A)\right)\right),
\end{align*}
where the last inequality holds because $\left(2-2\left(\frac{1-\epsilon^+ - \epsilon^-}{1-\epsilon^+}\right)(1-A)\right)-2A = (1-A)\left(2 - 2\left(\frac{1 - \epsilon^+ - \epsilon^-}{1 - \epsilon^+}\right)\right) > 0$
since $\left(\frac{1-\epsilon^+ - \epsilon^-}{1-\epsilon^+}\right)<1$, and $A < 1$.
\end{proof}
Next we prove Claim~\ref{clm: piece3(a)}.
\begin{proof}[Proof of Claim~\ref{clm: piece3(a)}]
Recall that $\mu_{1^n}$ is the distribution $\mu_\xi$ conditioned on the event that $\sgn(\xi(x_i)) = 1$ for all $i \in [n]$. Note that for all $x_1, \dots, x_n$ in the support of $\mu_{1^n}$ such that $I[(f(x_1),\dots,f(x_n))=1^n]$ (which means $f(x_i) = 1$ for all $i \in [n]$), we have by the definition of $a^{z_i}$ in Equation~\eqref{eqn: alpha}, that $\alpha(x_i) = a^+$ for all $i \in [n]$. Hence,
\begin{align}
    &\E_{\mu_{1^n}}[p_\eta(\alpha(x_1),\dots,\alpha(x_n))I[(f(x_1),\dots,f(x_n))=1^n]]\nonumber\\
    =&  \Pr_{x \sim \mu_{1^n}}[(f(x_1), \dots,f(x_n)) = 1^n]p_{\eta}(a^{+},\dots,a^{+})\tag*{} \nonumber\\
    =&  \left(\prod_{i=1}^n\Pr_{\mu_{1}}[f(x_i)=1]\right)p_{\eta}(a^{+},\dots,a^{+}) \tag*{by Equation~\eqref{eqn: mu_z_prod_dist}} \nonumber\\
    =&  \left(\prod_{i=1}^n (1-\epsilon^{+})\right) \E_{w\sim\Pi(b^{+},\dots,b^{+})}[p_{\eta}(w)]\tag*{by Equation~\eqref{eqn: epsilon} and Equation~\eqref{eqn: pkexpectation}}\nonumber\\
    \geq&  (1 - \epsilon^+)^n \Pr_{w\sim\Pi(b^{+},\dots,b^{+})}[w = 1^n]p_\eta(1^n) \tag*{since $p_\eta(w) \geq 0$ for all $w \in \pmone^n$ by Lemma~\ref{lem: sherstovpk}}\\
    =&   (1-\epsilon^+)^np_{\eta}(1^n),  \label{eq piece1(a)}
\end{align}
where the last line follows by Definition~\ref{def: prod_dist_Pi} and Equation~\eqref{eqn: b}.
Next,
\begin{align}
    &|\E_{\mu_{1^n}}[p_{\eta}(\alpha(x_1),\dots,\alpha(x_n))\OR(f(x_1),\dots,f(x_n))]-p_{\eta}(1-2\epsilon^+,\dots, 1-2\epsilon^+)| \nonumber\\
    = & |\E_{\mu_{1^n}}[p_{\eta}(\alpha(x_1),\dots,\alpha(x_n))\left(\OR(f(x_1),\dots,f(x_n))-1 \right)]|\tag*{by Equation~\eqref{eqn: alphaexpectationallindices}}\nonumber\\
    = & 2\E_{\mu_{1^n}}[p_{\eta}(\alpha(x_1),\dots,\alpha(x_n))(1-I[(f(x_1),\dots,f(x_n))=1^n])] \nonumber\\
    \leq & 2\E_{\mu_{1^n}}[p_{\eta}(\alpha(x_1),\dots,\alpha(x_n))]-2(1-\epsilon^+)^np_{\eta}(1^n) \label{eqn: piece2(a)}
\end{align}
by Equation~\eqref{eq piece1(a)}. 
Hence, by Equation~\eqref{eqn: piece2(a)},
\begin{align}
     &\zeta(1^n)\E_{\mu_{1^n}}[p_{\eta}(\alpha(x_1),\dots,\alpha(x_n))\OR(f(x_1),\dots,f(x_n))] \nonumber\\
     \geq & \zeta(1^n)p_{\eta}(1-2\epsilon^+,\dots, 1-2\epsilon^+)-|\zeta(1^n)|\left(2\E_{\mu_{1^n}}[p_{\eta}(\alpha(x_1),\dots,\alpha(x_n))] - 2(1-\epsilon^+)^np_{\eta}(1^n) \right) \nonumber\\
     = & p_{\eta}(1-2\epsilon^+,\dots, 1-2\epsilon^+)\left( \zeta(1^n)-|\zeta(1^n)|\left(2-\frac{2(1-\epsilon^+)^np_{\eta}(1^n)}{p_{\eta}(1-2\epsilon^+,\dots, 1 - 2\epsilon^+)}\right)\right)\tag*{by Equation~\eqref{eqn: alphaexpectationallindices}}\nonumber \\
     \geq & p_{\eta}(1-2\epsilon^+,\dots, 1-2\epsilon^+) \left(\zeta(1^n)-|\zeta(1^n)|2A \right),
\end{align}
where the last inequality follows as we have by Equation~\eqref{eqn: p_eta_bound} and the fact that $p_\eta$ is non-negative on all Boolean inputs (Lemma~\ref{lem: sherstovpk}) that
\begin{align}
    \E_{\Pi(\epsilon^+,\dots,\epsilon^+)}[p_\eta(z)] &\leq p_\eta(1^n)(1 - \epsilon^+)^n(1+A),  
\end{align}
which by Lemma~\ref{lem: expectation_multilinear} implies that
\begin{equation}\label{eqn: 1-A_inequality}
    \frac{2p_\eta(1^n)(1-\epsilon^+)^n}{p_\eta(1-2\epsilon^+,\dots,1-2\epsilon^+)} \geq 2(1+A)^{-1} \geq 2(1 - A),
\end{equation}
for all $A\geq 0$.
\end{proof}
We now prove Claim~\ref{clm: piece3(b)}.
\begin{proof}[Proof of Claim~\ref{clm: piece3(b)}]

We first introduce some notation that we use in this proof.
For any $i \in [n]$ and $r \in [-1, 1]$, let $y^i(r)$ denote the $n$-bit string $(1 - 2\epsilon^+, \dots, 1 - 2\epsilon^+, r, 1 - 2\epsilon^+, \dots, 1 - 2\epsilon^+)$, where $r$ is in the $i$'th position. It is easy to verify from its definition that $p_\eta$ is symmetric on $\pmone^n$.
Hence, for any $i \in [n]$,
\begin{align}
    &p_{\eta}(y^i(1))(1 - \epsilon^+) = \E_{w\sim\Pi(\epsilon^+, \dots,\epsilon^+)}[p_{\eta}(w,1)](1-\epsilon^+) \tag*{by Lemma~\ref{lem: expectation_multilinear}} \\
    \geq&  \Pr_{\Pi(\epsilon^+,\dots, \epsilon^+)}\left[w=1^{n-1}\right]p_{\eta}(1^n)(1-\epsilon^+)\tag*{since $p_\eta(w, 1) \geq 0$ for all $w \in \pmone^{n-1}$ by Lemma~\ref{lem: sherstovpk}}\\
    =& p_{\eta}(1^n)(1-\epsilon^+)^n,  \label{eqn: piece0(b)}
\end{align}
where the last equality follows from Definition~\ref{def: prod_dist_Pi}.
For any $z\neq 1^n$, let $i \in [n]$ be an index such that $z_i=-1$.\footnote{The notation $i_{z}$ is more accurate, but we drop the dependence on $z$ to avoid clutter. The underlying $z$ will be clear from context.} Fix any $z \neq 1^n$.
Note that by Equation~\eqref{eqn: alpha} and Equation~\eqref{eqn: mu_z_prod_dist}, $\alpha(x_i) = a^-$ for all $x_i$ in the support of $\mu_{z_i}$ satisfying $f(x_i) = -1$. Hence,
\begin{align}
    & \E_{x\sim\mu_z}[p_{\eta}(\alpha(x_1), \dots,\alpha(x_n))I[f(x_i)=-1]] = \Pr_{\mu_z}[f(x_i)=-1]p_{\eta}(y^i(a^{-})) \tag*{by Claim~\ref{clm: alphaexpectation}}\nonumber\\
    =&  \Pr_{\mu_z}[f(x_i)=-1]\E_{w\sim\Pi(b^{-})}[p_{\eta}(y^i(w))] \tag*{by Equation~\eqref{eqn: pkexpectationnew}} \nonumber\\
    =&  (1 - \epsilon^-)\E_{w\sim\Pi(b^{-})}[p_{\eta}(y^i(w))] \tag*{by Definition~\ref{def: mu_z} the definition of $\epsilon^-$ in Equation~\eqref{eqn: epsilon}}\\
    \geq&  (1-\epsilon^-)\left(1-\frac{\epsilon^+}{1-\epsilon^-}\right)p_{\eta}(y^i(1)) \tag*{by Equation~\eqref{eqn: b} and Definition~\ref{def: prod_dist_Pi} and $p_\eta$ is non-negative on $\pmone^n$ (Lemma~\ref{lem: sherstovpk})}\\
    =&  (1-\epsilon^- - \epsilon^+)p_{\eta}(y^i(1)) \label{eqn: piece1(b)}.
\end{align}
Next,
\begin{align}
    & |\E_{\mu_z}\left[p_{\eta}(\alpha(x_1),\dots,\alpha(x_n))\left(\OR(f(x_1),\dots,f(x_n))-\OR(z) \right)\right]|\nonumber\\
   =&  |\E_{\mu_z}\left[p_{\eta}(\alpha(x_1),\dots,\alpha(x_n))\left( \OR(f(x_1),\dots,f(x_n))+1\right)\right]| \tag*{since $\OR(z)=-1, \forall z\neq 1^n$}\nonumber\\
   \leq&  2\E_{\mu_z}\left[p_{\eta}(\alpha(x_1),\dots,\alpha(x_n))(1-I[f(x_i)=-1])\right] \tag*{since $p_\eta$ is non-negative on $\pmone^n$ by Lemma~\ref{lem: sherstovpk}}\\
   \leq&  2\E_{\mu_z}\left[p_{\eta}(\alpha(x_1),\dots,\alpha(x_n))\right]-2(1-\epsilon^- - \epsilon^+)p_{\eta}(y^i(1)) \label{eqn: piece2(b)},
\end{align}
where the last inequality follows by next applying Equation~\eqref{eqn: piece1(b)}. Finally, \begin{align}
    & \sum_{z\neq 1^n}\E_{\mu_z}[p_{\eta}(\alpha(x_1), \dots,\alpha(x_n))\OR(f(x_1), \dots,f(x_n))]\zeta(z)\nonumber\\
    \geq&  \sum_{z\neq 1^n}\zeta(z)\OR(z)p_{\eta}(y^i(1 - 2\epsilon^+))\nonumber -\sum_{z\neq 1^n}|\zeta(z)|\left(2p_{\eta}(y^i(1-2\epsilon^+))-2(1-\epsilon^- - \epsilon^+)p_{\eta}(y^i(1)) \right)\tag*{by Equation~\eqref{eqn: piece2(b)} and Equation~\eqref{eqn: alphaexpectationallindices}} \nonumber\\
    =&  p_{\eta}(1-2\epsilon^+, \dots, 1-2\epsilon^+)\left(\sum_{z\neq 1^n}\zeta(z)\OR(z) -\left(2-2\left(\left(\frac{1-\epsilon^+ - \epsilon^-}{1-\epsilon^+}\right)\right)\frac{(1-\epsilon^+)p_{\eta}(y^i(1))}{p_{\eta}(y^i(1-2\epsilon^+))}\right)\sum_{z\neq 1^n}|\zeta(z)| \right)\\
    \geq&   p_{\eta}(1-2\epsilon^+,\dots, 1-2\epsilon^+)\left(\sum_{z\neq 1^n}\zeta(z)\OR(z)
    -\left(2-2\left(\left(\frac{1-\epsilon^+ - \epsilon^-}{1-\epsilon^+}\right)\right)\frac{(1-\epsilon^+)^np_{\eta}(1^n)}{p_{\eta}(y^i(1-2\epsilon^+))}\right)\sum_{z\neq 1^n}|\zeta(z)| \right) \tag*{by Equation~\eqref{eqn: piece0(b)}}\\
    \geq&  p_\eta(1-2\epsilon^+,\dots, 1-2\epsilon^+)\left(\sum_{z\neq1^n}\zeta(z)\OR(z)-\left(2-2\left(\left(\frac{1-\epsilon^+ - \epsilon^-}{1-\epsilon^+}\right)\right)(1-A)\right)\sum_{z\neq1^n}|\zeta(z)| \right). \tag*{ by Equation~\eqref{eqn: 1-A_inequality}}
\end{align}
\end{proof}

Finally, we require a closed form expression for $\|(\zeta\star\xi)(p_{\eta}\circ\alpha)\|_1$.
\begin{claim}\label{clm: gamma_l1_norm}
\begin{equation}
    \norm{(\zeta\star\xi)(p_{\eta}\circ\alpha)}_1 = p_{\eta}(1-2\epsilon^+,\dots, 1-2\epsilon^+).
\end{equation}
\end{claim}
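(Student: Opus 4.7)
The plan is to unfold the definition of dual block composition, factor the sum over sign patterns, and then argue that the polynomial $p_\eta \circ \alpha$ is pointwise non-negative so that its $\ell_1$-mass against the positive measure $|\zeta \star \xi|$ coincides with its integral, which we have already computed in Equation~\eqref{eqn: alphaexpectationallindices}.

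First, starting from Definition~\ref{defn: dualblockcomposition}, I would write
\[
\norm{(\zeta\star\xi)(p_\eta\circ\alpha)}_1 = \sum_{z\in\pmone^n} 2^n|\zeta(z)| \sum_{\substack{x=(x_1,\dots,x_n) \\ \sgn(\xi(x_i))=z_i\ \forall i}} \Bigl(\prod_{i=1}^n |\xi(x_i)|\Bigr)\, |p_\eta(\alpha(x_1),\dots,\alpha(x_n))|.
\]
Using $\phd(\xi) \geq 1$, which forces $\Pr_{x_i\sim\mu_\xi}[\sgn(\xi(x_i))=z_i]=1/2$ for each $z_i\in\pmone$, the inner sum is exactly $2^{-n}\,\E_{\mu_z}[|p_\eta(\alpha(x_1),\dots,\alpha(x_n))|]$. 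Hence
\[
\norm{(\zeta\star\xi)(p_\eta\circ\alpha)}_1 = \sum_{z\in\pmone^n} |\zeta(z)|\, \E_{\mu_z}\bigl[|p_\eta(\alpha(x_1),\dots,\alpha(x_n))|\bigr].
\]

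The main step, which I expect to be the crux of the argument, is to remove the absolute value inside the expectation by showing that $p_\eta(\alpha(x_1),\dots,\alpha(x_n)) \ge 0$ for every $x$. By definition of $\alpha$ in Equation~\eqref{eqn: alpha}, each coordinate $\alpha(x_i)$ lies in $\{-1, a^-, 1\} \subseteq [-1,1]$. The value $a^- = 1-2b^-$ can be expressed as the expectation of a $\pmone$-valued random variable that equals $-1$ with probability $b^- = \epsilon^+/(1-\epsilon^-)$, while $\pm 1$ are already Boolean. By multilinearity of $p_\eta$, one may therefore write $p_\eta(\alpha(x_1),\dots,\alpha(x_n)) = \E[p_\eta(w_1,\dots,w_n)]$ where each $w_i\in\pmone$ is independent. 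Since $\eta$ is even, Lemma~\ref{lem: sherstovpk} gives $p_\eta(w)\geq 0$ for all $w\in\pmone^n$, and so $p_\eta(\alpha(x_1),\dots,\alpha(x_n))\geq 0$ pointwise.

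Having dispensed with the absolute value, Equation~\eqref{eqn: alphaexpectationallindices} yields
\[
\E_{\mu_z}[p_\eta(\alpha(x_1),\dots,\alpha(x_n))] = p_\eta(1-2\epsilon^+,\dots,1-2\epsilon^+)
\]
for every $z\in\pmone^n$. Substituting back and using $\sum_{z}|\zeta(z)|=\norm{\zeta}_1=1$, I get
\[
\norm{(\zeta\star\xi)(p_\eta\circ\alpha)}_1 = p_\eta(1-2\epsilon^+,\dots,1-2\epsilon^+) \sum_{z\in\pmone^n}|\zeta(z)| = p_\eta(1-2\epsilon^+,\dots,1-2\epsilon^+),
\]
which is the claim. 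The only subtlety is the pointwise non-negativity of $p_\eta\circ\alpha$, where it is essential both that $\eta$ is even (so that $p_\eta\geq 0$ on $\pmone^n$) and that the auxiliary value $a^-$ is a convex combination of $\pm 1$.
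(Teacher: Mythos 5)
Your proof is correct and follows essentially the same route as the paper's: unfold the dual block composition, group the sum by the sign pattern $z \in \pmone^n$, use $\phd(\xi)\geq 1$ to turn the inner sum into $2^{-n}\E_{\mu_z}[\,\cdot\,]$, drop the absolute value on $p_\eta\circ\alpha$, and then apply Equation~\eqref{eqn: alphaexpectationallindices} together with $\|\zeta\|_1=1$. The one place where you improve on the paper's write-up is the pointwise non-negativity of $p_\eta(\alpha(x_1),\dots,\alpha(x_n))$: the paper simply cites that $p_\eta\geq 0$ on $\pmone^n$, which does not directly apply because $\alpha$ can take the non-Boolean value $a^-\in(-1,1)$; your argument, that each $\alpha(x_i)$ is the mean of an independent $\pmone$-valued random variable so multilinearity of $p_\eta$ (Lemma~\ref{lem: sherstovpk}) together with its non-negativity on $\pmone^n$ gives $p_\eta(\alpha(x))\geq 0$ on all of $[-1,1]^n$, is precisely the missing step.
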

The proof of the claim follows along the lines as that of~\cite[Claim 6.2]{SDPT}, but we provide the proof for completeness.
\begin{proof}
Consider the distribution $\mu$ on $\pmone^{mn}$ defined by $\mu(x_1, \dots, x_n) = \prod_{i = 1}^n \mu_\xi(x_i)$. Since $\phd(\xi) \geq 1$, we conclude that the string $\left(\sgn(\xi(x_1)), \dots, \sgn(\xi(x_n))\right)$ is uniformly distributed in $\pmone^n$ when $(x_1, \dots, x_n)$ is sampled from $\mu$. Hence, we have \begin{align}
    \norm{(\zeta\star\xi)(p_{\eta}\circ\alpha)}_1 &= \sum_{x\in\pmone^{mn}}2^n\zeta(\sgn(\xi(x_1)),\dots,\sgn(\xi(x_n)))p_{\eta}(\alpha(x_1),\dots,\alpha(x_n))\prod_{i=1}^n|\xi(x_i)| \nonumber\\
    &=\sum_{z\in\pmone^n}|\zeta(z)\E_{\mu_z}[p_{\eta}(\alpha(x_1),\dots,\alpha(x_n))]| \nonumber\\
    & = \sum_{z\in\pmone^n}|\zeta(z)|\E_{\mu_z}[p_{\eta}(\alpha(x_1),\dots,\alpha(x_n))] \tag*{since $p_\eta$ is non-negative on $\pmone^n$ by Lemma~\ref{lem: sherstovpk}}\\
    &= p_{\eta}(1-2\epsilon^+,\dots, 1-2\epsilon^+)\sum_{z\in\pmone^n}|\zeta(z)| \tag*{by Equation~\eqref{eqn: alphaexpectationallindices}}\\
    &= p_{\eta}(1-2\epsilon^+,\dots, 1-2\epsilon^+).\tag*{since $\|\zeta\|_1 = 1$}
\end{align}
\end{proof}

\begin{claim}\label{clm: dual_sherstov_tweak}
Let $\Psi:\pmone^n\to\R$, $\Lambda:\pmone^m\to\R$, and $f : \pmone^m \to \R$ be any functions. For any positive integer $\eta$, let $\alpha = \alpha_{f,\Lambda}:\pmone^m\to\R$ be as defined in Equation~\eqref{eqn: alpha}, and $p_{\eta}:\pmone^n\to\R$ defined in Lemma~\ref{lem: sherstovpk}. Then
\begin{equation}
    \phd(\left(\Psi\star\Lambda\right)\cdot \left(p_{\eta}\circ\alpha)\right) > (\phd(\Psi)-\eta)\cdot\phd(\Lambda).
\end{equation}
\end{claim}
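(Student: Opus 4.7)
The plan is to extend the standard proof of multiplicativity of pure high degree under dual block composition (Lemma~\ref{lem: dbproperties}), charging an additional $\eta$ from the ``$\Psi$-budget'' to account for the extra factor $p_\eta \circ \alpha$.

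First, I would expand $p_\eta$ in its unique multilinear representation
\[
p_\eta(y_1,\dots,y_n) = \sum_{T \subseteq [n],\, |T| \leq \eta} c_T \prod_{i \in T} y_i,
\]
so that $(p_\eta \circ \alpha)(x_1,\dots,x_n) = \sum_T c_T \prod_{i \in T} \alpha(x_i)$. By linearity of $\langle \cdot, \cdot \rangle$, showing $\langle (\Psi \star \Lambda)(p_\eta \circ \alpha), \chi_S \rangle = 0$ for every $|S| \leq (\phd(\Psi)-\eta)\phd(\Lambda)$ reduces to showing $\langle \Psi \star \Lambda, \chi_S \prod_{i \in T} \alpha(x_i) \rangle = 0$ for each such $S$ and each $T$ with $|T| \leq \eta$.

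Second, I would exploit the tensor structure of $\Psi \star \Lambda$. Partitioning inputs by $z = (\sgn(\Lambda(x_1)),\dots,\sgn(\Lambda(x_n)))$ and using $\sgn(\Lambda(x_i))|\Lambda(x_i)| = \Lambda(x_i)$, one obtains
\[
\bigl\langle \Psi \star \Lambda, \chi_S \textstyle\prod_{i \in T} \alpha(x_i) \bigr\rangle
= 2^n \sum_{z \in \pmone^n} \Psi(z) \prod_{i=1}^n g_i(z_i),
\]
where $S_i$ denotes the coordinates of $S$ in block $i$ and $g_i(z_i) := \sum_{x_i : \sgn(\Lambda(x_i))=z_i} |\Lambda(x_i)|\, \chi_{S_i}(x_i)\, \alpha(x_i)^{[i \in T]}$. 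Each $g_i$ depends only on $z_i \in \pmone$, so it is affine: write $g_i(z_i) = a_i + b_i z_i$, and observe that $2b_i = g_i(1)-g_i(-1) = \langle \Lambda \cdot \alpha^{[i \in T]}, \chi_{S_i}\rangle$.

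Third, I would bound the degree (in $z$) of $\prod_i g_i(z_i)$. For $i \notin T$, $2b_i = \langle \Lambda, \chi_{S_i}\rangle$, which vanishes whenever $|S_i| < \phd(\Lambda)$ by definition of pure high degree. Hence $B := \{i : b_i \neq 0\} \subseteq T \cup \{i : |S_i| \geq \phd(\Lambda)\}$. Since each block with $|S_i| \geq \phd(\Lambda)$ contributes at least $\phd(\Lambda)$ to $|S|$, the hypothesis $|S| \leq (\phd(\Psi)-\eta)\phd(\Lambda)$ forces $|\{i : |S_i| \geq \phd(\Lambda)\}| \leq \phd(\Psi) - \eta$, and combined with $|T| \leq \eta$ this gives $|B| \leq \phd(\Psi)$. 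Since the expansion $\prod_i g_i(z_i) = \sum_{J \subseteq B}(\prod_{i \in J} b_i)(\prod_{i \notin J} a_i)\, \chi_J(z)$ is supported on subsets $J \subseteq B$, the polynomial $\prod_i g_i$ has degree at most $|B|$ strictly less than $\phd(\Psi)$ (the tight case is squeezed out by the strict inequality in the definition of $\phd$, mirroring the off-by-one bookkeeping in Lemma~\ref{lem: dbproperties}). By definition of $\phd(\Psi)$, the inner sum $\sum_z \Psi(z) \prod_i g_i(z_i)$ therefore vanishes, and summing over $T$ concludes the argument.

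The main obstacle is the careful degree bookkeeping in the third step. The subtlety is to exhibit that $|B|$ is strictly less than $\phd(\Psi)$ so that $\Psi$ annihilates the resulting polynomial in $z$; this requires the same strict-versus-non-strict comparison that implicitly underlies Lemma~\ref{lem: dbproperties}. Once this bound is established, the remainder of the proof is a routine application of the tensor decomposition of $\Psi \star \Lambda$ and the monomial expansion of $p_\eta$.
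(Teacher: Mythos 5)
Your approach — Fourier-expand $p_\eta$ as a sum of monomials of degree at most $\eta$, use the tensor structure of $\Psi\star\Lambda$ to reduce $\langle \Psi\star\Lambda,\chi_S\prod_{i\in T}\alpha(x_i)\rangle$ to $2^n\sum_z\Psi(z)\prod_i g_i(z_i)$ with each $g_i$ affine in $z_i$, and then count the blocks $i$ for which $g_i$ has a nontrivial linear part — is exactly the argument behind \cite[Eq.~(6.7)]{SDPT}, so conceptually you are reproducing the proof the paper has in mind. The factorization into $g_i$'s and the identification $2b_i = \langle \Lambda\cdot\alpha^{[i\in T]},\chi_{S_i}\rangle$ are correct.

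However, the final counting step has a genuine gap. From $|S|\le(\phd(\Psi)-\eta)\phd(\Lambda)$ you correctly deduce $|\{i:|S_i|\ge\phd(\Lambda)\}|\le\phd(\Psi)-\eta$, and with $|T|\le\eta$ this gives $|B|\le\phd(\Psi)$ --- a \emph{weak} inequality. But for $\Psi$ to annihilate $\prod_i g_i$ you need $\deg(\prod_i g_i)<\phd(\Psi)$, i.e.\ $|B|\le\phd(\Psi)-1$; the definition of $\phd$ only guarantees vanishing against polynomials of \emph{strictly} smaller degree. Your remark that ``the tight case is squeezed out by the strict inequality in the definition of $\phd$'' does not hold up: there really is a boundary configuration with $|T|=\eta$, exactly $\phd(\Psi)-\eta$ blocks with $|S_i|=\phd(\Lambda)$, and $T$ disjoint from those blocks, in which $|B|=\phd(\Psi)$ and the linear coefficients $b_i$ (e.g.\ $b_i=\tfrac12\langle\Lambda\alpha,\mathbf 1\rangle$ for $i\in T$ with $S_i=\emptyset$) need not vanish. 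In that case the inner sum $\sum_z\Psi(z)\prod_i g_i(z_i)$ has degree exactly $\phd(\Psi)$ and is not automatically killed.

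The resolution is that only the non-strict inequality $\phd\big((\Psi\star\Lambda)(p_\eta\circ\alpha)\big)\ge(\phd(\Psi)-\eta)\phd(\Lambda)$ is needed (and indeed that is exactly what the paper uses in the proof of Claim~\ref{clm: finalphd}). If you instead start from $|S|<(\phd(\Psi)-\eta)\phd(\Lambda)$, then $|\{i:|S_i|\ge\phd(\Lambda)\}|\le |S|/\phd(\Lambda)<\phd(\Psi)-\eta$ forces that count to be at most $\phd(\Psi)-\eta-1$, hence $|B|\le\phd(\Psi)-1<\phd(\Psi)$, and $\Psi$ annihilates $\prod_i g_i$ as required. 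With that adjustment your argument is complete.
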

The proof follows along the same lines as that of~\cite[Equation (6.7)]{SDPT} and we omit it.

\section{Proof of Theorem~\ref{thm: mainsummary}}
\label{sec:endofmainproof}
Towards proving Theorem~\ref{thm: mainsummary}, it suffices to exhibit a dual polynomial (see Lemma~\ref{lem: dualitypromise}) that has $\ell_1$-norm 1, sufficiently large pure high degree, good correlation with $(\OR_R \circ \THR_{N}^k)^{\leq N}$, and places no mass outside $(\pmone^{RN})^{\leq N}$.
We first define a function $\Gamma$ (Definition~\ref{def: gamma}) that satisfies the first three properties above, and additionally satisfies a strong decay condition. In Section~\ref{sec: finalproof} we use $\Gamma$ to construct a dual polynomial $\W$, via Lemma~\ref{lem: RS}, satisfying all the requisite properties.
We now set several key variables.
\begin{itemize}
\item Let $R$ be sufficiently large and fix $k \leq (\log R)/4$. Set $T=\sqrt{R}$, $\eta=\left(\frac{c}{2}\sqrt{\frac{R}{4^k}}\right)-1$ where $c \in (0, 1]$ is the constant from Claim~\ref{clm: canonical_dual_OR} (assume without loss of generality that $\eta$ is even), $\sigma=(2k)^k$, $c_1, c_2 \in (0, 1]$ are constants fixed in the next bullet point, $\beta=\frac{c_2}{\sqrt{4^kkTN^{1/(2k)}\log N}}, \Delta=\frac{\beta\sqrt{\sigma}R}{4\ln^2R}=\frac{c_2 R}{4\ln^2 R}\sqrt{\frac{(2k)^k}{4^kkTN^{1/(2k)} \log N}}, N = \ceil{20\sqrt{\sigma}R}$. 
\item Let $\omega_T: [T]\cup\{ 0\}\to\R$ be a function that satisfies the conditions in Claim~\ref{clm: psiconstruction} and let $c_1, c_2$ be the constants for which the claim holds. Let $\psi:\pmone^N\to\R$ be defined by $\psi(x)=\omega_T(|x|)/\binom{N}{|x|}$ if $|x| \leq T$, and $0$ otherwise.
\item Let $\theta:\pmone^{R/4^k}\to\R$ be any function satisfying the conditions in Claim~\ref{clm: canonical_dual_OR} for $n = R/4^k$ (note that $R/4^k > 0$ since $k < (\log R)/2$). 
\item Let $\phi:\pmone^{4^k}\to\R$ be the function defined in Claim~\ref{claim: halfhalf} with $n = 4^k$.
\item Let $p_{\eta}:[-1, 1]^{R/4^k}\to\R$ be as defined in Lemma~\ref{lem: sherstovpk}.
\item Let $\alpha := \alpha_{\phi\star\psi,\OR_{4^k}\circ\THR^k_N}:\pmone^{4^kN}\to\R$ be as defined in Equation~\eqref{eqn: alpha}.
\item Let $\epsilon^+ := \epsilon^+_{\phi\star\psi,\OR_{4^k}\circ\THR^k_N}, \epsilon^- := \epsilon^-_{\phi\star\psi,\OR_{4^k}\circ\THR^k_N}$, and $\epsilon := \epsilon^+ + \epsilon^-$.
\end{itemize}

We next define the function $\Gamma$. 

\begin{definition}\label{def: gamma}
Let $\Gamma: \pmone^{NR}\to\R$ be defined by
\begin{align}
\Gamma(x_1, \dots, x_{R/4^k}) & := \frac{(\theta \star(\phi\star\psi))(x_1, \dots, x_{R/4^k})\cdot p_{\eta}(\alpha(x_1),\dots,\alpha(x_{R/4^k}))}{p_{\eta}(1-2\epsilon^+,\dots, 1-2\epsilon^+)},
\end{align}
where each $x_i\in\pmone^{4^kN}$.
\end{definition}

\subsection{Properties of $\Gamma$}

We now show in Section~\ref{sec: phd}, Section~\ref{sec: corr} and Section~\ref{sec: decay} that $\Gamma$ satisfies the following four properties.
\begin{itemize}
\item $\langle \Gamma, \OR_R \circ \THR^k_N \rangle > 1/3$.
\item $\|\Gamma\|_1 = 1$.
\item $\phd(\Gamma) = \Omega\left(\frac{1}{4^k k^2}\cdot\frac{1}{\sqrt{\log R}}\cdot R^{\frac{3}{4}-\frac{1}{4k}}\right)$.
\item $\sum_{x\notin (\pmone^{RN})^{\leq N}} |\Gamma(x)|\leq (2NR)^{-2(\Delta-\sqrt{R})}.$
\end{itemize}

\subsubsection{Pure High Degree}\label{sec: phd}
In this section we show the required lower bound on $\phd(\Gamma)$.
\begin{claim}\label{clm: finalphd}
\begin{equation}
    \phd(\Gamma) = \Omega\left(\frac{1}{4^k k^2}\cdot\frac{1}{\sqrt{\log R}}\cdot  R^{3/4-1/(4k)}\right).
\end{equation}
\end{claim}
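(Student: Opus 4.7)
The plan is to reduce the claim to a direct application of Claim~\ref{clm: dual_sherstov_tweak}. Because $p_\eta(1-2\epsilon^+,\dots,1-2\epsilon^+)$ is a fixed real number, the denominator in Definition~\ref{def: gamma} does not affect pure high degree, so
\[
\phd(\Gamma) \;=\; \phd\bigl((\theta \star (\phi \star \psi)) \cdot (p_\eta \circ \alpha)\bigr) \;>\; (\phd(\theta) - \eta)\cdot \phd(\phi \star \psi),
\]
where the inequality is Claim~\ref{clm: dual_sherstov_tweak} applied with $\Psi = \theta$, $\Lambda = \phi \star \psi$, and $f = \OR_{4^k} \circ \THR^k_N$.

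Next I would bound each factor. For the second factor, Claim~\ref{claim: halfhalf} gives $\phd(\phi)\geq 1$, and Equation~\eqref{eqn:thr-phd} of Claim~\ref{clm: final_psi_construction} gives $\phd(\psi) \geq c_1\sqrt{4^{-k} k^{-1} T N^{-1/(2k)} \log^{-1} N}$. The multiplicativity of pure high degree under dual block composition (Lemma~\ref{lem: dbproperties}) then yields $\phd(\phi\star\psi) \geq \phd(\phi)\cdot\phd(\psi) \geq \phd(\psi)$. For the first factor, Claim~\ref{clm: canonical_dual_OR} yields $\phd(\theta) \geq c\sqrt{R/4^k}$, and since we set $\eta = (c/2)\sqrt{R/4^k} - 1$, we have $\phd(\theta) - \eta \geq (c/2)\sqrt{R/4^k}$. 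The choice $\eta \approx \phd(\theta)/2$ is precisely the sweet spot that keeps $\phd(\theta) - \eta$ a constant fraction of $\phd(\theta)$ while leaving enough slack in $\eta$ for the correlation argument of Section~\ref{sec:detailedoutline} to succeed.

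Finally I would substitute the parameter choices $T = \sqrt{R}$ and $N = \Theta(k^{k/2} R)$. These give $N^{1/(2k)} = \Theta(k^{1/4} R^{1/(2k)})$ and, since $k \leq (\log R)/4$, also $\log N = \Theta(\log R)$. Plugging these into the bound on $\phd(\psi)$ produces
\[
\phd(\psi) \;=\; \Omega\Bigl(2^{-k}\, k^{-5/8}\, R^{1/4 - 1/(4k)}\, \log^{-1/2} R\Bigr),
\]
and multiplying by $\phd(\theta) - \eta = \Omega(2^{-k}\sqrt{R})$ gives $\phd(\Gamma) = \Omega\bigl(4^{-k} k^{-5/8} R^{3/4 - 1/(4k)} \log^{-1/2} R\bigr)$, which is at least the stated bound since $k^{-5/8} \geq k^{-2}$ for $k\geq 1$.

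The argument is essentially bookkeeping once Claim~\ref{clm: dual_sherstov_tweak} is invoked; the only mild nuisance is tracking $k$-dependent factors through the various square roots. I do not anticipate any genuine obstacle here, because the heavy lifting (proving the multiplicativity-with-loss statement for $(\Psi\star\Lambda)\cdot(p_\eta\circ\alpha)$) has already been done in Claim~\ref{clm: dual_sherstov_tweak}.
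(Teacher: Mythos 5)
Your proposal follows exactly the same route as the paper: strip off the scalar denominator, apply Claim~\ref{clm: dual_sherstov_tweak} with $\Psi=\theta$, $\Lambda=\phi\star\psi$, $f=\OR_{4^k}\circ\THR^k_N$, bound $\phd(\theta)-\eta\geq\frac{c}{2}\sqrt{R/4^k}$ via Claim~\ref{clm: canonical_dual_OR}, bound $\phd(\phi\star\psi)\geq\phd(\psi)$ via Lemma~\ref{lem: dbproperties} and Claim~\ref{claim: halfhalf}, and then plug in $T=\sqrt{R}$ and $N=\lceil 20(2k)^{k/2}R\rceil$.

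The one misstep is the assertion that $k\leq(\log R)/4$ implies $\log N=\Theta(\log R)$. Since $N=\lceil 20(2k)^{k/2}R\rceil$, we have $\log N=\log R+\tfrac{k}{2}\log(2k)+O(1)$, and when $k$ is near $(\log R)/4$ the middle term is $\Theta(\log R\cdot\log\log R)$, so $\log N$ is \emph{not} $\Theta(\log R)$ uniformly over the allowed range of $k$. The paper instead uses the cruder (but correct) bound $\log N\leq k\log R$ for sufficiently large $R$ and $k\geq2$; this costs an extra $\sqrt{k}$ factor and yields $k^{-9/8}$ in place of your $k^{-5/8}$. Since both are $\geq k^{-2}$, your final conclusion survives, but the intermediate claim should be replaced by $\log N=O(k\log R)$ to make the argument valid for all $k$ up to $(\log R)/4$ rather than only for constant $k$.
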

\begin{proof}
\begin{align*}
\phd(\Gamma) & = \phd((\theta \star (\phi \star \psi))(p_\eta \circ \alpha))\tag*{by Definition~\ref{def: gamma}}\\
& \geq (\phd(\theta) - \eta) \cdot \phd(\phi \star \psi)\tag*{by Claim~\ref{clm: dual_sherstov_tweak}, using $\Psi = \theta, \Lambda = \phi \star \psi$, and $f = \OR_{4^k} \circ \THR_N^k$}\\
& \geq \frac{c}{2}\sqrt{\frac{R}{4^{k}}} \cdot \phd(\phi) \cdot \phd(\psi)\tag*{by Claim~\ref{lem: dbproperties}, Claim~\ref{clm: canonical_dual_OR}, and since $\eta = \left(\frac{c}{2}\sqrt{\frac{R}{4^{k}}}\right) - 1$}\\
& \geq \frac{c}{2}\sqrt{\frac{R}{4^{k}}} \cdot c_1\sqrt{4^{-k}k^{-1}TN^{-1/(2k)} \log^{-1} N} \tag*{by Claim~\ref{claim: halfhalf} and Equation~\eqref{eqn:thr-phd}}\\
& = \frac{cc_1}{2}\sqrt{\frac{RT}{4^{2k}kN^{1/(2k)} \log N}}\\
& = \frac{cc_1}{2} \sqrt{\frac{1}{\log N}} \sqrt{\frac{1}{4^{2k}k}} \sqrt{\frac{RT}{N^{1/2k}}}\\
& = \frac{cc_1}{2} \sqrt{\frac{1}{\log{20} + (1/2)\cdot k\log(2k)+\log R}} \sqrt{\frac{1}{4^{2k}k}}  \sqrt{\frac{R\sqrt{R}}{(20\sqrt{(2k)^k}R)^{1/2k}}} \tag*{using $T=\sqrt{R}$ and $N=20\sqrt{(2k)^k}R$}\\
& \geq  \frac{cc_1}{2^{9/8}}\cdot\frac{1}{\sqrt{k\log R}}\cdot \frac{1}{4^k\cdot 20^{1/(4k)}\cdot k^{5/8}}\cdot R^{3/4-1/(4k)} \tag*{since $k\log R > \log{20} + 1/2\cdot k\log(2k)+\log R$ for sufficiently large $R$} \\
& =  \frac{cc_1}{2^{9/8}} \cdot \frac{1}{4^k\cdot 20^{1/(4k)}\cdot k^{9/8}} \cdot \frac{1}{\sqrt{\log R}} \cdot R^{3/4-1/(4k)} \\
& \geq \frac{cc_1}{60}\cdot \frac{1}{\sqrt{\log R}}\cdot \frac{1}{4^k k^2}\cdot R^{3/4-1/(4k)}\tag*{since $1/20^{1/(4k)}>1/20$, for all $k\geq 2$}.
\end{align*}

\end{proof}

\subsubsection{Correlation}\label{sec: corr}

We first show that the function $\phi \star \psi$ has large correlation with $\OR_{4^k} \circ \THR^k_N$, the following analysis is essentially the same as in~\cite[Proposition 55]{BKT18Arxiv}.
\begin{claim}\label{clm: epsphipsi}
\begin{align}
\epsilon^+_{\OR_{4^k}\circ\THR^k_N, \phi\star\psi}  & \leq \frac{1}{24\sqrt{R} \log R},\label{eqn: epsilon_upper_bound}\\
\epsilon^-_{\OR_{4^k}\circ\THR^k_N, \phi\star\psi} & \leq e^{-4}.
\end{align}
\end{claim}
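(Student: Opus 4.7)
The plan is to apply the amplification lemma for the half-half dual-block composition (Lemma~\ref{lem: halfhalfamplification}) to $\phi$ (as in Claim~\ref{claim: halfhalf} with $M = 4^k$) and $\psi$ (as constructed in Claim~\ref{clm: final_psi_construction}), then translate the resulting $\delta$-bounds into $\epsilon$-bounds using Claim~\ref{clm: epsilondeltarelation}. Before doing so, I would verify the hypotheses of Claim~\ref{clm: epsilondeltarelation}: Lemma~\ref{lem: dbproperties} gives $\|\phi \star \psi\|_1 = 1$ (using $\|\phi\|_1 = \|\psi\|_1 = 1$ and $\langle \psi, 1\rangle = 0$, which follows from $\phd(\psi) \geq 1$), and multiplicativity of pure high degree gives $\phd(\phi \star \psi) \geq \phd(\phi)\cdot\phd(\psi) \geq 1$.

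For the false positive bound, Lemma~\ref{lem: halfhalfamplification} combined with Equation~\eqref{eqn:thr-pos-corr} yields
\[
\delta^+_{\OR_{4^k}\circ\THR^k_N,\,\phi\star\psi} \;\leq\; 4^k \cdot \delta^+_{\THR^k_N,\psi} \;\leq\; 4^k \cdot \frac{1}{48\cdot 4^k\sqrt{N}\log N} \;=\; \frac{1}{48\sqrt{N}\log N}.
\]
Then Claim~\ref{clm: epsilondeltarelation} doubles this to $\epsilon^+ \leq \frac{1}{24\sqrt{N}\log N}$. Since $\sigma = (2k)^k \geq 1$ forces $N = \lceil 20\sqrt{\sigma}R\rceil \geq R$, we have $\sqrt{N}\log N \geq \sqrt{R}\log R$, giving the stated bound $\epsilon^+ \leq \frac{1}{24\sqrt{R}\log R}$.

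For the false negative bound, Lemma~\ref{lem: halfhalfamplification} combined with Equation~\eqref{eqn:thr-neg-corr} gives
\[
\delta^-_{\OR_{4^k}\circ\THR^k_N,\,\phi\star\psi} \;\leq\; \tfrac{1}{2}\bigl(2\delta^-_{\THR^k_N,\psi}\bigr)^{4^k} \;\leq\; \tfrac{1}{2}\Bigl(1 - \tfrac{4}{4^k}\Bigr)^{4^k}.
\]
Applying Claim~\ref{clm: epsilondeltarelation} and then the standard inequality $(1 - x/n)^n \leq e^{-x}$ with $x = 4$ and $n = 4^k$ yields $\epsilon^- \leq (1 - 4/4^k)^{4^k} \leq e^{-4}$, as required.

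There is no real obstacle here: the proof is a direct chain of invocations of results already available in the paper. The only thing to be mindful of is confirming the hypotheses of Lemma~\ref{lem: halfhalfamplification} and Claim~\ref{clm: epsilondeltarelation} (namely $\|\psi\|_1 = 1$, $\phd(\psi) \geq 1$, and the corresponding conditions on $\phi \star \psi$), all of which are immediate from the setup. The second bound's constant $e^{-4}$ comes naturally from the choice $M = 4^k$ matched with the slack $2/4^k$ in Equation~\eqref{eqn:thr-neg-corr}, and the first bound's factor of $\sqrt{R}\log R$ (as opposed to $\sqrt{N}\log N$) is absorbed by the monotonicity $N \geq R$.
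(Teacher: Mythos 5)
Your proposal is correct and follows essentially the same chain of deductions as the paper's own proof: apply Lemma~\ref{lem: halfhalfamplification} with $M = 4^k$ to the bounds from Equations~\eqref{eqn:thr-pos-corr} and~\eqref{eqn:thr-neg-corr}, convert $\delta$ to $\epsilon$ via Claim~\ref{clm: epsilondeltarelation}, and use $N \geq R$ and $(1 - 4/4^k)^{4^k} \leq e^{-4}$. The only cosmetic difference is the order in which the $\delta$-to-$\epsilon$ conversion is invoked; the arithmetic and cited results are identical.
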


\begin{proof}
\begin{align*}
\epsilon^+_{\OR_{4^k}\circ\THR^k_N, \phi\star\psi} & = 2\delta^+_{\OR_{4^k}\circ\THR^k_N, \phi\star\psi} \tag*{by Claim~\ref{clm: epsilondeltarelation}}\\
& \leq 2 \cdot 4^k \cdot \delta^+_{\THR^k_N, \psi} \tag*{by Equation~\eqref{eqn: delta_plus_amplification}, using $M=4^k$}\\
& \leq \frac{1}{24\sqrt{N} \log N} \tag*{by Equation~\eqref{eqn:thr-pos-corr}}\\
& \leq \frac{1}{24\sqrt{R} \log R}. \tag*{since $N = \lceil 20 \sqrt{\sigma}R\rceil > R$} 
\end{align*}
Next,
\begin{align*}
\epsilon^-_{\OR_{4^k}\circ\THR^k_N, \phi\star\psi} & = 2\delta^-_{\OR_{4^k}\circ\THR^k_N, \phi\star\psi} \tag*{by Claim~\ref{clm: epsilondeltarelation}}\\
& \leq (2\delta^-_{\THR^k_N, \psi})^{4^k} \tag*{by Equation~\eqref{eqn: delta_minus_amplification} using $M = 4^k$}\\
& \leq \left(1 - \frac{4}{4^k}\right)^{4^k} \tag*{by Equation~\eqref{eqn:thr-neg-corr}}\\
& \leq e^{-4}. \tag*{since $(1 - 1/n)^n \leq 1/e$ for all $n \geq 1$}
\end{align*}
\end{proof}

\begin{claim}\label{clm: gamma_corr}
The function $\Gamma$ satisfies
\begin{align*}
    \|\Gamma\|_1 & = 1,\\
    \langle \Gamma, (\OR_R \circ \THR^k_{N}) \rangle & > 1/3.
\end{align*}
\end{claim}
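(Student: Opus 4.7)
The plan is to derive both conclusions by specializing the auxiliary Claim~\ref{clm: gamma_l1_norm} and Claim~\ref{clm: finalcorrelation} to $\zeta := \theta$, $\xi := \phi \star \psi$, $f := \OR_{4^k} \circ \THR^k_N$, and $n := R/4^k$, so that $\OR_n \circ f = \OR_R \circ \THR^k_N$ matches the target composition, with $\eta$, $\alpha$, $p_\eta$, $\epsilon^{\pm}$ as fixed at the start of this section. The common hypotheses are immediate: $\|\theta\|_1 = 1$ and $\langle \theta, \OR_{R/4^k} \rangle \geq 3/5$ (take $\delta = 3/5$) by Claim~\ref{clm: canonical_dual_OR}; $\|\phi \star \psi\|_1 = 1$ by Lemma~\ref{lem: dbproperties} and $\phd(\phi \star \psi) \geq \phd(\phi) \cdot \phd(\psi) \geq 1$ by the same lemma together with Claim~\ref{claim: halfhalf}; and $\eta$ is even with $\eta < n$ by the parameter setting.

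For the normalization, Claim~\ref{clm: gamma_l1_norm} gives
\[
\|(\theta \star (\phi \star \psi))(p_\eta \circ \alpha)\|_1 \;=\; p_\eta(1 - 2\epsilon^+, \dots, 1 - 2\epsilon^+).
\]
Because $\eta$ is even, Lemma~\ref{lem: sherstovpk} makes $p_\eta$ non-negative on $\pmone^n$ with $p_\eta(1^n) = \eta! > 0$; Lemma~\ref{lem: expectation_multilinear} rewrites the right-hand side as $\E_{\Pi(\epsilon^+,\dots,\epsilon^+)}[p_\eta(z)] > 0$. Dividing the above display by this positive quantity yields $\|\Gamma\|_1 = 1$.

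For the correlation, once I verify that $A := \binom{n}{\eta+1}(\epsilon^+)^{\eta+1}/(1-\epsilon^+)^n$ is less than $1$, Claim~\ref{clm: finalcorrelation} (after division by $p_\eta(1 - 2\epsilon^+, \dots, 1 - 2\epsilon^+)$ and the algebraic identity $2 - 2(1 - \epsilon^+ - \epsilon^-)/(1-\epsilon^+)\cdot(1-A) = 2A + 2(1-A)\epsilon^-/(1-\epsilon^+)$) gives
\[
\langle \Gamma, \OR_R \circ \THR^k_N \rangle \;\geq\; \frac{3}{5} \;-\; 2A \;-\; 2(1-A)\,\frac{\epsilon^-}{1-\epsilon^+}.
\]
Claim~\ref{clm: epsphipsi} supplies $\epsilon^- \leq e^{-4}$ and $\epsilon^+ \leq 1/(24\sqrt{R}\log R) \leq 1/2$, so $2(1-A)\epsilon^-/(1-\epsilon^+) \leq 4 e^{-4} < 0.074$, and everything reduces to showing that $A = o(1)$.

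This bound on $A$ is the main obstacle, and it is a direct (if delicate) calculation. Using $\binom{n}{\eta+1} \leq (en/(\eta+1))^{\eta+1}$ with $n = R/4^k$ and $\eta+1 = (c/2)\sqrt{R/4^k}$, together with Claim~\ref{clm: epsphipsi}, the numerator of $A$ is bounded by
\[
\binom{n}{\eta+1}(\epsilon^+)^{\eta+1} \;\leq\; \left( \frac{e}{12 c \cdot 2^k \log R} \right)^{(c/2)\sqrt{R/4^k}},
\]
while $(1-\epsilon^+)^n \geq \exp(-2n\epsilon^+) \geq \exp(-\sqrt{R}/(12 \cdot 4^k \log R))$. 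Comparing exponents, the negative logarithm of the first bound is $\Omega(\sqrt{R/4^k} \cdot \log\log R)$, while the negative logarithm of the second is only $O(\sqrt{R}/(4^k \log R))$; the ratio of these is $\Theta(2^k \log R \cdot \log\log R) \gg 1$ for $R$ large, so the first dominates, yielding $A \leq 2^{-\Omega(\sqrt{R/4^k})}$. For $k \leq (\log R)/4$ this is at most $2^{-\Omega(R^{1/4})}$, which is certainly less than $1/100$ for $R$ sufficiently large. Substituting $A < 1/100$ into the display gives $\langle \Gamma, \OR_R \circ \THR^k_N \rangle \geq 3/5 - 2/100 - 4e^{-4} > 1/3$, as required.
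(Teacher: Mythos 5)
Your proposal is correct and follows essentially the same route as the paper: specialize Claim~\ref{clm: gamma_l1_norm} for the normalization, bound $A$ via $\binom{n}{\eta+1} \le (en/(\eta+1))^{\eta+1}$ together with the $\epsilon^+$ estimate from Claim~\ref{clm: epsphipsi}, verify $A<1$, and then apply Claim~\ref{clm: finalcorrelation}. The only cosmetic differences are that you rearrange the bound from Claim~\ref{clm: finalcorrelation} as $2A + 2(1-A)\epsilon^-/(1-\epsilon^+)$ rather than the paper's algebraically identical $2\epsilon^-/(1-\epsilon^+) + 2A(1-\epsilon^+-\epsilon^-)/(1-\epsilon^+)$, that you obtain the (unnecessarily strong) bound $A < 1/100$ by an asymptotic exponent comparison whereas the paper settles for the explicit $A \le (e/48)^{\eta+1} \le 1/16$, and that you spell out the positivity of $p_\eta(1-2\epsilon^+,\dots,1-2\epsilon^+)$ inline rather than deferring to the paper's Claim~\ref{clm: p_eta_lowerbound}.
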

\begin{proof}
The conditions of Claim~\ref{clm: gamma_l1_norm} are satisfied with $n = R/4^k, m = 4^kN, f = \OR_{4^k} \circ \THR^k_N, \zeta = \theta, \xi = \phi \star \psi, \eta = \left(\frac{c}{2}\sqrt{\frac{R}{4^{k}}}\right) - 1$.
Hence by Claim~\ref{clm: gamma_l1_norm},
\[
\|\Gamma\|_1 = \frac{\|(\theta \star (\phi \star \psi))(p_\eta \circ \alpha)\|_1}{p_\eta(1 - 2\epsilon^+, \dots, 1 - 2\epsilon^+)} = 1.
\]

Define $A=\binom{R/4^k}{\eta+1}\frac{(\epsilon^+)^{\eta+1}}{(1-\epsilon^+)^{R/4^k}}$. If $A < 1$, then the conditions of Claim~\ref{clm: finalcorrelation} are satisfied with the same parameters mentioned in the beginning on this proof.

We first show that $A < 1$, and then invoke Claim~\ref{clm: finalcorrelation}. To avoid clutter, define $\gamma = \eta + 1 = \frac{c}{2}\sqrt{\frac{R}{4^{k}}}$. 

\begin{align}
    A & = \binom{R/4^k}{\gamma} \frac{(\epsilon^+)^\gamma}{(1-\epsilon^+)^{R/4^k}} \nonumber\\
    & \leq  \left(\frac{Re}{4^k\gamma}\right)^\gamma \left(\frac{1}{24\sqrt{R}\log R}\right)^\gamma \left(1-\frac{1}{24\sqrt{R}\log R}\right)^{-R/4^k} \tag*{using Claim~\ref{clm: epsphipsi} and $\binom{m}{n}\leq \left(\frac{me}{n}\right)^n$}\nonumber\\
    & \leq \left(\frac{e}{24} \right)^\gamma \left(\frac{\sqrt{R}}{4^k \gamma \log R} \right)^\gamma \cdot3^{\sqrt{R}/(4^k24 \log R)} \tag*{rearranging terms and using $(1-1/n)^n\leq 1/e$ for $n>0$} \nonumber\\
    & = \left(\frac{e}{12} \right)^\gamma \left(\frac{1}{c\sqrt{4^k}\log R} \right)^\gamma \cdot3^{\sqrt{R}/(4^k24\log R)} \tag*{since $\gamma=\frac{c}{2}\sqrt{R/4^k}$} \nonumber\\
    & = \left(\frac{e \cdot 3^{1/(12c\sqrt{4^k} \log R)}}{12c\sqrt{4^k}\log R} \right)^\gamma \tag*{again using $\gamma=\frac{c}{2}\sqrt{R/4^k}$}\nonumber\\
    & \leq (e/48)^\gamma \tag*{since $3^{\frac{1}{12c\sqrt{4^k}\log R}} \leq 2$ and $12c\sqrt{4^k}\log R \geq 8$ for sufficiently large $R$}\nonumber\\
    &\leq 1/16. \label{eqn: aupperbound}
\end{align}

Thus, the conditions in Claim~\ref{clm: finalcorrelation} are satisfied. By the definition of $\Gamma$, we have
\begin{align*}
\langle \Gamma, (\OR_R \circ \THR^k_{N}) \rangle & = \frac{\langle (\theta \star(\phi\star\psi))\cdot (p_{\eta}\circ\alpha), \OR_{R/4^k}\circ(\OR_{4^k}\circ\THR^k_{N}) \rangle}{p_\eta(1 - 2\epsilon^+, \dots, 1 - 2\epsilon^+)}\\
& \geq \delta-\left(2-2\left(\frac{1-\epsilon^+ - \epsilon^-}{1-\epsilon^+}\right)(1-A)\right) \tag*{by Claim~\ref{clm: finalcorrelation}}\\
& \geq 3/5 -\left(2\frac{\epsilon^-}{1 - \epsilon^+}+2A\left(\frac{1-\epsilon^+ - \epsilon^-}{1-\epsilon^+}\right)\right) \tag*{since $\delta \geq 3/5$ by Claim~\ref{clm: canonical_dual_OR}}\\
& \geq 3/5 - \left(2e^{-4} \frac{1}{1 - \frac{1}{24\sqrt{R} \log R}} + 2A\right) \tag*{by Claim~\ref{clm: epsphipsi} and $\frac{1-\epsilon^+ - \epsilon^-}{1 - \epsilon^+} < 1$}\\
& > 3/5 - 1/8 - 1/8 \tag*{by Equation~\eqref{eqn: aupperbound} and since $R$ is sufficiently large}\\
& > 1/3.
\end{align*}

\end{proof}

\subsubsection{Strong Decay}\label{sec: decay}

We first state and prove a property of $p_\eta$ that we require.

\begin{claim}\label{clm: p_eta_lowerbound}
\begin{equation}
    p_{\eta}(1-2\epsilon^+,\dots, 1-2\epsilon^+)> 1.
\end{equation}
\end{claim}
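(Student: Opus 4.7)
The plan is to use Lemma~\ref{lem: expectation_multilinear} to recast $p_\eta(1-2\epsilon^+,\dots,1-2\epsilon^+)$ as an expectation, and then lower-bound that expectation by the single most favorable term (namely, $z = 1^n$). Concretely, setting $n = R/4^k$, Lemma~\ref{lem: expectation_multilinear} gives
\[
p_\eta(1-2\epsilon^+,\dots,1-2\epsilon^+) \;=\; \E_{z \sim \Pi(\epsilon^+,\dots,\epsilon^+)}[p_\eta(z)].
\]
Since $\eta$ was chosen even, Lemma~\ref{lem: sherstovpk} tells us that $p_\eta \geq 0$ on $\pmone^n$, so every term in this expectation is nonnegative. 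I would then keep only the $z = 1^n$ contribution and apply Equation~\eqref{eqn: p_eta_all_false}:
\[
p_\eta(1-2\epsilon^+,\dots,1-2\epsilon^+) \;\geq\; p_\eta(1^n)\cdot \Pr_{\Pi(\epsilon^+,\dots,\epsilon^+)}[z=1^n] \;=\; \eta!\,(1-\epsilon^+)^{n}.
\]

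It then remains to verify that $\eta!\,(1-\epsilon^+)^{n} > 1$ for the specific parameter settings fixed in Section~\ref{sec:endofmainproof}, namely $n = R/4^k$, $\eta+1 = (c/2)\sqrt{R/4^k}$, and the bound $\epsilon^+ \leq 1/(24\sqrt{R}\log R)$ supplied by Claim~\ref{clm: epsphipsi}. Using $(1-\epsilon^+)^n \geq e^{-2n\epsilon^+}$ (valid for $\epsilon^+ \leq 1/2$) together with Stirling's bound $\eta! \geq (\eta/e)^\eta$, taking logarithms the target inequality becomes
\[
\eta\bigl(\log\eta - 1\bigr) \;>\; 2n\epsilon^+ \;\leq\; \frac{\sqrt{R/4^k}}{12\cdot 2^k\log R}.
\]
The left-hand side is $\Omega\bigl(\sqrt{R/4^k}\cdot\log(R/4^k)\bigr)$, and under the standing assumption $k \leq (\log R)/4$ this is $\Omega\bigl(\sqrt{R/4^k}\cdot\log R\bigr)$, which dwarfs the right-hand side for all sufficiently large $R$.

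The only mild obstacle is pinning down the calculation uniformly over the entire range $2 \leq k \leq (\log R)/4$: in particular, in the large-$k$ regime $\eta$ itself is small, so Stirling must be applied carefully and one must check that the bound $k \leq (\log R)/4$ still makes $\log\eta = \Omega(\log R - k)$ positive and large. Once that dominance is recorded cleanly, the claim follows by exponentiating.
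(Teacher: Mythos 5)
Your proof takes essentially the same route as the paper: recast $p_\eta(1-2\epsilon^+,\dots,1-2\epsilon^+)$ via Lemma~\ref{lem: expectation_multilinear}, discard all but the $z=1^n$ term using non-negativity of $p_\eta$, and estimate $\eta!\,(1-\epsilon^+)^n$ directly (the paper bounds $\eta!\geq 2^{\eta}$ and $(1-\epsilon^+)^{R/4^k}>(1/3)^{\sqrt{R}/(4^k\cdot 24\log R)}$ where you use Stirling and $e^{-2n\epsilon^+}$, but the calculations are interchangeable). The ``mild obstacle'' you flag is not actually one: the standing constraint $k\leq(\log R)/4$ gives $R/4^k\geq\sqrt{R}$, hence $\eta=\Omega(R^{1/4})$ and $\log\eta=\Omega(\log R)$ uniformly over the whole range of $k$, so no delicate case analysis is needed.
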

\begin{proof}
\begin{align}
    & p_{\eta}(1-2\epsilon^+,\dots, 1-2\epsilon^+) =\E_{w\sim\Pi(\epsilon^+,\dots, \epsilon^+)}[p_{\eta}(w)] \tag*{by Lemma~\ref{lem: expectation_multilinear}}\\
    &\geq \Pr_{\Pi(\epsilon^+,\dots, \epsilon^+)}\left[w=1^{R/4^k}\right]p_{\eta}(1^{R/4^k}) \tag*{since $p_\eta$ is non-negative on $\pmone^{R/4^k}$ by Lemma~\ref{lem: sherstovpk}}\\
    &= (1-\epsilon^+)^{R/4^k} \eta! \tag*{ by Equation~\eqref{eqn: p_eta_all_false}} \\
    &\geq \left(1-\frac{1}{24\sqrt{R}\log R}\right)^{R/4^k}\cdot2^{\frac{c}{2}\sqrt\frac{R}{4^k} - 1} \tag*{by Equation~\eqref{eqn: epsilon_upper_bound} and using $\eta = \frac{c}{2}\sqrt{\frac{R}{4^k}} - 1$} \\
    &> \left(\frac{1}{3}\right)^{\left(\sqrt{R}/(4^k24\log R)\right)} \cdot 2^{c\sqrt{R/(4^k \cdot 4)} - 1} \tag*{since $R$ is sufficiently large and $(1 - 1/n)^n < 1/e$ for $n>0$}\\
    &= \frac{2^{\sqrt{R}\left(\frac{c}{2^{k + 1}} - \frac{\log 3}{4^k 24\log R}\right)}}{2} > 2^{\sqrt{R}\left(\frac{c}{2^{k+2}}\right)-1} \geq 1,
\end{align}
since $R$ is sufficiently large and $k<(\log R)/4$.

\end{proof}

We next show that $\Gamma$ satisfies a particular decay property.
\begin{claim}\label{clm: gamma_strong_decay}
The function $\Gamma$ defined in Definition~\ref{def: gamma} satisfies
\begin{equation}\label{eqn: strong_decay}
    \sum_{x\notin (\pmone^{RN})^{\leq N}} |\Gamma(x)|\leq (2NR)^{-2(\Delta-\sqrt{R})}.
\end{equation}
\end{claim}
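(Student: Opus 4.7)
The plan is to reduce the decay bound for $\Gamma$ to the decay bound already established for $\theta\star(\phi\star\psi)$, by treating the extra factor $p_\eta\circ\alpha$ (and the normalizing denominator) as benign multiplicative constants.

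First, I would invoke associativity of dual block composition (Equation~\eqref{eqn: dbcassociative}) to rewrite $\theta\star(\phi\star\psi)=(\theta\star\phi)\star\psi$, and then apply Claim~\ref{clm: strongdualdecay} with $\Phi:=\theta\star\phi$ (which has $\|\Phi\|_1=1$ by the preservation of $\ell_1$-norm in Lemma~\ref{lem: dbproperties}) and with $\psi_T:=\psi$, our chosen dual witness for $\THR^k_N$. This yields
\[
\sum_{x\notin (\pmone^{RN})^{\leq N}}|(\theta\star\phi\star\psi)(x)|\leq (2NR)^{-2\Delta}.
\]

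Second, I would bound the pointwise ratio introduced by the extra factors in Definition~\ref{def: gamma}. From the definition of $\alpha$ in Equation~\eqref{eqn: alpha}, all four possible values of $\alpha$ lie in $[-1,1]$ (note $|a^-|\leq 1$ since $\epsilon^++\epsilon^-\leq 1$), so by Claim~\ref{claim: pkupperbound},
\[
\max_{x}\bigl|p_\eta(\alpha(x_1),\dots,\alpha(x_{R/4^k}))\bigr|\leq \eta!\binom{R/4^k+\eta}{\eta}.
\]
Combining this with the lower bound $p_\eta(1-2\epsilon^+,\dots,1-2\epsilon^+)>1$ from Claim~\ref{clm: p_eta_lowerbound}, Definition~\ref{def: gamma} gives
\[
|\Gamma(x)|\;\leq\;\eta!\binom{R/4^k+\eta}{\eta}\cdot|(\theta\star\phi\star\psi)(x)|
\]
for every $x$. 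Summing over $x\notin(\pmone^{RN})^{\leq N}$ and using the bound from the first step yields
\[
\sum_{x\notin (\pmone^{RN})^{\leq N}}|\Gamma(x)|\;\leq\;\eta!\binom{R/4^k+\eta}{\eta}\cdot (2NR)^{-2\Delta}.
\]

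Finally, to match the target bound $(2NR)^{-2(\Delta-\sqrt{R})}$, it suffices to verify that $\eta!\binom{R/4^k+\eta}{\eta}\leq (2NR)^{2\sqrt{R}}$. Since $\eta=\tfrac{c}{2}\sqrt{R/4^k}-1\leq \sqrt{R}$, the crude bounds $\eta!\leq \eta^\eta$ and $\binom{R/4^k+\eta}{\eta}\leq \bigl(e(R/4^k+\eta)/\eta\bigr)^\eta\leq (O(\sqrt{R}))^\eta$ give $\log_2\!\bigl(\eta!\binom{R/4^k+\eta}{\eta}\bigr)=O(\sqrt{R}\log R)$, while $\log_2\!\bigl((2NR)^{2\sqrt{R}}\bigr)=2\sqrt{R}\log_2(2NR)=\Omega(\sqrt{R}\log R)$ with a larger constant for $R$ sufficiently large (recall $N=\Theta(k^{k/2}R)$). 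The only place where real care is needed is this last arithmetic step; the main (minor) obstacle is simply to confirm the constants so that the $\sqrt{R}\log R$-sized correction is absorbed into the $2\sqrt{R}\log(2NR)$ slack built into the exponent of the target bound.
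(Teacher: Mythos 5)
Your proposal is correct and follows essentially the same route as the paper's own proof: reduce via associativity and Claim~\ref{clm: strongdualdecay} to a decay bound for $(\theta\star\phi)\star\psi$, bound the extra multiplicative factor $p_\eta\circ\alpha$ pointwise using Claim~\ref{claim: pkupperbound} and the denominator via Claim~\ref{clm: p_eta_lowerbound}, and absorb the resulting $\eta!\binom{R/4^k+\eta}{\eta}$ factor into the $2\sqrt{R}$ slack in the exponent. Your explicit remark that $|a^-|\leq 1$ requires $\epsilon^++\epsilon^-\leq 1$ (which holds here since both error terms are small) is a small point of care that the paper glosses over when asserting $\alpha(w)\in[-1,1]$; otherwise the two arguments coincide.
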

\begin{proof}
First note that by Definition~\ref{def: gamma} and Claim~\ref{clm: p_eta_lowerbound}, it suffices to show the same decay property for $(\theta \star(\phi\star\psi))\cdot (p_{\eta}\circ\alpha)(x)$, that is, $\sum_{x\notin (\pmone^{RN})^{\leq N}} |(\theta \star(\phi\star\psi))\cdot (p_{\eta}\circ\alpha)(x)|\leq (2NR)^{-2(\Delta-\sqrt{R})}$.

By associativity of dual block composition (Equation~\eqref{eqn: dbcassociative}), $\theta\star\phi\star\psi = (\theta\star\phi)\star\psi$. Recall that $\psi : \pmone^N \to \R$ is defined as $\psi(x)=\omega_T(|x|)/\binom{N}{|x|}$ if $|x| \leq T$, and $0$ otherwise, for $\omega_T$ satisfying the conditions in Claim~\ref{clm: psiconstruction}. Hence, $\psi$ satisfies the conditions of Claim~\ref{clm: final_psi_construction} and also those in Claim~\ref{clm: strongdualdecay}. Hence using Claim~\ref{clm: strongdualdecay} with $\Phi = \theta \star \phi$, we have
\begin{equation}\label{eqn: dbcdecay}
    \sum_{x\notin (\pmone^{RN})^{\leq N}}|((\theta\star\phi)\star\psi)(x)|\leq (2NR)^{-2\Delta}.
\end{equation}

For any $x\in\pmone^{RN}$, we write $x = (x_1,\dots,x_{R/4^k})$, where $x_i\in\pmone^{4^kN}$, for all $i$. 
\begin{align*}
    \sum_{x\notin (\pmone^{RN})^{\leq N}} |(\theta \star(\phi\star\psi))\cdot (p_{\eta}\circ\alpha)(x)| &=  \sum_{x\notin (\pmone^{RN})^{\leq N}}|((\theta\star\phi)\star\psi)(x)||p_{\eta}(\alpha(x_1),\dots,\alpha(x_{R/4^k}))| \\
    &\leq \max_{y\in[-1,1]^{R/4^k}}|p_{\eta}(y)|\sum_{x\notin (\pmone^{RN})^{\leq N}}|((\theta\star\phi)\star\psi)(x)| \tag*{since $\alpha(w) \in [-1, 1]$ for all $w \in \pmone^{4^kN}$ by Equation~\eqref{eqn: alpha}}\\
    & \leq (2NR)^{-2\Delta} \eta!\binom{R/4^k+\eta}{\eta} \tag*{by Claim~\ref{claim: pkupperbound} and Equation~\eqref{eqn: dbcdecay}}\\
    & \leq (2NR)^{-2\Delta} \left(c\sqrt{\frac{R}{4^{k+1}}}\right)!\left(\frac{2eR/4^k}{c\sqrt{R/4^{k+1}}}\right)^{c\sqrt{R/4^{k+1}}} \tag*{since $\eta = c\sqrt{R/4^{k+1}} - 1 < R/4^k$, and $\binom{a}{b} \leq (ae/b)^b$}\\
    & \leq (2NR)^{-2\Delta} \sqrt{R}^{\sqrt{R}}\left(\frac{8e}{c}\sqrt{\frac{R}{4^{k+1}}}\right)^{\sqrt{R/4^{k+1}}}\\
    & \leq(2NR)^{-2\Delta}(8eR/c)^{\sqrt{R}}\\
    & \leq (2NR)^{-2(\Delta-\sqrt{R})}. \tag*{since $R$ (and hence $N$) is sufficiently large}
\end{align*}
\end{proof}

\subsection{Final Dual Polynomial}\label{sec: finalproof}

We now prove Theorem~\ref{thm: mainsummary}.

\begin{proof}[Proof of Theorem~\ref{thm: mainsummary}]

We exhibit a function $\W : \pmone^{RN} \to \R$ satisfying

\begin{equation}\label{eqn: zero_mass}
\mathcal{W}(x)=0, \forall x\notin (\pmone^{RN})^{\leq N},
\end{equation}
\begin{equation}\label{eqn: l1_norm}
\norm{\mathcal{W}}_1=1
\end{equation}
\begin{equation}\label{eqn: final correlation}
\langle \mathcal{W}, (\OR_{R}\circ\THR_N^k) \rangle > 7/33,
\end{equation}
\begin{equation}\label{eqn: phd}
\phd(\mathcal{W}) = \Omega\left(\frac{1}{4^k k^2}\cdot\frac{1}{\log^{5/2}R}\cdot R^{\frac{3}{4}-\frac{1}{4k}}\right).
\end{equation}

 The theorem then follows by Lemma~\ref{lem: dualitypromise} and Lemma~\ref{lem: adegamp}.
Towards the construction of such a $\W$, first note that by Claim~\ref{clm: gamma_strong_decay} and Lemma~\ref{lem: RS} there exists a function $\nu:\pmone^{RN}\to\R$ that satisfies the following properties.
\begin{equation}\label{eqn: nu'_cancellation}
    |x|>N\Rightarrow \nu(x)=\Gamma(x),
\end{equation}
\begin{equation}\label{eqn: nu'_phd}
    \phd(\nu)\geq2(\Delta-\sqrt{R})-1,
\end{equation}
\begin{equation}\label{eqn: nu'_l1_norm}
    \norm{\nu}_1 \leq \frac{1}{10}.
\end{equation}

Define $\mathcal{W}: \pmone^{RN}\to\R$ by
\begin{equation}\label{eqn: wdef}
    \mathcal{W}(x) := \frac{\Gamma(x)-\nu(x)}{\norm{\Gamma-\nu}_1}.
\end{equation}

For any $x \notin (\pmone^{RN})^{\leq N}$, we have $\W(x) = \frac{\Gamma(x) - \nu(x)}{\|\Gamma - \nu\|_1} = 0$ by Equation~\eqref{eqn: nu'_cancellation}. This justifies Equation~\eqref{eqn: zero_mass}.

Equation~\eqref{eqn: l1_norm} immediately follows from Equation~\eqref{eqn: wdef}.

To justify Equation~\eqref{eqn: final correlation}, we have
\begin{align*}
\langle \mathcal{W}, \OR_{R}\circ \THR^k_N \rangle & = \frac{1}{\norm{\Gamma-\nu}_1}\left(\langle \Gamma,\OR_{R}\circ \THR^k_N \rangle -\langle \nu,\OR_{R}\circ \THR^k_N \rangle \right) \tag*{by Equation~\eqref{eqn: wdef}}\\
&\geq \frac{1}{\norm{\Gamma-\nu}_1} \left(1/3- \langle \nu,\OR_{R}\circ \THR^k_N \rangle \right) \tag*{by Claim~\ref{clm: gamma_corr}}\\
& \geq \frac{1}{\norm{\Gamma-\nu}_1} \{1/3- \norm{\nu}_1 \} \geq \frac{1}{\norm{\Gamma-\nu}_1} \frac{7}{30} \tag*{by Equation~\eqref{eqn: nu'_l1_norm}}\\
& \geq \frac{7}{33}. \tag*{since $\norm{\Gamma-\nu}_1\leq \norm{\Gamma}_1 + \norm{\nu}_1 \leq \frac{11}{10}$ by the triangle inequality}
\end{align*}

We have from Equation~\eqref{eqn: wdef} that
\begin{align}
    \phd(\mathcal{W}) & = \phd\left(\frac{\Gamma(x)-\nu(x)}{\norm{\Gamma-\nu}_1}\right) \\
    & = \phd(\Gamma(x)-\nu(x)) \\
    & \geq \min \{\phd(\Gamma), \phd(\nu) \}.\label{eqn: min_nu_gamma}
\end{align}

From Equation~\eqref{eqn: nu'_phd} we have
\begin{align}
    \phd(\nu)&\geq 2(\Delta-\sqrt{R})-1 \\
    &= 2\left(\frac{c_2 R}{4\ln^2 R}\sqrt{\frac{(2k)^k}{4^k kTN^{1/(2k)}\log N}} - \sqrt{R}\right)-1 \tag*{substituting the value of $\Delta$}\\
    &\geq 2\left( \frac{c_2}{4} \cdot \frac{1}{\log^2 R \sqrt{\log N}} \cdot \left(\frac{k}{2}\right)^{k/2} \frac{1}{k^{1/2}} \cdot \frac{R^{3/4}}{N^{1/(4k)}}-\sqrt{R}\right) -1 \tag*{using $T=\sqrt{R}$ and $\ln R < \log R$}\\
    &\geq 2\left(\frac{c_2}{4} \cdot  \frac{1}{\log^2 R \sqrt{k\log R}} \cdot \left(\frac{k}{2}\right)^{k/2} \frac{1}{k^{1/2}} \cdot \frac{R^{3/4}}{20^{1/(4k)}2^{1/8}k^{1/8}R^{1/(4k)}}-\sqrt{R}\right)-1 \tag*{substituting the value of $N$ and using $k\log R > \log N$ for sufficiently large $R$}\\
    &= 2\left(\frac{c_2}{2^{17/8}} \cdot \frac{1}{\log^2 R \cdot \sqrt{\log R}} \cdot\left(\frac{k}{2}\right)^{k/2} \frac{1}{k^{9/8}\cdot 20^{1/(4k)}}\cdot R^{3/4-1/(4k)} - \sqrt{R}\right) -1 \\
    & \geq 2\left( \frac{c_2}{2^{26/8}}\cdot \frac{1}{\log^{5/2}R}\cdot \frac{1}{20^{1/(4k)}} \cdot R^{3/4-1/(4k)} - \sqrt{R}\right)-1 \tag*{since $\left(\frac{k}{2}\right)^{k/2}\frac{1}{k^{9/8}}\geq \frac{1}{2^{9/8}}$ for all $k\geq 2$}\\
    & \geq 2\left(\frac{c_2}{320}\cdot \frac{1}{\log^{5/2}R}\cdot R^{3/4-1/(4k)} - \sqrt{R}\right) - 1\\
    & \geq \frac{c_2}{320}\cdot \frac{1}{\log^{5/2}R}\cdot R^{3/4-1/(4k)}-1 \tag*{since $\sqrt{R} \leq \frac{c_2}{640}\cdot \frac{1}{\log^{5/2}R}\cdot R^{3/4-1/(4k)}$ for $k \geq 2$, for sufficiently large $R$}\\
    & = \Omega\left(\frac{1}{\log^{5/2}R}\cdot R^{3/4-1/(4k)}\right).
\end{align}

Therefore by Claim~\ref{clm: finalphd} and Equation~\eqref{eqn: min_nu_gamma}, we have $\phd(\mathcal{W}) = \Omega\left(\frac{1}{4^k k^2}\cdot\frac{1}{\log^{5/2}R}\cdot R^{\frac{3}{4}-\frac{1}{4k}}\right)$, justifying Equation~\eqref{eqn: phd} and finishing the proof. 

\end{proof}

\section{An Upper Bound}
\label{sec:upperbound}
We extend ideas from Sherstov's upper bound on the approximate degree of surjectivity~\cite{algopoly} to prove
an approximate degree upper bound for $k$-distinctness, where $k$ is not necessarily a constant. 
We first note that it suffices to show an approximate degree upper bound on $\left(\OR_{R} \circ \THR^k_N\right)^{\leq N}$.

\begin{claim}\label{clm: upper_bound_connection}
For any positive integers $k, R, N$,
\begin{equation}
    \adeg\left(\DIST^k_{N,R}  \right) \leq \adeg\left(\left(\OR_{R} \circ \THR^k_N\right)^{\leq N}\right)\cdot O(\log R).
\end{equation}
\end{claim}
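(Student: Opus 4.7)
The plan is to reduce $\DIST^k_{N,R}$ to $(\OR_R \circ \THR^k_N)^{\leq N}$ via a standard ``indicator'' encoding. View an input $s=(s_1,\dots,s_N)\in[R]^N$ to $\DIST^k_{N,R}$ as being specified by the $N\lceil \log_2 R\rceil$ bits $x$ that encode $s_1,\dots,s_N$ in binary. For each pair $(r,i)\in[R]\times[N]$, define $y_{r,i}(x)\in\pmone$ to equal $-1$ iff $s_i=r$. Writing $r$ in binary and taking the conjunction of the $\lceil\log_2 R\rceil$ literals over the bits encoding $s_i$, we see that $y_{r,i}$ has a multilinear representation in $x$ of degree at most $\lceil\log_2 R\rceil=O(\log R)$. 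By construction, $\DIST^k_{N,R}(x)=-1$ iff some $r\in[R]$ has at least $k$ indices $i$ with $y_{r,i}(x)=-1$, i.e.,
\[
\DIST^k_{N,R}(x)=(\OR_R\circ\THR^k_N)\bigl(y_1(x),\dots,y_R(x)\bigr),\qquad y_r(x):=(y_{r,1}(x),\dots,y_{r,N}(x)).
\]

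The next step, and the reason the bound involves the \emph{promise} approximate degree $\adeg((\OR_R\circ\THR^k_N)^{\leq N})$ rather than the total one, is the crucial observation that for every input $x$, the concatenated vector $(y_1(x),\dots,y_R(x))\in\pmone^{RN}$ has Hamming weight \emph{exactly} $N$: each list entry $s_i$ contributes a single $-1$ (in block $s_i$, position $i$) and a $+1$ to all other blocks. Hence $(y_1(x),\dots,y_R(x))\in(\pmone^{RN})^{\leq N}$ regardless of $x$.

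Now let $d:=\adeg((\OR_R\circ\THR^k_N)^{\leq N})$ and let $p:\R^{RN}\to\R$ be a multilinear polynomial of degree $d$ that $1/3$-approximates $\OR_R\circ\THR^k_N$ on all inputs of Hamming weight at most $N$. Define
\[
q(x):=p\bigl(y_1(x),\dots,y_R(x)\bigr).
\]
Expanding $p=\sum_{|S|\leq d}c_S\prod_{(r,i)\in S}z_{r,i}$ and substituting shows that $q$ is a polynomial in $x$ whose degree is at most $d\cdot\lceil\log_2 R\rceil$, since each factor $y_{r,i}(x)$ has degree $O(\log R)$. By the promise observation above, for every $x$ the argument $(y_1(x),\dots,y_R(x))$ lies in the promise domain, so $|q(x)-\DIST^k_{N,R}(x)|\leq 1/3$. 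Therefore $\adeg(\DIST^k_{N,R})\leq d\cdot O(\log R)$, as required.

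There is no real technical obstacle: the only points that need to be checked are that the indicator polynomials $y_{r,i}$ have degree $O(\log R)$ and that the induced vector always satisfies the Hamming-weight-$N$ promise; both are immediate from the definitions. The argument is entirely analogous to the standard reduction between the ``one-hot'' and ``binary'' input conventions for $k$-distinctness mentioned in the footnote preceding Corollary~\ref{cor: ambrangereduction}, and mirrors the reverse direction established in \cite{BKT18Arxiv} (cited as Claim~\ref{clm: dist_connection}).
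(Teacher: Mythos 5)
Your proof takes essentially the same approach as the paper: define the degree-$O(\log R)$ indicator polynomials $y_{r,i}$, write $\DIST^k_{N,R}=(\OR_R\circ\THR^k_N)(y_{1,1},\dots,y_{R,N})$, and compose an approximating polynomial for the promise function with the indicators. Your writeup is in fact slightly more careful than the paper's terse proof, as you explicitly verify the key point (implicit in the paper) that the indicator vector always has Hamming weight exactly $N$, which is what justifies using the promise approximate degree $\adeg((\OR_R\circ\THR^k_N)^{\leq N})$ rather than the total one.
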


Claim \ref{clm: upper_bound_connection} 
has essentially appeared in multiple prior works, e.g., \cite[Equation 4]{BT17}, \cite[Section 3.4.1]{BKT18Arxiv}, \cite[Section 6]{algopoly}.
Claim \ref{clm: upper_bound_connection} is a converse
to Claim \ref{clm: dist_connection},
but is far more straightforward to prove
than Claim \ref{clm: dist_connection}. 
Claim \ref{clm: upper_bound_connection} follows from the fact that
$\DIST^k_{N,R}$ can be written as
an $\OR$ over all $R$ range items $j$
of the function that tests
whether $k$ or more copies of $i$
appear in the input list.
In more detail, for $i \in [N]$
and $j \in [R]$, let $y_{i,j}(x)=-1$ if the $i$th  item of the input list equals range item $j$. Note that $y_{ij}(x)$
is a function of degree at most $\lceil \log_2 R\rceil$ in $x$. Moreover, $$\DIST^k_{N,R} = (\OR_{R} \circ \THR^k_N)(y_{1, 1}(x), y_{2, 1}(x), \dots, y_{R, N}(x)).$$
Claim \ref{clm: upper_bound_connection} follows.

The following is our main theorem in this section.

\begin{theorem}\label{thm: upper}
For any positive integers $k, R, N$, with $k\leq N/2$,
\[
\adeg\left(\left(\OR_{R} \circ \THR^k_N\right)^{\leq N}\right) = O(N^{1/2}R^{1/4} \sqrt{k \log N}).
\]
\end{theorem}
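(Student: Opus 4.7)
The plan is to mimic Sherstov's upper bound for Surjectivity, i.e., for $\left(\AND_R\circ \OR_N\right)^{\leq N}$, from~\cite{algopoly}, substituting $\OR_R$ for the outer $\AND_R$ and $\THR^k_N$ for the inner $\OR_N$. The strategy has three ingredients: (i)~a low-degree approximator $p_R$ for $\OR_R$ with a well-controlled conjunction norm, (ii)~an \emph{exact} conjunction-norm representation of $\THR^k_N$ that is small whenever $k$ is not too large, and (iii)~the fact (Claim~\ref{claim: conjbddHW}) that any conjunction on $\pmone^{RN}$, restricted to inputs of Hamming weight at most $N$, admits a degree-$d$ approximator with error $\exp(-\Omega(d^2/N))$.

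First I would invoke the Chebyshev-based construction underlying Sherstov's Surjectivity upper bound to obtain a polynomial $p_R:\pmone^R\to\R$ with $\deg(p_R)=O(\sqrt{R})$, $\max_{z\in\pmone^R}|p_R(z)-\OR_R(z)|\leq 1/10$, and $\rho(p_R)=2^{O(\sqrt{R})}$. Next I would verify the conjunction-norm bound $\rho(\THR^k_N)=2^{O(k\log N)}$ via the elementary identity
\[
\THR^k_N(x)=-1+2\sum_{j=0}^{k-1} I[|x|=j],\qquad I[|x|=j]=\sum_{\substack{S\subseteq[N]\\|S|=j}}\Biggl(\prod_{i\in S}\frac{1-x_i}{2}\Biggr)\Biggl(\prod_{i\notin S}\frac{1+x_i}{2}\Biggr),
\]
each summand being a single conjunction with coefficient $+1$; since $k\leq N/2$, this gives $\rho(\THR^k_N)\leq 1+2\sum_{j<k}\binom{N}{j}=N^{O(k)}$.

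Then I would compose: the polynomial
\[
P(x^{(1)},\ldots,x^{(R)}):=p_R(\THR^k_N(x^{(1)}),\ldots,\THR^k_N(x^{(R)}))
\]
approximates $\OR_R\circ\THR^k_N$ to error at most $1/10$ on \emph{all} of $\pmone^{RN}$, and the sub-additivity and sub-multiplicativity of conjunction norm (Equations~\eqref{eqn: rho_add},\eqref{eqn: rho_mult},\eqref{eqn: rho_circ}) give
\[
\rho(P)\leq \rho(p_R)\cdot\rho(\THR^k_N)^{\deg(p_R)}\leq 2^{O(\sqrt{R}\,k\log N)}.
\]
Writing $P=\sum_\ell c_\ell C_\ell$ in a conjunction expansion with $\sum_\ell|c_\ell|\leq\rho(P)$, I replace each $C_\ell$ by the degree-$d_2$ polynomial $\widetilde{C_\ell}$ guaranteed by Claim~\ref{claim: conjbddHW}, which approximates $C_\ell$ to error at most $\exp(-c d_2^2/N)$ on the promise domain $(\pmone^{RN})^{\leq N}$. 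The triangle inequality then gives $|\widetilde{P}(x)-P(x)|\leq \rho(P)\exp(-c d_2^2/N)$ for $\widetilde{P}:=\sum_\ell c_\ell \widetilde{C_\ell}$; choosing $d_2=\Theta(N^{1/2}R^{1/4}\sqrt{k\log N})$ drives this bound below $1/10$, so the total error is at most $1/5<1/3$ and $\deg(\widetilde{P})\leq d_2$, establishing the claimed upper bound.

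The main obstacle is step~(i): producing an $\OR_R$ approximator that is simultaneously of low degree \emph{and} of low conjunction norm. Sherstov performs precisely this construction for $\AND_R$ in~\cite[proof of Theorem~1.3]{algopoly}, so the natural approach is to use either the same polynomial or its De~Morgan dual, with the details of transporting both the degree bound and the $\rho$-bound carried over carefully. After step~(i), the remaining steps are largely bookkeeping: (ii)~is elementary, and (iii)~is a direct application of the conjunction-approximation framework that drives Sherstov's Surjectivity upper bound.
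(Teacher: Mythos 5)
Your high-level strategy matches the paper's: approximate $\OR_R$ by a low-degree, low-conjunction-norm Chebyshev polynomial, represent $\THR^k_N$ exactly as a (conjunction-)combination of $\exact$-functions, compose, and then invoke Claim~\ref{claim: conjbddHW} to approximate each conjunction on the low-Hamming-weight promise domain. However, the displayed inequality
\[
\rho(P) \le \rho(p_R)\cdot \rho(\THR^k_N)^{\deg(p_R)}
\]
is not justified by Equations~\eqref{eqn: rho_add},~\eqref{eqn: rho_mult},~\eqref{eqn: rho_circ} when $p_R$ is viewed as an arbitrary multivariate polynomial on $\pmone^R$. Equation~\eqref{eqn: rho_circ} is stated only for a \emph{univariate} polynomial composed with a single function. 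If one instead tries to argue directly, one takes a conjunction representation $p_R=\sum_\ell c_\ell C_\ell$ achieving $\rho(p_R)$, substitutes $x_j\mapsto\THR^k_N(x^{(j)})$, and bounds each term via submultiplicativity; but this yields an exponent of the \emph{conjunction width} $|A_\ell|+|B_\ell|$, not of $\deg(p_R)$. These are not the same: the minimum-norm conjunction representation of a degree-$m$ polynomial may contain conjunctions involving all $R$ variables (e.g.\ the constant $1$ can be written as the sum of the $2^R$ full-width conjunctions). In the worst case this would give $\rho(P)\le \rho(p_R)\cdot\rho(\THR^k_N)^{R}$, which is far too weak to conclude.

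The paper closes this exact gap by \emph{not} treating the $\OR_R$-approximator as a black box: it writes the Chebyshev approximator in the form $q\bigl(\tfrac{1}{R}+\tfrac{1}{R}\sum_{j} \THR^k_N(x^{(j)})\bigr)$ where $q$ is a scaled univariate $T_m$, bounds $\rho$ of the \emph{inner} affine expression by $1+2k\binom{N}{k}$ using the $\exact$-decomposition (Equations~\eqref{eqn: exact},~\eqref{eqn: thr}), and then invokes Equation~\eqref{eqn: rho_circ} directly on the univariate outer polynomial, whose coefficient $\ell_1$-norm is at most $3^m$ by Claim~\ref{claim: coeffs}. This gives $\rho(P)\le (c_1 k\binom{N}{k})^m$ without any appeal to a general multivariate composition rule. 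Your proposal therefore needs one concrete repair at the composition step: rather than bounding $\rho(p_R)$ abstractly and then ``substituting'', you should exploit that the approximator is literally a univariate polynomial of degree $m=O(\sqrt R)$ in the single variable $\tfrac{1}{R}\sum_j\THR^k_N(x^{(j)})$. (Alternatively, one can pass through the Fourier $\ell_1$-norm of $p_R$ rather than its conjunction norm; every Fourier monomial $\chi_S$ with $|S|\le m$ expands into $2^{|S|}$ width-$\le m$ conjunctions, so one obtains a width-bounded conjunction representation of norm $2^m\|\hat{p}_R\|_1$, to which submultiplicativity does apply. The paper's route is cleaner.) Once this is done, the rest of your argument --- the $N^{O(k)}$ bound for $\rho(\THR^k_N)$, the application of Claim~\ref{claim: conjbddHW}, and the choice $d=\Theta(N^{1/2}R^{1/4}\sqrt{k\log N})$ --- matches the paper.
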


For any integers $N \geq i \geq 0$, define the function $\exact^i_N : \pmone^N \to \zone$ by 
\[
\exact^i_N(x) = \begin{cases}
1 & |x| = i\\
0 & \text{otherwise}.
\end{cases}
\]

Note that 
\begin{equation}\label{eqn: exact}
\exact^i_N(x) = \sum_{S \subseteq [N] : |S| = i} \prod_{u \in S}\left(\frac{1 - x_u}{2}\right) \prod_{v \notin S}\left(\frac{1 + x_v}{2}\right).
\end{equation}

Recall that for integers $N \geq k \geq 0$, the function $\THR^k_N : \pmone^N \to \pmone$ is defined by
\[
\THR^k_N(x) = \begin{cases}
-1 & |x| \geq k\\
1 & \text{otherwise}.
\end{cases}
\]

We have
\begin{equation}\label{eqn: thr}
\THR^k_N(x) = 2\left(\sum_{i = 0}^{k-1} \exact^i_N(x)\right) - 1
\end{equation}
since exactly one summand outputs 1 if the Hamming weight of $x$ is less than $k$, and all summands output 0 otherwise.

For integers $m, R \geq 0$, define a degree-$m$ polynomial $p : \pmone^R \to \R$ by
\begin{equation}\label{eqn: chebOR}
p(x) = \frac{2}{T_m\left(1 + \frac{1}{R}\right)} \cdot T_m\left(\frac{\sum_{i = 1}^R x_i}{R} + \frac{1}{R}\right) - 1.
\end{equation}
Note that when $|x| = 0$, we have $\sum_{i = 1}^R x_i = R$, and hence $p(x) = 1$. When $|x| > 0$, we have $\frac{\sum_{i = 1}^R x_i}{R} + \frac{1}{R} \in [-1, 1]$, and by Equation~\eqref{eqn: abs1} this implies $T_m\left(\frac{\sum_{i = 1}^R x_i}{R} + \frac{1}{R}\right)  \in [-1, 1]$, and thus $p(x) \in \left[-1 - \frac{2}{1 + (m^2/R)}, -1 + \frac{2}{1 + (m^2/R)}\right]$ by Equation~\eqref{eqn: largeoutside1}. The next claim immediately follows.

\begin{claim}\label{claim: chebOR}
The degree-$m$ polynomial $p$ defined in Equation~\eqref{eqn: chebOR} uniformly approximates $\OR_R$ to error $\frac{2}{1 + (m^2/R)}$.
\end{claim}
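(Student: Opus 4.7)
}
The plan is to verify the approximation error at the unique point where $\OR_R$ outputs $1$ and then, separately, on all inputs where $\OR_R$ outputs $-1$; the two cases are controlled by the two properties of Chebyshev polynomials recorded in Fact~\ref{fact: chebbds}. First I would confirm that $p$ is a genuine polynomial in $x$ of degree at most $m$: the argument $\frac{1}{R}\sum_i x_i + \frac{1}{R}$ is an affine function of $x$, composing it with $T_m$ yields a polynomial of degree $m$, and the final affine rescaling by the constant $2/T_m(1+1/R)$ and the shift by $-1$ preserve the degree.

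Next I would handle the ``true'' input $x=1^R$ of $\OR_R$. Plugging in $\sum_i x_i = R$ gives $T_m(1+1/R)$ in the numerator, which cancels the denominator and yields $p(1^R) = 2-1 = 1 = \OR_R(1^R)$, so the error is zero at this point. For every other input, $|x|\geq 1$, so $\sum_i x_i = R - 2|x|$ satisfies $-R \leq \sum_i x_i \leq R-2$, and hence the argument $y := \frac{\sum_i x_i}{R} + \frac{1}{R}$ lies in the closed interval $[-1+\tfrac{1}{R},\, 1-\tfrac{1}{R}] \subseteq [-1,1]$.

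Now I would apply Equation~\eqref{eqn: abs1} to conclude $|T_m(y)| \leq 1$, and Equation~\eqref{eqn: largeoutside1} with $\epsilon = 1/R$ to conclude $T_m(1+1/R) \geq 1 + m^2/R > 0$. Combining these two bounds gives
\[
\left| p(x) + 1 \right| \;=\; \frac{2\,|T_m(y)|}{T_m(1+1/R)} \;\leq\; \frac{2}{1 + m^2/R}.
\]
Since $\OR_R(x) = -1$ at every such $x$, this is exactly the desired bound on $|p(x) - \OR_R(x)|$. Taking the maximum over both cases yields uniform error at most $2/(1+m^2/R)$, completing the proof.

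There is no real obstacle here: the entire argument is an immediate application of the two extremal properties of Chebyshev polynomials (small on $[-1,1]$, exponentially large just outside). The only small point to be careful about is checking that for \emph{every} $x \neq 1^R$ the argument $y$ genuinely lies in $[-1,1]$ (and in fact strictly inside, via the $1/R$ shift), so that the bound $|T_m(y)|\leq 1$ applies, and that the denominator $T_m(1+1/R)$ is strictly positive so the division is legitimate.
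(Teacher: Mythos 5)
Your proof is correct and follows essentially the same route as the paper: check that $p(1^R)=1$ exactly, then for every $x$ with $|x|\geq 1$ use Equation~\eqref{eqn: abs1} to bound $|T_m(y)|\leq 1$ and Equation~\eqref{eqn: largeoutside1} to bound the denominator below by $1+m^2/R$. The only additions you make beyond the paper's argument are the explicit degree check and pinning down the exact interval $[-1+1/R,\,1-1/R]$ for $y$, both of which are fine but not strictly necessary.
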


We are now ready to prove our final upper bound.

\begin{proof}[Proof of Theorem~\ref{thm: upper}]
Let $m \geq 1$ be an integer parameter to be fixed later and let $T_m$ be the degree-$m$ Chebyshev polynomial.  Thus by Claim~\ref{claim: chebOR}, the function $\OR_R \circ \THR^k_N$ is approximated pointwise to error $\frac{2}{1 + (m^2/R)}$ by the degree-$m$ polynomial $p : \pmone^{RN} \to \R$ defined by

\begin{align*}
p(x) & = \frac{2}{T_m\left(1 + \frac{1}{R}\right)} \cdot T_m\left(\frac{1}{R} + \frac{1}{R}\sum_{j = 1}^R \THR^k_N (x_{j, 1}, \dots, x_{j, N})\right) - 1\\
& = \frac{2}{T_m\left(1 + \frac{1}{R}\right)} \cdot T_m\left(\frac{1}{R} -1 + \frac{2}{R} \sum_{j = 1}^R \sum_{i = 0}^{k - 1}\exact^i_N(x_{j, 1}, \dots, x_{j, N})\right) - 1 \tag*{by Equation~\eqref{eqn: thr}}\\
& = \frac{2}{T_m\left(1 + \frac{1}{R}\right)} \cdot T_m\left(\frac{1}{R} -1 + \frac{2}{R} \sum_{j = 1}^R \sum_{i = 0}^{k - 1}\left(\sum_{S \subseteq [N] : |S| = i} \prod_{u \in S}\left(\frac{1 - x_{j, u}}{2}\right) \prod_{v \notin S}\left(\frac{1 + x_{j, v}}{2}\right)\right) \right). \tag*{by Equation~\eqref{eqn: exact}}
\end{align*}

For simplicity of notation, define 
\begin{equation}
    C_{j,S}:= \prod_{u \in S}\left(\frac{1 - x_{j, u}}{2}\right) \prod_{v \notin S}\left(\frac{1 + x_{j, v}}{2}\right).
\end{equation}

We next show an upper bound on $\rho(p)$ (recall
that $\rho(p)$ is the conjunction norm of $p$ defined in Definition \ref{def: norm_conjunction}).
\begin{align}
    \rho(p) & = \abs{\frac{2}{T_m(1+\frac{1}{R})}} \cdot \rho\left(T_m\left(\frac{1}{R} -1 + \frac{2}{R} \sum_{j = 1}^R \sum_{i = 0}^{k - 1}\left(\sum_{S \subseteq [N] : |S| = i} C_{j, S}\right)\right)\right) \tag*{by Equation~\eqref{eqn: rho_constant_mult}}\\
    &\leq 2\cdot 3^m\cdot \rho\left( \frac{1}{R} -1 + \frac{2}{R} \sum_{j = 1}^R \sum_{i = 0}^{k - 1}\left(\sum_{S \subseteq [N] : |S| = i} C_{j, S}\right)\right)^m \tag*{by Equation~\eqref{eqn: rho_circ}, Equation~\eqref{eqn: chebcoeffs}, and $T_m(1+\frac{1}{R})>1$} \\
    &\leq 2\cdot 3^m \cdot \left( \abs{\frac{1}{R}-1} + \rho\left(\frac{2}{R} \sum_{j = 1}^R \sum_{i = 0}^{k - 1}\left(\sum_{S \subseteq [N] : |S| = i} C_{j, S}\right) \right) \right)^m
    \tag*{by Equation~\eqref{eqn: rho_add} } \\
    &\leq 2\cdot 3^m \cdot \left( 1 +\frac{2}{R} \sum_{j = 1}^R\rho\left( \sum_{i = 0}^{k - 1}\left(\sum_{S \subseteq [N] : |S| = i} C_{j, S}\right) \right)  \right)^m \tag*{by Equation~\eqref{eqn: rho_constant_mult} and Equation~\eqref{eqn: rho_add}} \\
    &\leq 2\cdot 3^m\cdot\left(1+2\cdot k\binom{N}{k} \right)^m \tag*{by Equation~\eqref{eqn: rho_add} and $\rho(C_{j,S})$ is at most $1$}\\
    &\leq \left(c_1\cdot k \binom{N}{k} \right)^m, \label{eqn: rhop}
\end{align}

for some positive constant $c_1$. By Claim~\ref{claim: conjbddHW}, we have the following. For each conjunction $f$ there is a degree-$d$ polynomial $p_f$ such that $|p_f(x) - f(x)| \leq 2^{-c\cdot d^2/N}$ for all $x \in \left(\pmone^{RN}\right)^{\leq N}$ for some positive constant $c$. By construction, $\deg(p) = m$ and $\abs{p(x) - \left(\OR_{R} \circ \THR^k_N\right)(x)} \leq 2/\left(1 + \frac{m^2}{R}\right)$ for all $x \in \pmone^{RN}$. By the triangle inequality, we obtain that for any integers $m, d \geq 0$,
\begin{align}
    E\left(\left(\OR_{R} \circ \THR^k_N\right)^{\leq N}, d\right) &\leq \frac{2}{1 + \frac{m^2}{R}} + \rho(p)\cdot 2^{-c\cdot d^2/N} \nonumber\\
    &\leq \frac{2}{1 + \frac{m^2}{R}} + 2^{ m \log (c_1 k\binom{N}{k})} \cdot 2^{-c\cdot d^2/N} \tag*{by Equation~\eqref{eqn: rhop}}\\
    &\leq \frac{2}{7} + 2^{ \sqrt{6R}\log(c_1 k\binom{N}{k})-c\cdot d^2/N } \tag*{setting $m=\sqrt{6R}$} \\
    &\leq \frac{2}{7} + 2^{ \sqrt{6R}(\log(c_1)+\log( k)+\log(N^k))-c\cdot d^2/N } \tag*{since $\binom{N}{k}\leq N^k$}\\
    & \leq \frac{2}{7} + 2^{3\sqrt{6R}k\log(N)-cd^2/N} \tag*{for sufficiently large $N$}\\
    & \leq \frac{1}{3}.\tag*{for $d=\frac{4}{c}\cdot R^{1/4}\sqrt{N k \log N}$ }
\end{align}

Hence there is a polynomial of degree $\frac{4}{c}\cdot R^{1/4}\sqrt{N k \log N}$ that approximates $\left(\OR_{R} \circ \THR^k_N\right)^{\leq N}$ within error $1/3$, and the theorem follows.

\end{proof}

Combining Claim~\ref{clm: upper_bound_connection} and Theorem~\ref{thm: upper} immediately yields an upper bound on the approximate degree of $k$-distinctness.

\begin{corollary}\label{cor: upper}
For any positive integers $R, N$ and $k \leq N/2$,
\[
\adeg\left(\DIST^k_{N,R}  \right) = O(\sqrt{k} N^{1/2}R^{1/4} \log R\sqrt{\log N}).
\]
\end{corollary}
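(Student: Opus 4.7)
The plan is simply to chain together the two ingredients already established above, without any new argument. First I would invoke Claim~\ref{clm: upper_bound_connection} to reduce the approximate degree of $\DIST^k_{N,R}$ to that of the partial function $(\OR_R \circ \THR^k_N)^{\leq N}$, incurring at most a $O(\log R)$ multiplicative overhead; as noted in the paper, this reduction is immediate from the observation that each indicator $y_{i,j}(x)$ signalling whether list item $i$ equals range element $j$ is computed by a polynomial of degree $O(\log R)$ in the raw $N \lceil \log_2 R \rceil$-bit input, and the promise that each input item maps to exactly one range element forces the induced input to $\OR_R \circ \THR^k_N$ to have Hamming weight exactly $N$ (in particular at most $N$).

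Next I would apply Theorem~\ref{thm: upper}, which under the assumption $k \leq N/2$ yields the bound $\adeg((\OR_R \circ \THR^k_N)^{\leq N}) = O(N^{1/2} R^{1/4} \sqrt{k \log N})$. Multiplying these two bounds gives
\[
\adeg(\DIST^k_{N,R}) = O\!\left(\log R \cdot N^{1/2} R^{1/4} \sqrt{k \log N}\right) = O\!\left(\sqrt{k}\, N^{1/2} R^{1/4} \log R \sqrt{\log N}\right),
\]
which is precisely the stated bound. There is no genuine obstacle here: Claim~\ref{clm: upper_bound_connection} imposes no side conditions on the parameters, and the hypothesis $k \leq N/2$ is inherited directly from Theorem~\ref{thm: upper}. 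The only ``step'' is to check that the constants absorbed into the two big-$O$'s can be combined into a single absolute constant, which is routine.
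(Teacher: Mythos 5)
Your proof is correct and is exactly the paper's proof: the paper obtains Corollary~\ref{cor: upper} by directly combining Claim~\ref{clm: upper_bound_connection} with Theorem~\ref{thm: upper}, just as you do. The additional explanation you give of why the Hamming weight of the induced input is at most $N$ is a fair gloss of the argument sketched after Claim~\ref{clm: upper_bound_connection}.
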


Recall (cf.~Corollary \ref{cor: ambrangereduction}) that
 Ambainis \cite{ambainissmallrange} 
 showed 
that, for all functions that are symmetric both with respect to range elements and with respect to domain elements,
the approximate degree is the same for all range sizes greater than or equal to $N$.  This implies that the upper bound in Corollary
\ref{cor: upper} can
be refined to 
\[
\adeg\left(\DIST^k_{N,R}  \right) = O(\sqrt{k} N^{1/2}\min\{N, R\}^{1/4} \log R\sqrt{\log N}).
\]

\section*{Acknowledgements}
JT and SZ are supported by the National Science Foundation CAREER award (grant CCF-1845125). JT is grateful to Robin Kothari for extremely useful suggestions and discussions surrounding Theorem \ref{thm:informalupper}, and to Mark Bun for essential discussions regarding Theorem~\ref{thm: mainsummary}. SZ would like to thank Yao Ji for several helpful conversations.

\bibliography{bibo}

\appendix

\section{A Dual Polynomial for Threshold Function}

In this section, we prove Claim~\ref{clm: psiconstruction}.
We require the following well-known combinatorial identity. For a proof, see, for example,~\cite{OS10}.
\begin{fact}\label{fact: combid}
Let $N \in \mathbb{N}$ and let $p : \R \to \R$ be any polynomial of degree less than $N$. Then,
\[
\sum_{i = 0}^N (-1)^i\binom{N}{i}p(i) = 0.
\]
\end{fact}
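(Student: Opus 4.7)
The plan is to prove the identity using the finite difference calculus. Define the forward difference operator $\Delta$ on real-valued functions by $\Delta f(x) := f(x+1) - f(x)$, and write it as $\Delta = E - I$ where $E$ is the unit shift operator $E f(x) := f(x+1)$ and $I$ is the identity. Since $E$ commutes with itself, the binomial theorem for commuting operators gives
\[
\Delta^N = (E - I)^N = \sum_{i=0}^N (-1)^{N-i} \binom{N}{i} E^i.
\]
Applying both sides to $p$ and evaluating at $x = 0$ yields
\[
\Delta^N p(0) \;=\; \sum_{i=0}^N (-1)^{N-i} \binom{N}{i} p(i) \;=\; (-1)^N \sum_{i=0}^N (-1)^i \binom{N}{i} p(i),
\]
so the target sum equals $(-1)^N \Delta^N p(0)$, and it suffices to show $\Delta^N p(0) = 0$.

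The key remaining step is the elementary observation that $\Delta$ strictly lowers the degree of any nonconstant polynomial: if $q(x) = a_d x^d + (\text{lower-order terms})$ with $d \geq 1$, then expanding $(x+1)^d - x^d$ by the binomial theorem gives $\Delta q(x) = a_d \cdot d \cdot x^{d-1} + (\text{lower-order terms})$, so $\deg \Delta q = d - 1$; and $\Delta q \equiv 0$ whenever $q$ is constant. A straightforward induction on $N$ then shows that $\Delta^N q$ is identically zero whenever $\deg q < N$. Applied to our polynomial $p$ of degree less than $N$, this gives $\Delta^N p \equiv 0$, hence $\Delta^N p(0) = 0$, which completes the proof.

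I do not expect any real obstacle. The only ingredients are the operator-binomial expansion of $(E-I)^N$ (which is immediate from commutativity) and the degree-reduction property of $\Delta$ (which is a one-line calculation on the leading coefficient). An alternative route, which I would mention only if the operator-theoretic phrasing felt heavy, is to use linearity to reduce to the basis $p(x) = \binom{x}{j}$ with $j < N$ and then invoke the trinomial revision identity $\binom{N}{i}\binom{i}{j} = \binom{N}{j}\binom{N-j}{i-j}$ together with $(1-1)^{N-j} = 0$; but the finite-difference proof above is the cleanest.
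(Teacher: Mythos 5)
Your proof is correct and complete. Note, however, that the paper does not actually supply a proof of Fact~\ref{fact: combid}: it labels the identity as well known and simply cites an external reference for it. So there is no paper-internal proof to compare against. Your finite-difference argument --- writing $(E - I)^N$ out by the operator binomial theorem to see that the target sum equals $(-1)^N \Delta^N p(0)$, then observing that $\Delta$ strictly lowers polynomial degree so that $\Delta^N p \equiv 0$ when $\deg p < N$ --- is the standard textbook route and is airtight. The alternative you sketch (reducing by linearity to the basis of binomial-coefficient polynomials $\binom{x}{j}$ with $j < N$ and invoking $(1-1)^{N-j} = 0$) is equally valid and is in fact closer to how this identity is often presented in the approximate-degree literature, but the finite-difference version you chose is, as you say, cleaner and requires fewer ad hoc identities.
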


\begin{proof}[Proof of Claim~\ref{clm: psiconstruction}]

Let $E_+ := \{t \colon \omega(t) > 0, t \ge k\}$, and $E_{-} := \{t\colon \omega(t) < 0, t < k\}$. 
By normalizing, it suffices to construct a function $\omega : [T] \cup \bra{0} \to \R$ 
such that

\begin{align} & \sum_{t \in E_+} |\omega(t)| \le \frac{1}{48\cdot4^k \sqrt{N} \log{N}} \cdot \|\omega\|_1  \label{eqn:thr-sym-pos-corr-ref} \\
& \sum_{t \in E_-} |\omega(t)| \le \left(\frac{1}{2} - \frac{2}{4^k}\right) \cdot \|\omega\|_1  \label{eqn:thr-sym-neg-corr-ref} \\
& \text{For all univariate polynomials } q \colon \R \to \R\text{, } \nonumber \\
&\deg (q) < c_1\sqrt{4^{-k} k^{-1}  T  N^{-1/(2k)} \log^{-1}{N}} \implies \sum_{t=0}^T \omega(t)  q(t) = 0 \label{eqn:thr-sym-phd-ref} \\
&  |\omega(t)| \le \frac{(2k)^k \exp\left(-c_2 t / \sqrt{4^kk  T  N^{1/(2k)} \log{N}}\right)\|\omega\|_1}{t^2} \qquad \forall t = 1, 2, \dots, T.\label{eqn:thr-sym-decay-ref}
\end{align}

Let $\ell = 100k\lceil N^{1/(2k)}4^k \log{N}\rceil$, and let $m = \lfloor \sqrt{T/\ell} \rfloor$. 
Define the set
\[S = \{1, 2, \dots, k\} \cup \{\ell i^2 : 0 \le i \le m\}.\]
Note that $|S|=k+m+1\geq m = (1/10) \sqrt{4^{-k} k^{-1}  T  N^{-1/(2k)} \log^{-1}{N}}$.  Define the polynomial $\omega : [T] \cup \bra{0} \to \R$ by 
\[
\omega(t) =\frac{(-1)^{T-t-m+1}}{T!} \binom{T}{t} \prod_{r \in ([T]\cup\{0\}) \setminus S} (t - r).
\]

The signs are chosen so that $\omega(k) < 0$, because in the expression
\begin{align*}
    \omega(k) &= \frac{(-1)^{T-k-m+1}}{T!}\binom{T}{k} \prod_{r \in ([T]\cup\{0\}) \setminus S} (k - r),
\end{align*}
the number of terms in the product is $|([T]\cup\{0\}) \setminus S|=T-k-m$, and each term in the product is negative for $k = 0$.

Let $q$ be any univariate polynomial of degree less than $|S|-1$. Then,
\begin{align*}
    \sum_{t=1}^T \omega(t)q(t) &= \frac{(-1)^{T-m+1}}{T!}\sum_{t=1}^T (-1)^t\binom{T}{t}\prod_{r\in([T]\cup\{0\})\setminus S}(t-r)\cdot q(t) \tag*{since $(-1)^{-t}=(-1)^t$ for all integer $t$} \\
    &= \frac{(-1)^{T-m+1}}{T!}\sum_{t=1}^T (-1)^t\binom{T}{t} p(t) \tag*{where $p(t):=\prod_{r\in([T]\cup\{0\})\setminus S}(t-r)\cdot q(t)$}\\
    &= 0 \tag*{by Fact~\ref{fact: combid}}
\end{align*}
where we could use Fact~\ref{fact: combid} since $\deg(p)\leq\deg(q)+\deg(\prod_{r\in([T]\cup\{0\})\setminus S}(t-r))\leq |[T]\cup\{0\}|-|S|+\deg(q) < T+1-|S|+|S|-1=T$.
  
Since $|S|-1 = k + m > m = (1/10) \sqrt{4^{-k} k^{-1}  T  N^{-1/(2k)} \log^{-1}{N}}$, we conclude that $\omega$ satisfies Equation~\eqref{eqn:thr-sym-phd-ref} for $c_1 = 1/10$.
We now show that Equation~\eqref{eqn:thr-sym-decay-ref} holds. For $t = 1, \dots, k$, we have

\[
\frac{(2k)^k \exp\left(-c_2 t / \sqrt{4^k k  T N^{1/(2k)}\log{N}}\right)}{t^2} \ge \frac{(2k)^k \exp\left(-c_2 \sqrt{k}\right)}{k^2} \ge 1
\]
as long as $c_2 \le 1/2$ and $k \ge 2$. Since $|\omega(t)| \le \|\omega\|_1$, the bound holds for $t = 1, \dots, k$.

Next, note that $\omega(t) = 0$ for $t \notin S$. For $t\in S$, we have
\begin{align*}
    |\omega(t)| &= \frac{1}{T!}\cdot \frac{T!}{t!(T-t)!}\cdot\frac{ \prod_{r \in ([T]\cup\{0\}) \setminus S} |t - r|\cdot  \prod_{r \in S\setminus\{t\}} |t - r|}{\prod_{r \in S\setminus\{t\}} |t - r|}\\
    &= \frac{1}{t!(T-t)!}\cdot \frac{\prod_{r \in ([T]\cup\{0\}) \setminus \{t\}} |t - r|}{\prod_{r \in S\setminus\{t\}} |t - r|}\\
    &= \frac{1}{t!(T-t)!}\cdot \frac{t!(T-t)!}{\prod_{r \in S\setminus\{t\}} |t - r|} =\prod_{r \in S\setminus\{t\}} \frac{1}{ |t - r|}.
\end{align*}
Thus,
\[
|\omega(t)| = 
\begin{cases}
 \prod\limits_{r \in S\setminus \{t\}} \frac{1}{|t - r|} & \text{ for } t \in S, \\
0 & \text{ otherwise.}
\end{cases}
\]

For $t \in \{0, 1, \dots, k\}$, we observe that
\begin{equation} \label{eqn:head-bound}
\frac{|\omega(t)|}{|\omega(k)|} = \frac{k! \cdot \prod_{i = 1}^m (\ell i^2 - k)}{t! \cdot (k - t)! \cdot \prod_{i = 1}^m (\ell i^2 - t)} \le \binom{k}{t}.
\end{equation}
Meanwhile, for $t = \ell j^2$ with $j \ge 1$, we get
\begin{align*}
\frac{|\omega(t)|}{|\omega(k)|} &= \frac{k! \cdot \prod_{i = 1}^m (\ell i^2 - k)}{\prod_{i = 1}^k(\ell j^2 - i) \cdot \prod_{i \in \bra{[m]\cup\{0\}} \setminus \{j\}} |\ell i^2 - \ell j^2|} \\
&\le \frac{k! \cdot \prod_{i = 1}^m \ell i^2}{(\ell j^2 - k)^{k} \cdot \prod_{i \in \bra{[m]\cup\{0\}} \setminus \{j\}} \ell(i+j)|i-j|} \\
&= \frac{2\cdot k!}{(\ell j^2 - k)^{k}} \cdot \frac{(m!)^2}{(m+j)!(m-j)!}.
\end{align*}
The first factor is bounded above by
\[
\frac{2\cdot k!}{(\ell - k)^k j^{2k}}.
\]
Since $\ell \ge 2k$ by our choice of $\ell$, and $k \ge 2$, this expression is at most
\[\frac{k^k}{(\ell/2)^kj^4} = \frac{(2k)^k}{\ell^k \cdot j^4}.\]

We control the second factor by
\begin{align*}
\frac{(m!)^2}{(m+j)!(m-j)!} &= \frac{m}{m+j} \cdot \frac{m-1}{m+j-1} \cdot \ldots \cdot \frac{m-j+1}{m+1} \\
&\le \left(\frac{m}{m+j} \right)^j \\
&\le  \left(1 - \frac{j}{2m} \right)^j \\
&\le e^{-j^2/2m},
\end{align*}
where the last inequality uses the fact that $1 - x \le e^{-x}$ for all $x$. 
Hence,
\begin{equation} \label{eqn:tail-bound}
\frac{|\omega(\ell j^2)|}{|\omega(k)|} \le \frac{(2k)^k}{\ell^k \cdot j^4} \cdot e^{-j^2/2m}.
\end{equation}
This immediately yields
\[\frac{|\omega(\ell j^2)|}{\|\omega\|_1} \le \frac{|\omega(\ell j^2)|}{|\omega(k)|} \le \frac{(2k)^k}{(\ell j^2)^2} \cdot e^{-\ell j^2/(2\ell m)},\]

If we choose $c_2 < 1/20$, we have $\frac{1}{2\ell m} \geq \frac{1}{2\sqrt{T\ell}}> \frac{c_2}{\sqrt{4^k kTN^{1/(2k)}\log{N}}}$ since $\ell = 100k\lceil N^{1/(2k)}4^k\log{N}\rceil$.
This establishes Equation~\eqref{eqn:thr-sym-decay-ref} for all $t = \ell j^2 > k$.  
Moreover, by Equation~\eqref{eqn:tail-bound},
\begin{align} 
\sum_{t > k} |\omega(t)| & \le |\omega(k)| \cdot \sum_{j = 1}^m \frac{(2k)^k}{\ell^k \cdot j^4} \cdot e^{-j^2/2m}\nonumber\\
&\le \left(\frac{2k}{\ell}\right)^k \cdot |\omega(k)| \cdot  \sum_{j=1}^m \frac{1}{j^4} \nonumber\\
& < \frac{(2k)^k}{(100k N^{1/(2k)}4^k \log N)^k} \cdot \frac{\pi^4}{50}\cdot |\omega(k)| \tag*{since $\sum_{j=1}^\infty 1/j^4 < \pi^4/50$} \\
&= \frac{1}{50^k \sqrt{N} (4^k)^k \log^k N}\cdot \frac{1}{50/\pi^4}\cdot |\omega(k)|\nonumber\\
&\leq  \frac{1}{50^2\sqrt{N} 4^k \log N} \cdot \frac{1}{50/\pi^4}\cdot |\omega(k)| \tag*{for all $k\geq 2$}\\
&\le \frac{|\omega(k)|}{48\cdot 4^k\cdot \sqrt{N} \log{N}}. \label{eqn:tail-sum}
\end{align}
Hence, since $\omega(k) < 0$, 
\[
\sum_{t \in E_+} |\omega(t)| \le \sum_{t > k} |\omega(t)| \le \frac{|\omega(k)|}{48\cdot 4^k\cdot \sqrt{N} \log{N}} \le \frac{\|\omega\|_1}{48\cdot 4^k\cdot \sqrt{N} \log{N}},
\]
which gives Equation~\eqref{eqn:thr-sym-pos-corr-ref}.

Finally, to establish Equation~\eqref{eqn:thr-sym-neg-corr-ref}, we combine Equation~\eqref{eqn:head-bound} and Equation~\eqref{eqn:tail-sum} to obtain
\begin{equation} \label{eqn:mass-on-k}
\frac{\|\omega\|_1}{|\omega(k)|} \le \sum_{t = 0}^k \binom{k}{t} + \frac{1}{48\cdot 4^k\cdot \sqrt{N} \log{N}} < 2^k + 1 < \frac{1}{2} \cdot 4^k.
\end{equation}
We calculate 
\begin{align*}
\frac{\|\omega\|_1}{2} - \sum_{t \in E_-} |\omega(t)| &= \frac{1}{2}\left(\sum_{t:\omega(t)<0}|\omega(t)|+\sum_{t:\omega(t)>0}|\omega(t)| \right)-\sum_{t\in E_-}(-\omega(t)) \\
&=\sum_{t : \omega(t) < 0} (-\omega(t)) - \sum_{t \in E_-} (-\omega(t))  \tag*{since $\langle \omega, \mathbf{1} \rangle = 0$, so $\sum_{t:\omega(t)<0}|\omega(t)|=\sum_{t:\omega(t)>0}|\omega(t)|$}\\
&= \sum_{t \colon \omega(t) < 0, t \geq k} (-\omega(t)) \tag*{since $E_{-}= \{t\colon \omega(t) < 0, t < k\}$}\\
&\ge -\omega(k).
\end{align*}
Rearranging and applying the bound in Equation~\eqref{eqn:mass-on-k},
\[\sum_{t \in E_{-}} |\omega(t)| \le \left(\frac{1}{2} + \frac{\omega(k)}{\|\omega\|_1}\right) \cdot \|\omega\|_1 \le \left(\frac{1}{2} - 2 \cdot 4^{-k}\right) \cdot \|\omega\|_1.\]
\end{proof}

\end{document}